\newtheorem{assumption}{Assumption}
\newcommand{\p}{{\rm I}\kern-0.18em{\rm P}}
\newcommand{\1}{{\rm 1}\kern-0.24em{\rm I}}
\newcommand{\E}{{\rm I}\kern-0.18em{\rm E}}
\newcommand{\uderbar}[1]{\underset{\raise0.3em\hbox{$\smash{\scriptscriptstyle-}$}}{#1}}
\def\boxit#1{\vbox{\hrule\hbox{\vrule\kern6pt\vbox{\kern6pt#1\kern6pt}\kern6pt\vrule}\hrule}}
\DeclareMathOperator*{\argmin}{arg\,min}
\begin{document}

\title{Neyman-Pearson classification: parametrics and sample size requirement}
\author{\name Xin Tong \email xint@marshall.usc.edu \\
       \addr Department of Data Sciences and Operations \\
       Marshall Business School\\
       University of Southern California
 \AND       
 \name Lucy Xia  \email lucyxia@ust.hk  \\
        \addr  Department of ISOM\\
School of Business and Management\\
Hong Kong University of Science and Technology
\AND 
\name Jiacheng Wang \email jiachengwang@galton.uchicago.edu\\
       \addr Department of Statistics\\
University of Chicago
\AND 
  \name Yang Feng  \email yang.feng@nyu.edu \\
       \addr Department of Biostatistics\\
       School of Global Public Health\\
       New York University
}

\editor{Xiaotong Shen}

\maketitle

\begin{abstract}
The Neyman-Pearson (NP) paradigm in binary classification seeks classifiers that achieve a minimal type II error while enforcing the prioritized type I error controlled under some user-specified level $\alpha$. This paradigm serves naturally in applications such as severe disease diagnosis and spam detection, where people have clear priorities among the two error types. Recently, \cite{tong2016np} proposed a nonparametric umbrella algorithm that adapts all scoring-type classification methods (e.g., logistic regression, support vector machines, random forest) to respect the given type I error (i.e., conditional probability of classifying a class $0$ observation as class $1$ under the 0-1 coding) upper bound $\alpha$ with high probability, without specific  distributional assumptions on the features and the responses. Universal the umbrella algorithm is, it demands an explicit minimum sample size requirement on class $0$, which is often the more scarce class, such as in rare disease diagnosis applications. In this work, we employ the parametric linear discriminant analysis (LDA) model and propose a new parametric thresholding algorithm, which does not need the minimum sample size requirements on class $0$ observations and thus is suitable for small sample applications such as rare disease diagnosis. Leveraging both the existing nonparametric and the newly proposed parametric thresholding rules, we propose four LDA-based NP classifiers, for both low- and high-dimensional settings. On the theoretical front, we prove NP oracle inequalities for one proposed classifier, where the rate for excess type II error benefits from the explicit parametric model assumption. Furthermore, as NP classifiers involve a sample splitting step of class $0$ observations,  we construct a new adaptive sample splitting scheme that can be applied universally to NP classifiers, and this adaptive strategy reduces the type II error of these classifiers.

\end{abstract}

\begin{keywords}
classification, asymmetric error, Neyman-Pearson (NP) paradigm, NP oracle inequalities, minimum sample size requirement, linear discriminant analysis (LDA), NP umbrella algorithm, adaptive splitting
\end{keywords}

\section{Introduction}

Classification aims to predict discrete outcomes (i.e., class labels) for new observations, using algorithms trained on labeled data. It is one of the most studied machine learning problems with  applications  including automatic disease diagnosis, email spam filters, and image classification. Binary classification, where the outcomes belong to one of two classes and the class labels are usually coded as $\{0, 1\}$ (or $\{-1, 1\}$ or $\{1, 2\}$), is the most common type.  Most binary classifiers are constructed to minimize the overall classification error (i.e., \textit{risk}), which is a weighted sum of type I and type II errors.  Here, \textit{type I error} is defined as the conditional probability of misclassifying a class $0$ observation as class $1$, and \textit{type II error} is the conditional probability of misclassifying a class $1$ observation as class $0$ \footnote{In verbal discussion, with a slight abuse of language, we also refer to the action of assigning a class $0$ observation to class $1$ as type I error, and that of assigning a class $1$ observation to class $0$ as type II error.}. In the following, we refer to this paradigm as the \textit{classical classification paradigm}. Along this line, numerous methods have been proposed, including linear discriminant analysis (LDA) in both low and high dimensions \citep{Guo.Hastie.ea.2005, Cai.Liu.2011, Shao.Wang.ea.2011, Witten.Tibshirani.2012, Fan.Feng.ea.2011,mai2012direct},  logistic regression, support vector machine (SVM) \citep{vapnik2013nature}, random forest \citep{Breiman.2001},   among others. 

In contrast, the \textit{Neyman-Pearson (NP) classification paradigm} \citep{CanHowHus02,ScoNow05,Rigollet.Tong.2011, Tong.2013, Zhao2016neyman, tong2016survey} was developed to seek a classifier that minimizes the type II error while maintaining the type I error below a user-specified level $\alpha$, usually a small value (e.g., $5\%$). We call this target classifier the NP oracle classifier. The NP paradigm is appropriate in applications such as cancer diagnosis, where a type I error (i.e., misdiagnosing a cancer patient to be healthy) has more severe consequences than a type II error (i.e., misdiagnosing a healthy patient as with cancer). The latter incurs  extra medical costs and patients' anxiety but will not result in the tragic loss of life, so it is appropriate to have type I error control as the priority.  Cost-sensitive learning, which assigns different costs as weights of type I and type II errors \citep{Elkan01, ZadLanAbe03} is a popular  alternative paradigm to address asymmetric errors. This approach has merits and many practical values. However, when there is no consensus to assign costs to errors, or in applications such as medical diagnosis, where it is morally unacceptable to do a cost and benefit analysis, the NP paradigm is a more natural choice.  Previous  NP classification literature use both empirical risk minimization (ERM) \citep{CanHowHus02,CasChe03,Sco05,ScoNow05,HanCheSun08,Rigollet.Tong.2011} and plug-in approaches \citep{Tong.2013,Zhao2016neyman}, and its genetic application is suggested in \cite{Li.Tong.2016}. More recently, \cite{tong2016np} took a different route, and proposed an NP umbrella algorithm that adapts scoring-type classification algorithms (e.g., logistic regression, support vector machines, random forest, etc.) to the NP paradigm, by setting nonparametric order statistics based thresholds on the classification scores.  To implement the NP paradigm, it is very tempting to simply tune the empirical type I error to (no more than) $\alpha$.  Nevertheless, as argued extensively in \cite{tong2016np}, doing so would not lead to classifiers whose type I errors are bounded from above by $\alpha$ with high probability. 

Universal the NP umbrella algorithm is, it demands an explicit minimum sample size requirement on class $0$, which is usually the more scarce class. While this requirement is not too stringent, it does cause a problem when the smaller class $0$ has an insufficient sample size.  For instance, a commonly-used lung cancer diagnosis example \citep{Gordon:2002wr} in the high-dimensional statistics literature has $181$ subjects and $12,533$ features. These subjects carry two different types of lung cancer, namely, adenocarcinoma ($150$ subjects) and mesothelioma ($31$ subjects). If we were to treat mesothelioma as the class $0$, and would like to control the type I error under $\alpha = .05$ with high probability $1-\delta_0 = .95$, then the sample size requirement of the NP umbrella algorithm for the left-out class $0$ observations is $59$ (or $\log \delta_0 / \log (1-\alpha)$ to be exact), which is almost twice as much as the total available class $0$ sample size $31$.    
 In this work, we employ the linear discriminant analysis (LDA) model and propose a new thresholding algorithm, which does not have the minimum sample size requirements on class $0$ observations and thus applies to small sample applications such as rare disease diagnosis.

In total, we develop four NP classifiers in this paper: combinations of two different ways to create scoring functions in low- and high-dimensional settings under the LDA model, and two thresholding methods that include the nonparametric order statistics based NP umbrella algorithm and the newly proposed parametric thresholding rule. We will denote these classifiers by \verb+NP-LDA+, \verb+NP-sLDA+, \verb+pNP-LDA+ and \verb+pNP-sLDA+, where \verb+p+ means ``parametric thresholding" and \verb+s+ means ``sparse". Extensive numerical experiments will suggest the recommended application domains of these methods.

On the theoretical front, NP oracle inequalities, a core theoretical criterion to evaluate classifiers under the NP paradigm,  were only established for nonparametric plug-in classifiers \citep{Tong.2013, Zhao2016neyman}. The current work is the first to establish NP oracle inequalities under parametric models.  Concretely, we will show that \verb+NP-sLDA+ satisfies the NP oracle inequalities.  Another major contribution of this work is that  we design an adaptive sample splitting scheme that can be applied universally to existing NP classifiers. This adaptive strategy enhances the power (i.e., reduces type II error) and therefore raises the practicality of the NP algorithms.

The rest of this paper is organized as follows.  Section $2$ introduces the notations and model setup. Section $3$ constructs the new parametric thresholding rule based on the LDA model and introduces four new LDA based NP classifiers. Section $4$ formulates theoretical conditions and derives NP oracle inequalities for \verb+NP-sLDA+. 
Section $5$ describes the new data-adaptive sample splitting scheme.    Numerical results are presented in Section $6$, followed by a short discussion in Section $7$.  Longer proofs, additional numerical results, and other supporting intermediate results are relegated to the appendices.

\section{Notations and model setup}
A few standard notations are introduced 
to facilitate our discussion.  
Let $(X, Y)$ be a random pair where $X \in \mathcal{X} \subset \mathbb{R}^d$ is a $d$-dimensional vector of features, and $Y \in \{0,1\}$ indicates $X$'s class label.  Denote respectively by $\p$ and $\E$ generic probability distribution and expectation. 
A \emph{classifier} $\phi : \mathcal{X} \to \{0,1\}$ is a data-dependent mapping from $\mathcal{X}$ to $\{0,1\}$ that assigns $X$ to one of the classes.  
The overall classification error of $\phi$ is
$
R(\phi)=\E \1\{\phi(X)\neq Y\} = \p\left\{\phi(X)\neq Y\right\}
$, where $\1(\cdot)$ denotes the indicator function. 
By the law of total probability, $R(\phi)$ can be decomposed into a weighted average of type~I error $R_0(\phi)=\p \left\{\phi(X)\neq Y|Y=0\right\}$ and type~II error $R_1(\phi)=\p \left\{\phi(X)\neq Y|Y=1\right\}$ as
\begin{equation}\label{EQ:risk break down}
R(\phi) \,=\, \pi_0 R_0(\phi)+\pi_1 R_1(\phi)\,,
\end{equation}
where $\pi_0 = \p(Y=0)$ and $\pi_1 = \p(Y=1)$.
While the classical paradigm minimizes $R(\cdot)$, the Neyman-Pearson (NP) paradigm seeks to minimize $R_1$ while controlling $R_0$ under a user-specified level $\alpha$.
The (level-$\alpha$) \emph{NP oracle classifier} is thus
\begin{equation}
\label{eq::goal}
\phi^*_{\alpha} \,\in\, \argmin_{R_0(\phi)\leq\alpha}R_1(\phi)\,,
\end{equation}
where the \emph{significance level} $\alpha$ reflects the level of conservativeness towards type I error.

In this paper, we assume that  $(X|Y=0)$ and $(X|Y=1)$ follow multivariate Gaussian distributions with a common covariance matrix. That is, their probability density functions $f_0$ and $f_1$ are  
$$f_0 \sim \mathcal{N}(\mu^0, \Sigma) \text{ and }f_1 \sim \mathcal{N}(\mu^1, \Sigma)\,,$$ 
where the mean vectors $\mu^0, \mu^1 \in \mathbb{R}^d$ ($\mu^0 \neq \mu^1$) and the common positive definite covariance matrix $\Sigma \in \mathbb{R}^{d\times d}$. This model is frequently referred to as the linear discriminant analysis (LDA) model. Despite its simplicity, the LDA model has been proved to be effective in many applications and benchmark datasets. Moreover, in the last ten years, several papers \citep{Shao.Wang.ea.2011, Cai.Liu.2011, Fan.Feng.ea.2011, Witten.Tibshirani.2012,mai2012direct} have developed LDA based algorithms  under high-dimensional settings where the dimensionality of features  is comparable to or larger than the sample size.    

It is well known that the Bayes classifier (i.e., oracle classifier) of the classical paradigm is $\phi^*(x) = \1(\eta(x) > 1/2)$, where $\eta(x) = \E(Y|X=x) = \p(Y=1|X=x)$ is the regression function.  Since 
$$
\eta(x) = \frac{\pi_1\cdot f_1(x)/f_0(x)}{\pi_1\cdot f_1(x)/f_0(x) + \pi_0}\,,
$$  
the oracle classifier can be written alternatively as $\1(f_1(x)/f_0(x) > \pi_0 / \pi_1)$.  When $f_0$
 and $f_1$ follow the LDA model, the oracle classifier of the classical paradigm is
\begin{equation}\label{eqn:classical_oracle}
\phi^*(x) =\1\left\{(x-\mu_a)^{\top}\Sigma^{-1}\mu_d+\log {\frac{\pi_1}{\pi_0}} > 0\right\} = \1\left\{ (\Sigma^{-1}\mu_d)^{\top} x > \mu_a^{\top} \Sigma^{-1}\mu_d - \log \frac{\pi_1}{\pi_0}\right\}\,,
\end{equation}
where $\mu_a = \frac{1}{2}(\mu^0 + \mu^1)$, $\mu_d = \mu^1 - \mu^0$, and $(\cdot)^{\top}$ denotes the transpose of a vector.  In contrast, motivated by the famous \textit{Neyman-Pearson Lemma}  in hypothesis testing (attached in the Appendix \ref{NP lemma} for readers' convenience), the NP oracle classifier  is 
\begin{equation}\label{eqn:np_original}
\phi^*_{\alpha} (x) = \1\left\{\frac{f_1(x)}{f_0(x)} >  C_{\alpha}\right\}\,,
\end{equation}
for some threshold $C_{\alpha}$ such that $P_0\{f_1(X)/f_0(X)>C_{\alpha}\}\leq\alpha$ and $P_0\{f_1(X)/f_0(X)\geq C_{\alpha}\}\geq\alpha$, where $P_0$ is the conditional probability distribution of $X$ given $Y=0$ ($P_1$ is defined similarly).    

Under the LDA model assumption, the NP oracle classifier is  $\phi^*_{\alpha}(x)=\1( (\Sigma^{-1}\mu_d)^{\top}x >  C_{\alpha}^{**})$, where $C_{\alpha}^{**} = \log C_{\alpha} + \mu_a^{\top} \Sigma^{-1}\mu_d$. Denote by $\beta^{\text{Bayes}} = \Sigma^{-1}\mu_d$ and   $s^*(x)  = (\Sigma^{-1}\mu_d)^{\top} x = (\beta^{\text{Bayes}})^{\top} x$, then the NP oracle classifier \eqref{eqn:np_original} can be written as 
\begin{equation}\label{eqn:np_oracle}
\phi^*_{\alpha}(x)  = \1(s^*(x) > C_{\alpha}^{**})\,.
\end{equation}
In fact, leveraging further the LDA model assumption, $C^{**}_{\alpha}$ can be written out explicitly.  Note that when $X^j\sim \mathcal{N}(\mu^j, \Sigma)$, $(\Sigma^{-1}\mu_d)^{\top}X^j\sim \mathcal{N}(\mu_d^{\top} \Sigma^{-1} \mu^j, \mu_d^{\top}\Sigma^{-1}\mu_d)$, where $j = 0, 1$. Let $Z = \left\{(\Sigma^{-1}\mu_d)^{\top}X^0 - (\Sigma^{-1}\mu_d)^{\top}\mu^0\right\}\left(\mu_d^{\top}\Sigma^{-1}\mu_d\right)^{-1/2}$, then $Z\sim \mathcal{N}(0, 1)$.  As $\Sigma^{-1}$ is positive definite and $\mu^0\neq \mu^1$,  we have $\mu_d^{\top} \Sigma^{-1} \mu^0 < \mu_d^{\top} \Sigma^{-1}\mu^1$, and thus the level-$\alpha$ NP oracle is 

\begin{equation}\label{eqn:np_oracle_explicit}
\phi^*_{\alpha}(x) = \1(s^*(x) > C_{\alpha}^{**}) = \1\left((\Sigma^{-1}\mu_d)^{\top}x > \sqrt{\mu_d^{\top}\Sigma^{-1}\mu_d} \Phi^{-1}(1-\alpha) +  \mu_d^{\top} \Sigma^{-1} \mu^0\right)\,,
\end{equation} 
where $\Phi(\cdot)$ is the CDF of the univariate standard normal distribution $\mathcal{N}(0, 1)$.
This oracle holds independent of the feature dimensionality. In reality, we cannot expect that the type I error bound holds almost surely; instead, we can only hope that a classifier $\hat \phi_{\alpha}$ trained on a finite sample have $R_0(\hat \phi_{\alpha}) \leq \alpha$ with high probability.     We will construct multiple versions of  LDA-based $\hat\phi_{\alpha}$ to suit different application domains in the next section.

Other mathematical notations are introduced as follows.   For a general $m_1 \times m_2$ matrix $M$,  $\|M\|_{\infty} = \max_{i = 1, \ldots, m_1} \sum_{j=1}^{m_2} |M_{ij}|$, and $\|M\|$ denotes the operator norm.  For a vector $b$, $\|b\|_{\infty} = \max_j |b_j|$, $|b|_{\text{min}} = \min_{j} |b_j|$, and $\|b\|$ denotes the $L_2$ norm.  Let $A = \{j:  (\Sigma^{-1}\mu_d)_j \neq 0\}$, and $\mu^1_{A}$ be a sub-vector of $\mu^1$ of length $s := \text{cardinality}(A)$ that consists of the coordinates of $\mu^1$ in $A$ (similarly for $\mu^0_A$).  Up to permutation, the $\Sigma$ matrix can be written as 
$$
\Sigma 
=
\begin{bmatrix}
    \Sigma_{AA} &     \Sigma_{AA^c} \\
    \Sigma_{A^cA} &     \Sigma_{A^cA^c}
\end{bmatrix}\,.
$$
We use $\lambda_{\max}(\cdot)$ and $\lambda_{\min}(\cdot)$ to denote maximum and minimum eigenvalues of a matrix, respectively.  


\section{Constructing LDA-based NP classifiers  $\hat\phi_{\alpha}(\cdot)$}

In this section, we construct two estimates of $s^*(\cdot)$ in Sections \ref{sec::s estimate 1} and \ref{sec::s estimate 2},  and two estimates of $C_{\alpha}^{**}$ in Sections \ref{sec::c alpha estimate 1} and \ref{sec::c alpha estimate 2}.   Hence in total, we present in Section \ref{sec::four classifiers} four LDA-based NP classifiers   $\hat \phi_{\alpha}(\cdot)$ for different application domains.  
We assume the following sampling scheme for the rest of this section and in the theoretical analysis.  
Let $\mathcal{S}_0 = \{X^0_{1}, \ldots, X^0_{n_0}\}$ be an i.i.d. class $0$ sample of size  $n_0$, $\mathcal{S}_0' = \{X^0_{n_0+1}, \ldots, X^0_{n_0 + n_0'}\}$ be an i.i.d. class $0$ sample of size  $n_0'$ and $\mathcal{S}_1 = \{X^1_{1}, \ldots, X^1_{n_1}\}$ be an i.i.d. class  $1$ sample of size  $n_1$. The example sizes $n_0$, $n_0'$ and $n_1$ are considered as fixed numbers. Moreover, we assume that samples $\mathcal{S}_0$, $\mathcal{S}'_0$ and $\mathcal{S}_1$ are independent of each other.  

\subsection{Estimating $s^*(\cdot)$ in low-dimensional settings}\label{sec::s estimate 1}

To estimate $s^*(x) = (\Sigma^{-1}\mu_d)^{\top}(x)$, we divide the situation into low- and high-dimensional feature dimensionality $d$.  In low-dimensional settings, that is, when feature dimensionality $d$ is small compared to the sample sizes, we use $\mathcal{S}_0$ and $\mathcal{S}_1$ to get the sample means $\hat \mu ^0$ and $\hat \mu^1$ that estimate $\mu^0$ and $\mu^1$ respectively, and get the pooled sample covariance matrix $\widehat \Sigma$ that estimates $\Sigma$. Precisely, we have 
\begin{align}\label{eqn:s hat low d}
\widehat \Sigma &= \frac{1}{n_0+n_1-2}\left(\sum_{i=1}^{n_0}(X^0_i - \hat \mu^0)(X^0_i - \hat \mu^0)^{\top} + \sum_{i=1}^{n_1}(X^1_i - \hat \mu^1)(X^1_i - \hat \mu^1)^{\top}\right)\,,\\
\hat \mu^0 &= \frac{1}{n_0}\left(X^0_1 + \ldots +X^0_{n_0}\right)\,,  \\
\hat \mu^1 &= \frac{1}{n_1}\left(X^1_1 + \ldots +X^1_{n_1}\right)\,.  
\end{align}
Let $\hat \mu_d = \hat \mu^1 - \hat \mu^0$, then we can estimate $s^*(\cdot)$ by
$
\hat s(x) = (\widehat \Sigma^{-1} \hat \mu_d)^{\top}x.
$

\subsection{Estimating $s^*(\cdot)$ in high-dimensional settings}\label{sec::s estimate 2}

In high-dimensional settings where $d$ is larger than the sample sizes, $\widehat \Sigma$ in \eqref{eqn:s hat low d} is not invertible, and we need to resort to more sophisticated methods to estimate $s^*(\cdot)$.  
First, we note that although the decision thresholds are different, the NP oracle $\phi^*_{\alpha}$ in \eqref{eqn:np_oracle} and the classical oracle $\phi^*$ in \eqref{eqn:classical_oracle} both project an observation $x$ to the $\beta^{\text{Bayes}}  = \Sigma^{-1}\mu_d$ direction. Hence one can leverage existing works on sparse LDA methods under the classical paradigm to find a $\beta^{\text{Bayes}}$ estimate, using samples $\mathcal{S}_0$ and $\mathcal{S}_1$. In particular, 
we adopt $\hat{\beta}^{\text{lasso}}$, the lassoed (sparse) discriminant analysis (\verb+sLDA+) direction  in \cite{mai2012direct}, which is defined by
\begin{equation}\label{eqn:slda_optimization}
(\hat\beta^{\text{lasso}}, \hat{\beta}^{\lambda}_0) = \argmin_{(\beta, \beta_0)}\left\{n^{-1} \sum_{i=1}^n (y_i - \beta_0 - x_i^{\top} \beta)^2 + \lambda  \sum_{j=1}^d |\beta_j|\right\}\,,
\end{equation}
where $n = n_0 + n_1$, $y_i = -n/n_0$ if the $i$th observation is from class $0$, and  $y_i = n/n_1$ if the $i$th observation is from class $1$. Then we estimate $s^*(\cdot)$ by
$$
\hat s(x) = (\hat\beta^{\text{lasso}})^{\top}x\,.
$$

Note that although the optimization program \eqref{eqn:slda_optimization} is the same as in \cite{mai2012direct},   our sampling scheme is different from that in  \cite{mai2012direct}, where they assumed i.i.d. observations from the joint distribution of $(X, Y)$. As a consequence, when analyzing theoretical properties for $\hat\beta^{\text{lasso}}$ in \eqref{eqn:slda_optimization}, it is necessary to  establish results that are counterparts to those in \cite{mai2012direct}.

%

\subsection{Estimating  $C_{\alpha}^{**}$ via the nonparametric NP umbrella algorithm} \label{sec::c alpha estimate 1}
\cite{tong2016np} provides a nonparametric order statistics based method, the NP umbrella algorithm, to estimate the threshold $C^{**}_{\alpha}$.  This algorithm leverages the following proposition.  
\begin{proposition}
\label{prop}
	Suppose that we use $\mathcal{S}_0$ and $\mathcal{S}_1$ to train a base algorithm (e.g., \texttt{sLDA}) , and obtain a scoring function $f$ (e.g., an estimate of $s^*$). 	Applying $f$ to $\mathcal{S}_0'$, we denote the resulting classification scores as $T_1, \ldots, T_{n'_0}$, which are real-valued random variables. Then, denote by $T_{(k)}$ the $k$-th order statistic (i.e., $T_{(1)} \leq \ldots \leq T_{(n'_0)}$). For a new observation $X$, if we denote its classification score $f(X)$ as $T$, we can construct classifiers $\hat\phi_k (X) = \1(T > T_{(k)})$, $k\in\{1,\ldots, n'_0\}$. Then, the population type I error of $\hat\phi_k$, denoted by $R_0(\hat\phi_k)$, is a function of $T_{(k)}$ and hence a random variable, and it holds that
	\begin{equation}
	\p\left[ R_0(\hat\phi_k) > \alpha \right] \le \sum_{j=k}^{n'_0} {{n'_0}\choose{j}} (1-\alpha)^j \alpha^{n_0'-j}\,.
	\label{eq3}
	\end{equation}
	That is, the probability that the type I error of $\hat\phi_k$ exceeds $\alpha$ is under a constant that only depends on $k$, $\alpha$ and $n_0'$. We call this probability the violation rate of $\hat\phi_k$ and denote its upper bound by $v(k) = \sum_{j=k}^{n'_0} {{n'_0}\choose{j}} (1-\alpha)^j \alpha^{n_0'-j}$. When $T_i$'s are continuous, this bound is tight.  
\end{proposition}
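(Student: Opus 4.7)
The plan is to prove the bound by conditioning on the trained scoring function $f$, applying the probability integral transform, and invoking the binomial formula for the CDF of a uniform order statistic. The crucial structural fact is that $\mathcal{S}_0'$ is, by assumption, independent of $\mathcal{S}_0 \cup \mathcal{S}_1$. Consequently, once we condition on $f$ (which is a measurable function of $\mathcal{S}_0 \cup \mathcal{S}_1$), the scores $T_1, \ldots, T_{n_0'}$ are i.i.d.\ draws from the pushforward $F$ of the class-$0$ law $P_0$ under the map $f$. A fresh test observation $X \sim P_0$, independent of everything, yields $T = f(X)$ drawn from the same $F$ and independent of the $T_i$'s.

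Next I would write the random type I error in closed form. Given the realization of $T_{(k)}$ (and of $f$),
\[
R_0(\hat\phi_k) \;=\; \p\bigl(f(X) > T_{(k)} \,\big|\, T_{(k)}, f\bigr) \;=\; 1 - F\bigl(T_{(k)}\bigr),
\]
at least when $F$ is continuous. The event $\{R_0(\hat\phi_k) > \alpha\}$ therefore becomes $\{F(T_{(k)}) < 1-\alpha\}$. Now apply the probability integral transform: in the continuous case, $U_i := F(T_i)$ are i.i.d.\ $\mathrm{Uniform}(0,1)$, and by monotonicity of $F$, $F(T_{(k)}) = U_{(k)}$. Thus
\[
\p\bigl(R_0(\hat\phi_k) > \alpha \,\big|\, f\bigr) \;=\; \p\bigl(U_{(k)} < 1-\alpha\bigr).
\]

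The binomial identity finishes the argument. Observe that $U_{(k)} < 1-\alpha$ is equivalent to $\sum_{i=1}^{n_0'} \1(U_i < 1-\alpha) \geq k$, and each indicator is $\mathrm{Bernoulli}(1-\alpha)$ independently, so
\[
\p\bigl(U_{(k)} < 1-\alpha\bigr) \;=\; \sum_{j=k}^{n_0'} \binom{n_0'}{j} (1-\alpha)^j \alpha^{n_0'-j} \;=\; v(k),
\]
with equality. Since this conditional bound does not depend on $f$, taking the outer expectation over $f$ preserves both the inequality and the tightness in the continuous case, yielding \eqref{eq3}.

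The one delicate point I expect is the non-continuous case, where $F$ may have atoms and $F(T_i)$ is no longer exactly uniform. The fix is a coupling/randomization: define $\tilde U_i = F(T_i^-) + V_i\bigl(F(T_i) - F(T_i^-)\bigr)$ with $V_i$ i.i.d.\ $\mathrm{Uniform}(0,1)$ independent of everything else, so that $\tilde U_i$ are exactly i.i.d.\ $\mathrm{Uniform}(0,1)$ and $\tilde U_i \leq F(T_i)$. This coupling preserves the order of the $T_i$'s with probability one while stochastically upper-bounding the relevant event, so the same binomial sum controls $\p(R_0(\hat\phi_k)>\alpha)$ with inequality; tightness is lost exactly when atoms are present, matching the caveat in the statement.
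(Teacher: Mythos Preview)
Your argument is correct. The probability integral transform reduces the continuous case to the CDF of a uniform order statistic, and the randomized-CDF coupling handles atoms; the only imprecision is the phrase ``preserves the order of the $T_i$'s,'' which is not quite what is used---what you actually need (and have) is the pointwise inequality $\tilde U_i \le F(T_i)$, which implies $\tilde U_{(k)} \le F(T_{(k)})$ and hence $\{F(T_{(k)}) < 1-\alpha\} \subset \{\tilde U_{(k)} < 1-\alpha\}$.

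The paper itself does not supply a proof of this proposition; it quotes the result from \cite{tong2016np} and remarks only that ``an essential step of the proof of the proposition hinges on the symmetry property of permutation.'' That phrasing points to an exchangeability argument: conditionally on $f$, the scores $T_1,\ldots,T_{n_0'}$ together with a fresh class-$0$ score $T$ are exchangeable, so the rank of $T$ is uniform and one reads off the distribution of $R_0(\hat\phi_k)$ directly without ever naming $F$. Your route via the probability integral transform is an equivalent, equally standard way to exploit the same distribution-freeness; it has the mild advantage that the non-continuous case is handled by an explicit coupling rather than by a tie-breaking convention, while the permutation argument has the advantage of never invoking the unknown $F$ at all. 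Either way the bound and its tightness in the continuous case follow.
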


Proposition \ref{prop} is the key step towards the NP umbrella algorithm proposed in \cite{tong2016np}, which applies to all scoring-type classification methods (base algorithms), including logistic regression, support vector machines, random forest, etc. In the theoretical analysis part of this paper, we always assume the continuity of scoring functions. Under this mild assumption, $v(k)$ is the violation rate of type I error for $\hat \phi_k$. An essential step of the proof of the proposition hinges on the symmetry property of permutation.    It is obvious that $v(k)$ decreases as $k$ increases.    To choose from $\hat \phi_1, \ldots, \hat \phi_{n_0'}$ such that a classifier achieves minimal type II error with type I error violation rate less than or equal to a user's specified  $\delta_0$, the right order is 
\begin{equation}\label{eqn:kstar}
k^* = \min\left\{k\in\{1, \ldots, n'_0\} :   v(k) \leq \delta_0\right\}\,.
\end{equation}

In our current setting, $\hat s(\cdot)$, constructed as in Section \ref{sec::s estimate 1} or Section \ref{sec::s estimate 2},  plays the role of the scoring function. Let $\hat s_{(k^*)}$ be the $k^*$-th order statistic among the set  $\left\{\hat s(X^0_{n_0+1}), \ldots, \hat s(X^0_{n_0+n'_0})\right\}$, then the NP umbrella algorithm sets 

$$
\widehat C_{\alpha}  = \hat s_{(k^*)}\,.
$$

To achieve $\p\left[ R_0(\hat\phi_k) > \alpha \right] \leq \delta_0$ for some $\hat \phi_k$, we need to control the violation rate under $\delta_0$ at least in the extreme case when $k=n'_0$; that is,  it is necessary to ensure $v(n'_0) = (1-\alpha)^{n'_0}  \le \delta_0$. Clearly, if the $(n'_0)$-th order statistic cannot guarantee the violation rate control, other order statistics certainly cannot. Therefore, for a given $\alpha$ and $\delta_0$, there exists a  minimum left-out class $0$ sample size requirement $$n'_0\geq \log \delta_0 / \log (1-\alpha)\,,$$ for the type I error violation rate control. Note that the control on type I error violation rate does not demand any sample size requirements on $\mathcal{S}_0$ and $\mathcal{S}_1$.  But these two parts have an impact on the quality of scoring functions, and hence on the type II error performance.


\subsection{Estimating $C^{**}_{\alpha}$ by leveraging the parametric assumption}\label{sec::c alpha estimate 2}
We explicitly leverage the LDA model assumption to create an estimate of $C_{\alpha}^{**}$.  For simplicity, let $\hat A$ be either  $\widehat \Sigma^{-1} \hat\mu_d$ in Section \ref{sec::s estimate 1}
   or $\hat\beta^{\text{lasso}}$  in Section \ref{sec::s estimate 2}, corresponding to the low- and high-dimensional settings, respectively. First, we consider $\hat A$ as fixed (i.e., fix $\mathcal{S}_0$ and $\mathcal{S}_1$).  When $X^0\sim \mathcal{N}(\mu^0, \Sigma)$, we have
$$
\hat A^{\top} X^0 \sim \mathcal{N}(\hat A^{\top} \mu^0, \hat A^{\top} \Sigma \hat A)\,.
$$
Let 
$$
\tilde \phi_{\alpha}(x) = \1\left(\hat A^{\top}x > \sqrt{\hat A^{\top} \Sigma \hat A} \Phi^{-1}(1-\alpha) + \hat A^T \mu^0 \right)\,.
$$
 For every fixed $\mathcal{S}_0$ and $\mathcal{S}_1$, $\tilde \phi_{\alpha}$ is the NP oracle  at level $\alpha$.     Note that $\tilde \phi_{\alpha}$ is not an accessible classifier because $\Sigma$ and $\mu^0$ are unknown.  Plugging in the estimates of these parameters is not a good idea because this will not achieve a high probability control of the type I error under $\alpha$.  We plan to construct a statistic $\widehat C^p_{\alpha}$ (the super index $p$ stands for ``parametric thresholding") such that with high probability, 
$$
\widehat C^p_{\alpha} \geq \sqrt{\hat A^{\top} \Sigma \hat A} \Phi^{-1}(1-\alpha) + \hat A^{\top} \mu^0\,.
$$
Let
$$
\hat \phi_{\alpha}(x) = \1\left(\hat A^{\top}x > \widehat C^p_{\alpha} \right)\,.
$$
Then $R_0(\hat \phi_{\alpha})\leq R_0(\tilde \phi_{\alpha}) = \alpha$ with high probability.  Naturally, we want such a $\widehat C^p_{\alpha}$ as small as possible, so that the resulting classifier has good type II error performance. Towards this end, we build tight sample-based upper bounds for $\hat A^{\top} \mu^0$ (Lemma \ref{lem:ii}) and $\hat A^{\top} \Sigma \hat A$ (Lemma \ref{lem:iii}), and then combine these bounds to get $\widehat C^p_{
\alpha}$ (Proposition \ref{prop:c_p}).  


\subsubsection{Upper bound for $\hat A^{\top}\mu^0$}

\begin{lemma}\label{lem:ii}
Let $W_1 = \hat A^{\top}X^0_{n_0+1}$, $W_2 = \hat A^{\top}X^0_{n_0+2}, \ldots$, and  $W_{n_0'} = \hat A^{\top}X^0_{n_0+n_0'}$.  Denote by $F_{t_{(n_0'-1)}}(\cdot)$ the cumulative distribution function of the $t$-distribution with degrees of freedom $(n_0'-1)$. Then with probability at least $1-\delta_0$, it holds that  	
\begin{equation}\label{eqn:bound first part}
 \hat A^{\top}\mu^0 \leq \bar W - F_{t_{(n_0'-1)}}^{-1}(\delta_0) \frac{S}{\sqrt{n_0'}}\,,
 \end{equation}
where $\bar W = (W_1 + \ldots + W_{n_0'})/n_0'$ and $S = \left\{\left[(W_1-\bar W)^2+\ldots+(W_{n'_0} - \bar W)^2\right]/(n_0'-1)\right\}^{1/2}$.

\end{lemma}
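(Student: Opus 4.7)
The plan is to exploit the independence between the estimator $\hat A$, which is built from $\mathcal{S}_0 \cup \mathcal{S}_1$, and the hold-out sample $\mathcal{S}_0' = \{X^0_{n_0+1},\ldots,X^0_{n_0+n_0'}\}$, and then invoke the classical one-sample Student $t$ pivot conditionally on $(\mathcal{S}_0,\mathcal{S}_1)$.

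First I would condition on the sigma-algebra generated by $(\mathcal{S}_0,\mathcal{S}_1)$, so that $\hat A$ is treated as a fixed (nonrandom) vector. Since the $X^0_{n_0+i}$ are i.i.d.\ $\mathcal{N}(\mu^0,\Sigma)$ and independent of $(\mathcal{S}_0,\mathcal{S}_1)$, the linear projections $W_i = \hat A^{\top} X^0_{n_0+i}$ are conditionally i.i.d.\ univariate Gaussian with mean $\hat A^{\top}\mu^0$ and variance $\hat A^{\top}\Sigma\hat A$. I should quickly check that this variance is almost surely positive: in the low-dimensional case $\widehat\Sigma^{-1}\hat\mu_d$ is a.s.\ nonzero (since $\hat\mu_d$ has a continuous density) and $\Sigma$ is positive definite, and in the high-dimensional case the event $\hat\beta^{\text{lasso}}=0$ has probability zero under standard regularity (or one excludes it from the high-probability event).

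Next I would invoke the textbook result that, for a univariate i.i.d.\ Gaussian sample with unknown mean and variance, the studentized statistic
\[
T \;=\; \frac{\bar W - \hat A^{\top}\mu^0}{S/\sqrt{n_0'}}
\]
follows a $t$-distribution with $n_0'-1$ degrees of freedom. In our setting this holds conditionally on $(\mathcal{S}_0,\mathcal{S}_1)$, and the conditional distribution of $T$ does not depend on the realization of $\hat A$. Consequently, the event $\{T \geq F_{t_{(n_0'-1)}}^{-1}(\delta_0)\}$ has conditional probability exactly $1-\delta_0$. Rearranging this inequality (and noting that $F_{t_{(n_0'-1)}}^{-1}(\delta_0)<0$ for $\delta_0<1/2$, which is why the bound inflates $\bar W$ upward) yields the claimed inequality
\[
\hat A^{\top}\mu^0 \;\leq\; \bar W - F_{t_{(n_0'-1)}}^{-1}(\delta_0)\,\frac{S}{\sqrt{n_0'}}
\]
conditionally on $(\mathcal{S}_0,\mathcal{S}_1)$. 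Since the conditional probability bound is the same constant $1-\delta_0$ for every realization of $(\mathcal{S}_0,\mathcal{S}_1)$, taking expectation (tower property) gives the unconditional statement.

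There is no real obstacle here beyond the conceptual step of conditioning on the training fold so that $\hat A$ becomes deterministic and the $W_i$ become a genuine i.i.d.\ Gaussian sample; the only mild technicality is the degeneracy check $\hat A^{\top}\Sigma\hat A>0$ that makes the $t$-pivot well-defined. Once independence of $\mathcal{S}_0'$ from $(\mathcal{S}_0,\mathcal{S}_1)$ is used, the remainder is a one-line application of the standard Student $t$ quantile.
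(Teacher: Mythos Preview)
Your proposal is correct and follows essentially the same approach as the paper: condition on $(\mathcal{S}_0,\mathcal{S}_1)$ so that $\hat A$ is fixed, invoke the classical Student $t$ pivot for the i.i.d.\ Gaussian sample $W_1,\ldots,W_{n_0'}$, rearrange the quantile inequality, and then uncondition via the tower property. If anything, you are slightly more careful than the paper in explicitly checking the non-degeneracy $\hat A^{\top}\Sigma\hat A>0$ needed for the pivot to be well-defined.
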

\begin{proof}
We invoke the following classic result.  Suppose $W_1, \ldots, W_m$ are i.i.d. from one-dimensional $\mathcal{N}(\mu, \sigma^2)$. Let $\bar W = (W_1+ \ldots + W_m)/m$ and $S = \left\{\sum_{j=1}^m(W_j-\bar W)^2/(m-1)\right\}^{1/2}$, then we have
$$
\frac{\bar W - \mu}{S/\sqrt{m}}\sim t_{(m-1)} \text{ }( t \text{ distribution with df}=m-1)\,.
$$

Take $W = \hat A^{\top} X$ and $m = n_0'$, then for fixed $\hat A$,  $W_1 = \hat A^{\top}X^0_{n_0+1}, \ldots, W_{n_0'} = \hat A^{\top}X^0_{n_0+n_0'}$  are  i.i.d. from one-dimensional normally distributed variable with mean $\hat A^{\top}\mu^0$. Then it follows that 
$$
\frac{\bar W - \hat A^{\top}\mu^0}{S/\sqrt{n_0'}}\sim t_{(n_0'-1)}\,.
$$ 
Thus with probability $1-\delta_0$ (randomness comes from $\mathcal{S}'_0$ while keeping $\mathcal{S}_0$ and $\mathcal{S}_1$ fixed),   
\begin{equation}
\frac{\bar W - \hat A^{\top}\mu^0}{S/\sqrt{n_0'}} \geq F_{t_{(n_0'-1)}}^{-1}(\delta_0)\, \text{ } \iff \text{ } \hat A^{\top}\mu^0 \leq \bar W - F_{t_{(n_0'-1)}}^{-1}(\delta_0) \frac{S}{\sqrt{n_0'}}\,.
\end{equation}
Since the above inequality is true for every realization of $\mathcal{S}_0$ and $\mathcal{S}_1$, and $\mathcal{S}'_0$ is independent of $\mathcal{S}_0$ and $\mathcal{S}_1$, the above upper bound for $\hat A^{\top} \mu^0$ holds with probability $1-\delta_0$ regarding all sampling randomness.     
\end{proof}

\subsubsection{Upper bound for $\hat A^{\top} \Sigma \hat A$}

\begin{lemma}\label{lem:iii}
Let $n = n_0 + n_1$. Suppose there exist some positive constants $c$ and $C$ such that $(n-2)^{1/C} < d < (n-2)$ and $d /(n-2) \leq 1 -c$. 	Then for any positive $\epsilon$ and $D$, there exists an $N(\epsilon, D)$ such that for all $n\geq N(\epsilon, D)$, it holds with probability at least $1 - (n-2)^{-D}$ that, 
\begin{equation}\label{eqn:max_eigen_2}
\hat A^{\top}\Sigma \hat A \leq \frac{1}{\left(1 - \sqrt{\frac{d}{n-2}}\right)^2- \frac{(n-2)^{\epsilon}}{\sqrt{n-2}d^{\frac{1}{6}}}}\lambda_{\max}(\widehat\Sigma)\hat A^{\top} \hat A\,.
\end{equation}
\end{lemma}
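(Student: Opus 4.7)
The plan is to reduce the claim to a soft-edge concentration bound for the smallest eigenvalue of a whitened sample covariance matrix. Let $r := \lambda_{\min}(\Sigma^{-1/2}\widehat\Sigma\,\Sigma^{-1/2})$. Whenever $r>0$, $\widehat\Sigma \succeq r\,\Sigma$, so $\Sigma \preceq r^{-1}\widehat\Sigma$, and therefore
\[
\hat A^{\top}\Sigma\hat A \;\leq\; r^{-1}\hat A^{\top}\widehat\Sigma\hat A \;\leq\; r^{-1}\lambda_{\max}(\widehat\Sigma)\,\hat A^{\top}\hat A.
\]
Matching this against the target inequality, it suffices to show that with probability at least $1-(n-2)^{-D}$,
\[
r \;\geq\; \Bigl(1-\sqrt{d/(n-2)}\Bigr)^{2} - \frac{(n-2)^{\epsilon}}{\sqrt{n-2}\,d^{1/6}}.
\]

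Next, I would identify the distribution of the whitened pooled sample covariance. With $Y^{j}_{i}:=X^{j}_{i}-\mu^{j}$ i.i.d.\ $\mathcal{N}(0,\Sigma)$ within class $j$, Cochran's theorem gives that each centered sum $\sum_{i=1}^{n_{j}}(Y^{j}_{i}-\bar Y^{j})(Y^{j}_{i}-\bar Y^{j})^{\top}$ has the same law as $\sum_{i=1}^{n_{j}-1}U^{(j)}_{i}(U^{(j)}_{i})^{\top}$ for i.i.d.\ $U^{(j)}_{i}\sim\mathcal{N}(0,\Sigma)$. Pooling the two classes yields $(n-2)\widehat\Sigma \sim W_{d}(n-2,\Sigma)$, and hence $(n-2)\,\Sigma^{-1/2}\widehat\Sigma\,\Sigma^{-1/2}$ is a standard white Wishart matrix $W_{d}(n-2,I_{d})$. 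The target thus reduces to a purely random-matrix statement about $\lambda_{\min}\bigl((n-2)^{-1}W_{d}(n-2,I_{d})\bigr)$.

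The remaining and most technical step is a sharp non-asymptotic lower-tail bound at the Marchenko--Pastur soft lower edge $(1-\sqrt{d/(n-2)})^{2}$. Under the regime $d/(n-2)\leq 1-c$ and $d>(n-2)^{1/C}$, this edge is bounded away from zero and edge fluctuations are of Tracy--Widom type on scale $(n-2)^{-1/2}d^{-1/6}$ (equivalently, the Johnstone $m^{-2/3}$ scale with $m=n-2$, modulated by the edge shape factor of order $d^{1/3}$). I would invoke, from the white-Wishart edge literature, a deviation bound of the form
\[
\p\!\left[\lambda_{\min}\!\bigl((n-2)^{-1}W_{d}(n-2,I_{d})\bigr) \leq \bigl(1-\sqrt{d/(n-2)}\bigr)^{2} - t\right] \;\leq\; \exp\!\bigl(-c\,(n-2)^{1/2}d^{1/6}\,t^{3/2}\bigr),
\]
then choose $t=(n-2)^{\epsilon}/(\sqrt{n-2}\,d^{1/6})$; the polynomial slack $(n-2)^{\epsilon}$ converts the stretched-exponential tail into a $(n-2)^{-D}$ tail once $n\geq N(\epsilon,D)$.

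The main obstacle is securing this soft-edge concentration at the precise rate $(n-2)^{-1/2}d^{-1/6}$: classical operator-norm tools such as Davidson--Szarek or Gordon's inequality only deliver deviations of order $(n-2)^{-1/2}\sqrt{\log(n-2)}$ on the singular value itself and so miss the additional $d^{-1/6}$ sharpening at the spectral edge. One must therefore either cite a Tracy--Widom-type distributional bound or edge-rigidity result for white Wishart matrices (in the spirit of Ma 2012 or Pillai--Yin 2014), or carry out a dedicated Laplace-transform/moment computation that resolves the edge scale. The hypotheses $(n-2)^{1/C}<d<n-2$ and $d/(n-2)\leq 1-c$ in the lemma are precisely what is required for such a bound to apply: they keep the soft edge away from zero and ensure $d$ is large enough for the edge-scale approximation to be meaningful.
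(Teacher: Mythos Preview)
Your proposal is correct and follows essentially the same route as the paper: whiten the pooled sample covariance to a standard Wishart with $n-2$ degrees of freedom, use a soft-edge concentration bound for $\lambda_{\min}$ to obtain the PSD ordering $\widehat\Sigma \succeq r\,\Sigma$ with $r=(1-\sqrt{d/(n-2)})^{2}-(n-2)^{\epsilon}/(\sqrt{n-2}\,d^{1/6})$, and conclude. The paper carries out the Wishart identification via an explicit projection-matrix argument (rather than Cochran's theorem) and simply quotes the needed eigenvalue concentration from \cite{Bloemendal.et.al.2015} already in the form $\p\bigl(|\lambda_{\min}-\text{(edge)}|>(n-2)^{\epsilon}/(\sqrt{n-2}\,d^{1/6})\bigr)\leq(n-2)^{-D}$, so you need not reconstruct the polynomial tail from a stretched-exponential yourself; one minor cosmetic difference is that the paper passes through $\lambda_{\max}(\Sigma)\leq r^{-1}\lambda_{\max}(\widehat\Sigma)$ rather than your intermediate $\hat A^{\top}\widehat\Sigma\hat A$, but both yield the same final bound.
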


\begin{proof}
An obvious upper bound for $\hat A^{\top} \Sigma \hat A$ is $\lambda_{\max}(\Sigma) \hat A^{\top}\hat A$, but $\lambda_{\max}(\Sigma)$ is not accessible from samples. Instead, $\lambda_{\max}(\widehat\Sigma)$ is accessible. In the following, we explore the relations between $\lambda_{\max}(\Sigma)$ and $\lambda_{\max}(\widehat\Sigma)$ to derive a sample-based upper bound for $\hat A^{\top}\Sigma \hat A$.  
	
	Let $Z_1, \ldots, Z_{n_0}$ be i.i.d. from $d$-dimensional Gaussian $\mathcal{N}(\mu^0, I_{d\times d})$,  $Z_{n_0+1}, \ldots, Z_{n_0+n_1}$ be i.i.d. from $\mathcal{N}(\mu^1, I_{d\times d})$, and $n = n_0 + n_1$. Further assume that all the $Z_i$'s, $i=1, \ldots, n$, are independent of each other. 
Let 
$$
\widehat U = \frac{1}{n-2}\left(\sum_{i=1}^{n_0}(Z_i - \bar Z_0)(Z_i - \bar Z_0)^{\top} + \sum_{i=n_0+1}^{n}(Z_i - \bar Z_1)(Z_i - \bar Z_1)^{\top}\right)\,,
$$
where $\bar Z_0 = (Z_1 + \ldots + Z_{n_0})/n_0$ and $\bar Z_1 = (Z_{(n_0+1)} + \ldots + Z_{n})/n_1$.  Define $G_i = Z_i - \mu^0$ for $i = 1, \ldots, n_0$ and $G_i = Z_i - \mu^1$ for $i = n_0+1, \ldots, n$. Then $G_i\sim \mathcal{N}(0, I_{d\times d})$ for all $i = 1, \ldots, n$.   Let
$$
\widehat V = \frac{1}{n-2} \left(\sum_{i=1}^{n_0} (G_i-\bar G_0)(G_i - \bar G_0)^{\top}+ \sum_{i=n_0+1}^{n} (G_{i}-\bar G_1)(G_{i} - \bar G_1)^{\top} \right)\,,
$$  
where $\bar G_0 = (G_1+ \ldots + G_{n_0})/{n_0}$ and $\bar G_1 = (G_{n_0+1}+ \ldots + G_{n})/{n_1}$.  Clearly $\widehat U = \widehat V$. 
Let 
$$\textbf{P} = \text{diag}\left(I_{n_0 \times n_0} - \frac{1}{n_0}\textbf{e}_0 \textbf{e}_0^{\top}, I_{n_1 \times n_1} - \frac{1}{n_1}\textbf{e}_1 \textbf{e}_1^{\top}\right)\,,$$ where the entries of $\textbf{e}_0$ and $\textbf{e}_1$ are all equal to $1$, and the lengths are $n_0$ and $n_1$ respectively. Then $\textbf{P}$ is a projection matrix with rank $n-2$.  Note that a projection matrix has eigenvalues all equal to $1$ and $0$, with the number of $1$'s equals to its rank.  Hence, we can decompose $\textbf{P}$ by $\textbf{P} = \textbf{U}\textbf{U}^{\top}$, where $\textbf{U}$ is an $n \times (n-2)$ orthogonal matrix with $\textbf{U}^{\top}\textbf{U} = I_{(n-2) \times (n-2)}$.  

Let $\textbf{G} = (G_1, \ldots, G_{n})$. $\textbf{G}$ is a $d \times n$ matrix and its columns are i.i.d. $d$-dimensional standard multivariate Gaussian. Let $\tilde{\textbf{G}} = \textbf{G}\textbf{U}$, then $\tilde{\textbf{G}}$ is a $d \times (n-2)$ matrix in which the columns are i.i.d. $d$-dimensional standard multivariate Gaussian.  Therefore, we have
$$
\widehat U = \widehat V = \frac{1}{n-2} \textbf{G}\textbf{P}\textbf{G}^{\top}=\frac{1}{n-2}\textbf{G}\textbf{U}\textbf{U}^{\top}\textbf{G}^{\top} = \frac{1}{n-2}\tilde{\textbf{G}}\tilde{\textbf{G}}^{\top}\,.
$$ 
By \cite{Bloemendal.et.al.2015}, we have the following concentration result on minimum eigenvalues: suppose there exist some positive constants $c$ and $C$ such that $(n-2)^{1/C} < d < (n-2)$ and $d/(n-2) \leq 1 -c$.  Then for any positive $\epsilon$ and $D$, there exists an $N(\epsilon, D)$ such that for all $n\geq N(\epsilon, D)$, we have
\begin{equation}\label{eqn:min_eigen}
\p\left(\left|\lambda_{\text{min}}\left(\frac{1}{n-2}\tilde{\textbf{G}}\tilde{\textbf{G}}^{\top}\right) - \left(1 - \sqrt{\frac{d}{n-2}}\right)^2\right| > \frac{(n-2)^{\epsilon}}{\sqrt{n-2}d^{\frac{1}{6}}}\right) \leq (n-2)^{-D}\,.
\end{equation}

This result implies that for $n\geq N(\epsilon, D)$, we have with probability at least $1 - (n-2)^{-D}$, 
$$
I_{d\times d}\leq \frac{1}{\left(1 - \sqrt{\frac{d}{n-2}}\right)^2- \frac{(n-2)^{\epsilon}}{\sqrt{n-2}d^{\frac{1}{6}}} }\left(\frac{1}{n-2}\tilde{\textbf{G}}\tilde{\textbf{G}}^{\top}\right)\,,
$$
where the inequality means ``$A\leq B$ iff $B-A$ is positive semi-definite". This further implies

$$
\widehat \Sigma \overset{d}= \Sigma^{1/2} \widehat U \Sigma^{1/2} =   \Sigma^{1/2} \left(\frac{1}{n-2}\tilde{\textbf{G}}\tilde{\textbf{G}}^{\top}\right)\Sigma^{1/2} \geq \left(\left(1 - \sqrt{\frac{d}{n-2}}\right)^2- \frac{(n-2)^{\epsilon}}{\sqrt{n-2}d^{\frac{1}{6}}}\right)  \Sigma\,,
$$ 
where the notation ``$\overset{d}=$" means equal in distribution.  
Therefore, for $n\geq N(\epsilon, D)$, we have with probability at least $1 - (n-2)^{-D}$, 
\begin{equation}\label{eqn:max_eigen}
\lambda_{\text{max}}(\Sigma)\leq \frac{1}{\left(1 - \sqrt{\frac{d}{n-2}}\right)^2- \frac{(n-2)^{\epsilon}}{\sqrt{n-2}d^{\frac{1}{6}}}}\lambda_{\text{max}}(\widehat\Sigma)\,.
\end{equation}
Hence we can bound $\hat A^{\top}\Sigma \hat A$ by  
\begin{equation}\label{eqn: chain}
\hat A^{\top}\Sigma \hat A \leq \lambda_{\text{max}}(\Sigma) \hat A^{\top} \hat A \leq \frac{1}{\left(1 - \sqrt{\frac{d}{n-2}}\right)^2- \frac{(n-2)^{\epsilon}}{\sqrt{n-2}d^{\frac{1}{6}}}}\lambda_{\text{max}}(\widehat\Sigma)\hat A^{\top} \hat A\,.
\end{equation}

\end{proof}

Lemma \ref{lem:iii}  can be improved for large $d$ and sparse $\hat A$.  To bound $\hat A^{\top}\Sigma \hat A = (\hat\beta^{\text{lasso}})^{\top}\Sigma \hat\beta^{\text{lasso}}$ using results in \cite{Bloemendal.et.al.2015}, we need the condition $d < n-2$. Even when $d$ is moderate, the bound on the right-hand side of \eqref{eqn:max_eigen_2} can be loose. A remedy exists when $\hat\beta^{\text{lasso}}$ is sparse, e.g., $\|\hat\beta^{\text{lasso}}\|_0 = s \ll d$. Concretely, we can replace $\hat\beta^{\text{lasso}}$ in the first inequality of \eqref{eqn: chain}  by its $s$-dimensional sub-vector that consists of the nonzero coordinates, and $\Sigma$ by its $s\times s$ sub-matrix that corresponds to the nonzero elements of $\hat\beta^{\text{lasso}}$.  Then in the multiplicative factor on the right-hand side of \eqref{eqn: chain}, we replace $d$ by a much smaller $s$ and replace the maximum eigenvalue of the $d \times d$ pooled sample covariance matrix by that of the $s\times s$ pooled sample covariance matrix. To summarize,  we achieve a much tighter high probability bound of $\hat A^{\top}\Sigma \hat A$ by exploring the sparsity of $\hat A$. This variant also allows us to handle the situation when $d$ is bigger than the sample sizes. In numerical implementation, when we use $\hat\beta^{\text{lasso}}$ for $\hat A$, this improved bound is what we always use, although for notational simplicity we still denote the threshold of scoring function as $\widehat C^p_{\alpha}$.

\subsubsection{Combining bounds for $\hat A^{\top}\mu^0$ and $\hat A^{\top}\Sigma \hat A$}
The arguments at the beginning of Section \ref{sec::c alpha estimate 2} together with the derived upper bounds for $\hat A^{\top}\mu^0$ and $\hat A^{\top}\Sigma \hat A$ imply the following proposition.  
\begin{proposition}\label{prop:c_p}
Under definitions and conditions in Lemmas \ref{lem:ii} and \ref{lem:iii},  if we set 
\begin{equation}\label{eqn:parametric threshold}
\widehat C^p_{\alpha} = \left(\frac{1}{\left(1 - \sqrt{\frac{d}{n_0+n_1-2}}\right)^2- \frac{(n_0+n_1-2)^{\epsilon}}{\sqrt{n_0+n_1-2}d^{\frac{1}{6}}}}\lambda_{\max}(\widehat\Sigma)\hat A^{\top} \hat A\right)^{\frac{1}{2}}\cdot \Phi^{-1}(1-\alpha) + \bar W - F_{t_{(n_0'-1)}}^{-1}(\delta_0) \frac{S}{\sqrt{n_0'}}\,,
\end{equation}
then $\hat \phi_{\alpha}(x) = \1( \hat A^{\top}x > \widehat C^p_{\alpha})$ satisfies
\begin{equation}\label{eqn:high prob}
\p(R_0(\hat \phi_{\alpha}) \leq \alpha ) > 1 - \delta_0 - (n-2)^{-D}\,.
\end{equation}
\end{proposition}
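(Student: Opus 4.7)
The plan is to combine the two high-probability upper bounds established in Lemmas \ref{lem:ii} and \ref{lem:iii} with a simple monotonicity/stochastic-dominance argument that passes from the threshold inequality to type~I error control. Define the ``good event'' $\mathcal{E}$ as the intersection of the two events appearing in those lemmas: (i) $\hat A^{\top}\mu^0 \leq \bar W - F_{t_{(n_0'-1)}}^{-1}(\delta_0) S/\sqrt{n_0'}$, and (ii) $\hat A^{\top}\Sigma\hat A \leq \lambda_{\max}(\widehat\Sigma)\hat A^{\top}\hat A /\bigl[(1-\sqrt{d/(n-2)})^2 - (n-2)^{\epsilon}/(\sqrt{n-2}d^{1/6})\bigr]$. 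By the union bound, $\p(\mathcal{E}) \geq 1 - \delta_0 - (n-2)^{-D}$, which already accounts for all the randomness allowed to appear in the final statement.

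Next I would observe that, in any realistic application, $\alpha < 1/2$, so $\Phi^{-1}(1-\alpha) > 0$; hence the square-root upper bound in (ii) can be multiplied by $\Phi^{-1}(1-\alpha)$ while preserving the inequality. Adding this to the upper bound in (i) yields, on $\mathcal{E}$,
\[
\sqrt{\hat A^{\top}\Sigma\hat A}\,\Phi^{-1}(1-\alpha) + \hat A^{\top}\mu^0 \;\leq\; \widehat C^p_{\alpha},
\]
where the right-hand side is exactly the definition \eqref{eqn:parametric threshold}. This is the main deterministic step and the one reason we need to be slightly careful about the sign of $\Phi^{-1}(1-\alpha)$.

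Now I would pass from the threshold inequality to type~I error control. Conditioning on the training data $(\mathcal{S}_0,\mathcal{S}_1,\mathcal{S}_0')$ so that $\hat A$ and $\widehat C^p_{\alpha}$ are fixed, a fresh class~$0$ observation $X^0$ satisfies $\hat A^{\top}X^0 \sim \mathcal{N}(\hat A^{\top}\mu^0,\hat A^{\top}\Sigma\hat A)$. Therefore the auxiliary classifier $\tilde\phi_{\alpha}$ introduced in Section \ref{sec::c alpha estimate 2} has $R_0(\tilde\phi_{\alpha}) = \alpha$ identically, while on $\mathcal{E}$ the inequality of the previous paragraph implies $\hat\phi_{\alpha}(x) \leq \tilde\phi_{\alpha}(x)$ pointwise in $x$. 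Consequently $R_0(\hat\phi_{\alpha}) \leq R_0(\tilde\phi_{\alpha}) = \alpha$ whenever $\mathcal{E}$ occurs, and \eqref{eqn:high prob} follows from the probability bound $\p(\mathcal{E}) \geq 1 - \delta_0 - (n-2)^{-D}$.

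No real obstacle remains once the two lemmas are in hand; the only genuine subtlety is the bookkeeping about which randomness each lemma controls. Lemma \ref{lem:ii} conditions on $(\mathcal{S}_0,\mathcal{S}_1)$ and uses the independence of $\mathcal{S}_0'$ to promote a conditional high-probability bound to an unconditional one, while Lemma \ref{lem:iii} controls the randomness in $(\mathcal{S}_0,\mathcal{S}_1)$ through $\widehat\Sigma$; since both events are defined on the joint sample space, the union bound applies directly. Everything else is essentially algebra and the elementary fact that dominating a threshold cannot increase the type~I error of a monotone classifier.
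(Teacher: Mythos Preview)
Your proposal is correct and follows essentially the same route as the paper: the paper simply remarks that the proposition follows from ``the arguments at the beginning of Section~\ref{sec::c alpha estimate 2} together with the derived upper bounds for $\hat A^{\top}\mu^0$ and $\hat A^{\top}\Sigma\hat A$,'' which is precisely the union-bound-plus-monotonicity argument you spell out. Your explicit observation that one needs $\Phi^{-1}(1-\alpha)>0$ (i.e., $\alpha<1/2$) for the square-root bound to be multiplied safely is a detail the paper leaves implicit.
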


 In many applications,  $n_1$ is large, so the sample size requirement of \eqref{eqn:min_eigen} (and hence of \eqref{eqn:max_eigen} and \eqref{eqn:high prob}) on $n$, although impossible to check, can be comfortably assumed. Furthermore, simulation studies in Appendix  \ref{sec::eigen:bound:simu} show that the inequality \eqref{eqn:max_eigen} is often satisfied with probability very close to $1$ even when $n$ is moderate, with the choice of $\epsilon = 1e-3$.  Hence the right-hand side of \eqref{eqn:high prob} is often almost $1-\delta_0$ in practice.

\subsection{The resulting four LDA-based classifiers $\hat \phi_{\alpha}$}\label{sec::four classifiers}
  
Having obtained two estimates for $s^*$ and two  for $C^{**}_{\alpha}$, we construct four classifiers in total: \verb+NP-LDA+: $\1\left((\widehat\Sigma^{-1} \hat \mu_d)^{\top}x > \widehat C_{\alpha}\right)$, \verb+NP-sLDA+: $\1\left((\hat \beta^{\text{lasso}})^{\top}x > \widehat C_{\alpha}\right)$, \verb+pNP-LDA+: $\1\left((\widehat\Sigma^{-1} \hat \mu_d)^{\top}x > \widehat C^p_{\alpha}\right)$, \verb+pNP-sLDA+: $\1\left((\hat \beta^{\text{lasso}})^{\top}x > \widehat C^p_{\alpha}\right)$.  We will suggest the application domains of these four classifiers in the numerical analysis section.

\section{Theoretical analysis}
In the theoretical analysis, we focus on the \verb+NP-sLDA+ classifier, which we denote by $\hat \phi_{k^*}(x) = \1\left((\hat\beta^{\text{lasso}})^{\top}x > \widehat C_{\alpha}\right)$ in this section. We will establish NP oracle inequalities for $\hat \phi_{k^*}$.  The \textit{NP oracle inequalities} were formulated for classifiers under the NP paradigm to reckon the spirit of oracle inequalities in the classical paradigm. They require two properties to hold simultaneously with high probability: i). type I error $R_0(\hat \phi_{k^*})$ is  bounded from above by $\alpha$,  and ii). excess type II error, that is $R_1(\hat \phi_{k^*}) - R_1(\phi^*_{\alpha})$, diminishes as sample sizes increase.  \textit{By construction of the order $k^*$ in} \verb+NP-sLDA+ $\hat \phi_{k^*}$, \textit{the first property is already fulfilled,  so in the following we bound the excess type II error}.  

In the NP classification literature, nonparametric plug-in NP classifiers  constructed in \cite{Tong.2013} and \cite{Zhao2016neyman} were shown to satisfy the NP oracle inequalities.   Both  papers assume bounded feature support $[-1, 1]^d$. Under this assumption, uniform deviation bounds between $f_1 / f_0$ and its nonparametric estimate $\hat f_1 / \hat f_0$  were derived, and such uniform deviation bounds were crucial in bounding the excess type II error. However, as canonical parametric models in classification (such as LDA and QDA)  have unbounded feature support, the development of NP theory under parametric settings cannot bypass the challenges arisen from the unboundedness of feature support.  To address these challenges, we follow a conditioning-on-a-high-probability-set strategy  and formulate conditional marginal assumption and conditional detection condition. For the LDA model, we elaborate on these high-level conditions in terms of specific parameters. In fact, the same conditioning-on-a-high-probability-set strategy can work under the nonparametric model assumptions, such as with a mild finite moment condition. Thus,  we can also relax the bounded feature support assumption for the nonparametric methods. Before presenting the new assumptions and main theorem, we need a few technical lemmas to make the ``conditioning" work.  


\subsection{A few technical lemmas}


With kernel density estimates $\hat f_1$, $\hat f_0$, and an estimate of the threshold level $\widetilde C_{\alpha}$ based on VC inequality, \cite{Tong.2013} constructed a plug-in classifier $\1\{\hat f_1(x)/\hat f_0(x)\geq \widetilde C_{\alpha}\}$ that is of limited practical value unless the feature dimension is small and sample size is large. \cite{Zhao2016neyman} analyzed high-dimensional Naive Bayes models under the NP paradigm and innovated the threshold estimate by invoking order statistics with an explicit analytic formula for the chosen order. We denote that order by $k'$. The order $k^*$ derived in \cite{tong2016np} is a refinement of the order statistics approach to estimate the threshold.  However, although the order $k^*$ is optimal, it does not take an explicit formula and thus is not helpful in bounding the excess type II error.  Interestingly, efforts to approximate $k^*$ analytically for type II error control  leads to $k'$, and so $k'$ will be employed as a bridge in establishing NP oracle inequalities for $\hat \phi_{k^*}$.

To derive an upper bound for the excess type II error, it is essential to bound the deviation between  type I error of $\hat \phi_{k^*}$ and that of the NP oracle $\phi_{
\alpha}^*$.
To achieve this, we first quote the next proposition from \cite{Zhao2016neyman} and then derive a corollary.


\begin{proposition}
\label{prop::kmin}
Given $\delta_0\in(0, 1)$, suppose $n'_0 \geq 4/(\alpha \delta_0)$, let the order $k'$ be defined as follows
\begin{equation}
\label{eq::kmin}
k' \,=\, \left\lceil (n'_0+1)A_{\alpha,\delta_0}(n'_0) \right\rceil,
\end{equation}
where $\lceil z\rceil$ denotes the smallest integer larger than or equal to $z$, and  
\begin{equation*}
\label{eq::A}
A_{\alpha,\delta_0}(n'_0) = \frac{1+2\delta_0(n'_0+2)(1-\alpha) + \sqrt{1+4\delta_0(1-\alpha)\alpha(n'_0+2)}}
{2\left\{\delta_0(n'_0+2)+1 \right\}}\,.
\end{equation*}
Then we have 
$$
\p\left(R_0(\hat \phi_{k'}) > \alpha \right)\leq \delta_0\,.
$$
In other words, the type I error of classifier $\hat \phi_{k'}$ ($\hat \phi_k$ was defined in Proposition \ref{prop}) is bounded from above by $\alpha$ with probability at least $1-\delta_0$\,.
\end{proposition}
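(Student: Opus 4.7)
The plan is to invoke Proposition \ref{prop}, which already yields $\p[R_0(\hat\phi_{k'}) > \alpha] \leq v(k')$, and so reduce the probabilistic claim to the deterministic inequality $v(k') \leq \delta_0$. The key move is to translate the binomial tail $v(k')$ into a Beta distribution tail via the classical binomial--Beta identity
\begin{equation*}
v(k) \;=\; \sum_{j=k}^{n_0'}\binom{n_0'}{j}(1-\alpha)^{j}\alpha^{n_0'-j} \;=\; \p[Y \geq \alpha], \qquad Y \sim \mathrm{Beta}(n_0'-k+1,\,k),
\end{equation*}
and then to apply Chebyshev's inequality to $Y$. Routing through the Beta representation matters here because it produces the factor $(n_0'+2)$ that appears in the denominator of $A_{\alpha,\delta_0}(n_0')$, in contrast to the $n_0'$ one would obtain from applying Chebyshev directly to the binomial.

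Writing $p := \E[Y] = (n_0'-k+1)/(n_0'+1)$ and $\text{Var}(Y) = p(1-p)/(n_0'+2)$, Chebyshev's inequality gives, on the event $\alpha > p$,
\begin{equation*}
v(k) \;\leq\; \frac{p(1-p)}{(n_0'+2)(\alpha - p)^2}\,.
\end{equation*}
Setting $N := n_0'+2$ and requiring the right-hand side to be at most $\delta_0$ is equivalent to the quadratic inequality $(\delta_0 N + 1)\,p^2 - (2\delta_0 N \alpha + 1)\,p + \delta_0 N \alpha^2 \geq 0$, whose roots $p_{\pm}$ have discriminant $1 + 4\delta_0 N \alpha(1-\alpha)$. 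Since we also need $p < \alpha$, only the smaller root $p_{-}$ is relevant, and a direct computation yields $1 - p_{-} = A_{\alpha,\delta_0}(n_0')$. Therefore $p \leq p_{-}$ is equivalent to $k \geq (n_0'+1)\,A_{\alpha,\delta_0}(n_0')$, and the choice $k' = \lceil (n_0'+1)\,A_{\alpha,\delta_0}(n_0') \rceil$ is the smallest integer satisfying it.

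The sample size hypothesis $n_0' \geq 4/(\alpha \delta_0)$ enters at the end to verify (i) $\alpha > p$ under the prescribed $k'$, so that Chebyshev applies in the one-sided form used above, and (ii) $k' \leq n_0'$, so the index is admissible (equivalently, $A_{\alpha,\delta_0}(n_0') < 1$); both reduce to the quantitative estimate $\sqrt{\alpha(1-\alpha)/(\delta_0 N)} < \alpha$, which is a short algebraic consequence of $\delta_0 N \alpha \gtrsim 4$. The main obstacle is not conceptual but algebraic: identifying the correct root of the quadratic, tracking signs through the quadratic formula, and matching the resulting expression term by term against the closed form of $A_{\alpha,\delta_0}(n_0')$ stated in the proposition; everything else is a one-line appeal to Proposition \ref{prop} followed by Chebyshev.
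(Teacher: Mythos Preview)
Your argument is correct. The paper does not supply its own proof of this proposition; it is quoted verbatim from \cite{Zhao2016neyman}, so there is no in-paper derivation to compare against. Your route---the binomial--Beta identity $v(k)=\p[Y\geq\alpha]$ with $Y\sim\mathrm{Beta}(n_0'-k+1,k)$ followed by Chebyshev on $Y$---is precisely the standard derivation behind the cited result, and the algebra you outline recovers the closed form $A_{\alpha,\delta_0}(n_0')$ exactly. One small sharpening: the inequality $p_-<\alpha$ in fact holds for all $\alpha,\delta_0\in(0,1)$ and $n_0'\geq 1$ (both cases $\alpha\lessgtr 1/2$ reduce to $-1<\delta_0 N$), so the sample-size hypothesis $n_0'\geq 4/(\alpha\delta_0)$ is needed only to guarantee $k'\leq n_0'$, not for the one-sided Chebyshev step.
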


\begin{corollary}\label{cor:kstarAndkprime}
Under continuity assumption of the classification scores $T_i$'s (which we always assume in this paper), the order $k^*$ is smaller than or equal to the order $k'$.  
\end{corollary}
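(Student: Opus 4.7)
The plan is to exploit the tightness of the violation bound under continuity, which turns the probabilistic guarantee for $k'$ into a deterministic inequality on $v(\cdot)$ that immediately lets me compare to the definition of $k^*$.

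First, I would recall that by Proposition \ref{prop}, the quantity $v(k) = \sum_{j=k}^{n_0'} \binom{n_0'}{j} (1-\alpha)^j \alpha^{n_0'-j}$ is a valid upper bound on the violation rate $\p(R_0(\hat\phi_k) > \alpha)$, and that this bound is tight when the scores $T_i$ are continuous. Under the continuity assumption in force, this upgrades to the identity
\[
\p(R_0(\hat\phi_k) > \alpha) = v(k)\quad\text{for every } k \in \{1, \ldots, n_0'\}.
\]
Next, I would invoke Proposition \ref{prop::kmin}, which asserts that the analytic order $k'$ satisfies $\p(R_0(\hat\phi_{k'}) > \alpha) \leq \delta_0$. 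Combining this with the identity above yields $v(k') \leq \delta_0$.

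Finally, I would appeal to the definition \eqref{eqn:kstar} of $k^*$ as the smallest index $k \in \{1, \ldots, n_0'\}$ for which $v(k) \leq \delta_0$. Since $k'$ is one such index, minimality forces $k^* \leq k'$, which is the claim. It is worth noting that monotonicity of $v(\cdot)$ (it decreases in $k$) ensures this minimum exists whenever any feasible $k$ does; here Proposition \ref{prop::kmin} provides such a witness, so the definition of $k^*$ is not vacuous.

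There is no real obstacle — the only subtlety is being careful about where continuity is invoked. The bound $v(k) \geq \p(R_0(\hat\phi_k) > \alpha)$ always holds, but the reverse inequality (needed to deduce $v(k') \leq \delta_0$ from the probabilistic statement for $k'$) requires continuity to avoid ties among the order statistics. Once this is in hand, the argument is a one-line comparison to the defining property of $k^*$.
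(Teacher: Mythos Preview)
Your proposal is correct and follows exactly the paper's argument: under continuity, $v(k)$ equals the violation rate, so Proposition~\ref{prop::kmin} gives $v(k')\leq\delta_0$, and the minimality in the definition of $k^*$ forces $k^*\leq k'$. The only addition you make is the remark about existence of $k^*$, which is a nice observation but not needed for the corollary itself.
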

\begin{proof}
Under the continuity assumption of $T_i$'s, $v(k)$ is the exact violation rate of classifier $\hat \phi_k$. By construction, both $v(k')$ and $v(k^*)$ are smaller than or equal to $\delta_0$. 
Since $k^*$ is the smallest $k$ that satisfies $v(k)
\leq \delta_0$, we have $k^*\leq k'$.  
\end{proof}

\begin{lemma}
\label{prop::R0}

Let  $\alpha, \delta_0\in(0,1)$ and $n_0'  \geq 4/(\alpha\delta_0)$. For any $\delta_0' \in (0,1)$,
the distance between $R_0( \hat{\phi}_{k'})$  and $R_0(\phi^*_{\alpha})$ can be bounded as 
\begin{eqnarray*}
\p\{
|R_0( \hat{\phi}_{k'} ) 
- R_0( \phi^*_{\alpha}) | > \xi_{\alpha, \delta_0,n_0'}(\delta_0') \}
\,\leq\, \delta_0'\,,
\end{eqnarray*}
where 
\begin{eqnarray*}
\label{eq::xi}
\xi_{\alpha, \delta_0,n_0'}(\delta_0') = \sqrt{\frac{k'(n_0'+1-k')}{(n_0'+2)(n_0'+1)^2\delta_0'}} + A_{\alpha,\delta_0}(n_0') - (1-\alpha) + \frac{1}{n_0'+1}\,,
\end{eqnarray*}
in which $k'$ and  $A_{\alpha,\delta_0}(n_0')$ are the same as in Proposition \ref{prop::kmin}. 
Moreover, if $n_0' \geq \max(\delta_0^{-2}, \delta_0^{'-2})$, we have 
$
\xi_{\alpha, \delta_0,n_0'}(\delta_0') \leq  ({5}/{2}){n_0'^{-1/4}}.
$
\end{lemma}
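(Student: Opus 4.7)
The plan is to treat $R_0(\hat{\phi}_{k'})$ as a random variable whose exact conditional distribution can be computed in closed form, then combine a concentration bound around its mean with a deterministic bias bound separating that mean from $R_0(\phi^*_{\alpha}) = \alpha$.

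First I would condition on $\mathcal{S}_0 \cup \mathcal{S}_1$ so that the score $\hat s$ is deterministic, and let $F_0$ denote the CDF of $\hat s(X)$ under $X \sim P_0$. By construction and the continuity assumption, $R_0(\hat{\phi}_{k'}) = 1 - F_0(T_{(k')})$. The probability integral transform makes $F_0(T_{(k')})$ the $k'$-th order statistic of $n_0'$ i.i.d.\ Uniform$[0,1]$ variables, which follows $\mathrm{Beta}(k', n_0'+1-k')$ with mean $k'/(n_0'+1)$ and variance $k'(n_0'+1-k')/[(n_0'+1)^2(n_0'+2)]$. Chebyshev's inequality at level $\delta_0'$ then yields, conditionally and (by tower property) unconditionally, with probability at least $1-\delta_0'$,
$$\left|R_0(\hat{\phi}_{k'}) - \frac{n_0'+1-k'}{n_0'+1}\right| \le \sqrt{\frac{k'(n_0'+1-k')}{(n_0'+2)(n_0'+1)^2 \delta_0'}},$$
producing the first summand of $\xi$.

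Next I would control the bias $\alpha - (n_0'+1-k')/(n_0'+1)$. Under the LDA model the score distribution is continuous, so $R_0(\phi^*_{\alpha}) = \alpha$ exactly. Writing $A = A_{\alpha,\delta_0}(n_0')$ and $k' = \lceil (n_0'+1)A\rceil$, we have $A \le k'/(n_0'+1) \le A + 1/(n_0'+1)$. Rearranging the definition of $A$ yields
$$A - (1-\alpha) = \frac{2\alpha - 1 + \sqrt{1 + 4\delta_0\alpha(1-\alpha)(n_0'+2)}}{2(\delta_0(n_0'+2)+1)},$$
which is nonnegative because $\sqrt{1+4\delta_0\alpha(1-\alpha)(n_0'+2)} \ge \max(0, 1-2\alpha)$. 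Hence $(n_0'+1-k')/(n_0'+1) \le \alpha$ and $0 \le \alpha - (n_0'+1-k')/(n_0'+1) \le (A - (1-\alpha)) + 1/(n_0'+1)$. A triangle inequality combining this deterministic bound with the concentration bound of the previous paragraph delivers the first assertion. For the $n_0'^{-1/4}$ rate, I would bound each of the three summands of $\xi$ separately under $n_0' \ge \max(\delta_0^{-2}, \delta_0'^{-2})$: using $k'(n_0'+1-k') \le (n_0'+1)^2/4$ together with $\delta_0' \ge n_0'^{-1/2}$ gives a Chebyshev summand of at most $(1/2) n_0'^{-1/4}$; applying $\sqrt{1+x}\le 1+\sqrt{x}$ to the numerator of $A-(1-\alpha)$ together with $\alpha(1-\alpha)\le 1/4$ and $\delta_0 \ge n_0'^{-1/2}$ produces a bound of order $(\delta_0 n_0')^{-1/2} = O(n_0'^{-1/4})$; the residual $1/(n_0'+1)$ is of smaller order. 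Summing the three contributions and absorbing lower-order terms recovers $\xi \le (5/2) n_0'^{-1/4}$.

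The one non-routine ingredient is the sign check $A_{\alpha,\delta_0}(n_0') \ge 1-\alpha$, since the numerator of $A - (1-\alpha)$ contains the potentially negative term $2\alpha-1$; without this, the bias bound in step two would need an absolute value and the tidy additive form of $\xi$ would be lost. Fortunately this reduces, upon squaring, to the trivially true inequality $4\delta_0\alpha(1-\alpha)(n_0'+2) \ge -4\alpha(1-\alpha)$, after which everything else is standard concentration and algebraic bookkeeping.
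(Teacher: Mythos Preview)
Your proposal is correct. The paper itself does not prove this lemma; it simply states that the result is borrowed from \cite{Zhao2016neyman} and omits the proof. Your reconstruction via the probability integral transform, identifying $1-R_0(\hat\phi_{k'})$ with a $\mathrm{Beta}(k',\,n_0'+1-k')$ order statistic, and then applying Chebyshev's inequality plus a deterministic bias bound, is exactly the argument that the explicit form of $\xi_{\alpha,\delta_0,n_0'}(\delta_0')$ encodes: the square-root term is precisely $\sqrt{\mathrm{Var}/\delta_0'}$ for that Beta law, and the remaining two summands are the ceiling slack and the displayed quantity $A_{\alpha,\delta_0}(n_0')-(1-\alpha)$. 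Your sign check $A_{\alpha,\delta_0}(n_0')\ge 1-\alpha$ is the right observation to make the triangle inequality land without an extra absolute value, and your rate bookkeeping for the $(5/2)n_0'^{-1/4}$ bound is sound once one uses $n_0'\ge 4/(\alpha\delta_0)>4$ to absorb the lower-order $n_0'^{-1/2}$ and $n_0'^{-1}$ terms.
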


Lemma \ref{prop::R0} is borrowed from \cite{Zhao2016neyman}, so its proof is omitted.  Based on Lemma \ref{prop::R0} and Corollary  \ref{cor:kstarAndkprime}, we can derive the following result whose proof is in the Appendix.

\begin{lemma}\label{lem:2}
Under the same assumptions as in Lemma \ref{prop::R0}, the distance between $R_0( \hat{\phi}_{k^*})$  and $R_0(\phi^*_{\alpha})$ can be bounded as 
\begin{eqnarray*}
\p\{
|R_0( \hat{\phi}_{k^*} ) 
- R_0( \phi^*_{\alpha}) | > \xi_{\alpha, \delta_0,n_0'}(\delta_0') \}
\,\leq\, \delta_0 +  \delta_0'\,.
\end{eqnarray*}
\end{lemma}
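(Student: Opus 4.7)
The plan is to transfer Lemma \ref{prop::R0} (which controls the deviation of $R_0(\hat{\phi}_{k'})$ from $R_0(\phi^*_{\alpha})$) over to the order $k^*$, at the cost of an additional $\delta_0$ term in the failure probability. The bridge is Corollary \ref{cor:kstarAndkprime}: since $k^* \leq k'$ almost surely, we have $T_{(k^*)} \leq T_{(k')}$, so $\hat{\phi}_{k^*}(x) = \1(T > T_{(k^*)}) \geq \1(T > T_{(k')}) = \hat{\phi}_{k'}(x)$ pointwise, and hence the deterministic inequality $R_0(\hat{\phi}_{k^*}) \geq R_0(\hat{\phi}_{k'})$.

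Next I would define two bad events and control each separately. Let $A_1 = \{R_0(\hat{\phi}_{k^*}) > \alpha\}$ and $A_2 = \{|R_0(\hat{\phi}_{k'}) - R_0(\phi^*_{\alpha})| > \xi_{\alpha,\delta_0,n_0'}(\delta_0')\}$. By construction of $k^*$ in \eqref{eqn:kstar} together with Proposition \ref{prop}, $\p(A_1) \leq v(k^*) \leq \delta_0$; and by Lemma \ref{prop::R0}, $\p(A_2) \leq \delta_0'$. A union bound yields $\p(A_1 \cup A_2) \leq \delta_0 + \delta_0'$, so it suffices to show that on $(A_1 \cup A_2)^c$ the quantity $|R_0(\hat{\phi}_{k^*}) - R_0(\phi^*_{\alpha})|$ is at most $\xi_{\alpha,\delta_0,n_0'}(\delta_0')$.

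On this good event I would split into the two signed differences. For the lower direction, the deterministic monotonicity $R_0(\hat{\phi}_{k^*}) \geq R_0(\hat{\phi}_{k'})$ combined with $A_2^c$ gives $R_0(\phi^*_{\alpha}) - R_0(\hat{\phi}_{k^*}) \leq R_0(\phi^*_{\alpha}) - R_0(\hat{\phi}_{k'}) \leq \xi_{\alpha,\delta_0,n_0'}(\delta_0')$. For the upper direction, $A_1^c$ gives $R_0(\hat{\phi}_{k^*}) \leq \alpha$; invoking the continuity of the classification scores (the standing assumption throughout the theoretical analysis), which forces $R_0(\phi^*_{\alpha}) = \alpha$ exactly by the Neyman--Pearson Lemma, we conclude $R_0(\hat{\phi}_{k^*}) - R_0(\phi^*_{\alpha}) \leq 0 \leq \xi_{\alpha,\delta_0,n_0'}(\delta_0')$. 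Combining both directions finishes the argument.

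The main subtlety, and what I would flag most carefully in the writeup, is the upper-direction step: it relies on the equality $R_0(\phi^*_{\alpha}) = \alpha$, which holds only because the likelihood ratio $f_1(X)/f_0(X)$ is continuous under $P_0$. Without the continuity assumption one would have only $R_0(\phi^*_{\alpha}) \leq \alpha$, leaving an unbounded residual $\alpha - R_0(\phi^*_{\alpha})$ that is not absorbed by $\xi_{\alpha,\delta_0,n_0'}(\delta_0')$. Everything else is a clean two-sided union bound married to the monotonicity inherited from Corollary \ref{cor:kstarAndkprime}.
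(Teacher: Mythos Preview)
Your proposal is correct and follows essentially the same approach as the paper. The paper defines the good events $\mathcal{E}_0=\{R_0(\hat\phi_{k^*})\le\alpha\}$ and $\mathcal{E}_1=\{|R_0(\hat\phi_{k'})-R_0(\phi^*_\alpha)|\le\xi\}$ (your $A_1^c$ and $A_2^c$), bounds their complements by $\delta_0$ and $\delta_0'$ respectively, and on $\mathcal{E}_0\cap\mathcal{E}_1$ writes the same sandwich $\alpha=R_0(\phi^*_\alpha)\ge R_0(\hat\phi_{k^*})\ge R_0(\hat\phi_{k'})\ge R_0(\phi^*_\alpha)-\xi$; your explicit flagging of the continuity needed for $R_0(\phi^*_\alpha)=\alpha$ is a welcome clarification that the paper uses without comment.
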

Lemma \ref{lem:2} serves as an intermediate  step towards the final ``conditional" version to be elaborated in Lemma  \ref{lem:4}. Moving towards Lemma  \ref{lem:4}, we construct a set $\mathcal{C}\in \mathbb{R}^d$, such that $\mathcal{C}^c$ is ``small". We also show that the uniform deviation between $\hat s$ and $s^*$ on $\mathcal{C}$ is controllable (Lemma \ref{lem:large_prob_set}).  To achieve that, we digress to introduce some more notations.  
Suppose the lassoed linear discriminant analysis (\verb+sLDA+) finds the set $A$,  which is the support of the Bayes rule direction $\beta^{\text{Bayes}}$, we have $\hat{\beta}^{\text{lasso}}_{A^c}=0$ and $\hat{\beta}^{\text{lasso}}_{A} = \hat{\beta}_A$, where $\hat{\beta}_A$ is defined by
$$
(\hat \beta_{A}, \tilde\beta_0) = \argmin_{(\beta, \beta_0)}\left\{n^{-1} \sum_{i=1}^n (y_i - \beta_0 - \sum_{j\in A}x_{ij}\beta_j)^2 + \sum_{j\in A} \lambda  |\beta_j|\right\}\,.
$$
The quantity $\hat\beta_A$  is only for theoretical analysis, as the definition assumes knowledge of the true support set $A$.  The next proposition is a counterpart of Theorem 1 in \cite{mai2012direct}, but due to different sampling schemes, it differs from that theorem and a proof is attached in the Appendix.  


\begin{proposition}\label{thm:1}

Assume $\kappa := \| \Sigma_{A^c A} (\Sigma_{AA})^{-1}\|_{\infty} < 1$ and choose $\lambda$ in the optimization program   \eqref{eqn:slda_optimization} such that $\lambda < \min\{|\beta^*|_{\min}/(2\varphi), \Delta\}$, where $\beta^* = (\Sigma_{AA})^{-1}(\mu^1_{A}-\mu^0_{A})$, $\varphi = \|(\Sigma_{AA})^{-1}\|_{\infty}$ and $\Delta = \|\mu^1_A - \mu^0_A\|_{\infty}$,    then it holds that 

\begin{itemize}
\item[1.] With probability at least $1-\delta^*_1$, $\hat {\beta}_A^{\text{lasso}} = \hat{\beta}_A$ and $\hat{\beta}_{A^c}^{\text{lasso}}=0$, where
$$
\delta^*_1 = \sum_{l=0}^1 2d \exp\left(-c_2 n_l \frac{\lambda^2 (1 - \kappa - 2 \varepsilon \varphi)^2}{16(1 + \kappa)^2}\right) + f(d, s, n_0, n_1, (\kappa+1)\varepsilon \varphi (1 - \varphi \varepsilon)^{-1})\,,
$$
in which  $\varepsilon$ is any positive constant less than $\min [ \varepsilon_0, \lambda (1-\kappa) (4\varphi)^{-1}(\lambda/2 + (1+\kappa)\Delta)^{-1}]$ and $\varepsilon_0$ is some positive constant, and in which
$$
f(d, s, n_0, n_1, \varepsilon) = (d+s) s \exp\left( -\frac{c_1  \varepsilon^2n^2}{4s^2n_0} \right)  + (d+s) s \exp\left( -\frac{c_1 \varepsilon^2n^2 }{4s^2n_1} \right)\,,
$$ 
for some constants $c_1$ and $c_2$.  
\item[2.] With probability at least $1-\delta^*_2$, none of the elements of $\hat \beta_{A}$ is zero, where
$$
\delta^*_2 = \sum_{l=0}^1 2s\exp(-n_l \varepsilon^2 c_2)  + \sum_{l=0}^1 2s^2 \exp\left(- \frac{c_1 \varepsilon^2 n^2}{4 n_l s^2}  \right)\,.
$$
in which $\varepsilon$ is any positive constant less than $\min[\varepsilon_0, \xi(3+\xi)^{-1}/\varphi, \Delta\xi (6 + 2\xi)^{-1}]$, where $\xi = |\beta^*|_{\min}/ (\Delta\varphi)$\,.

\item[3.] For any positive $\varepsilon$ satisfying $\varepsilon < \min\{\varepsilon_0, \lambda (2\varphi \Delta)^{-1}, \lambda\}$, we have
$$
\p\left(\|\hat{\beta}_{A} - \beta^*\|_{\infty}\leq 4\varphi \lambda \right)\geq 1 - \delta^*_2\,.
$$

\end{itemize}

\end{proposition}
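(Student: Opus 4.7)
The plan is to adapt the primal-dual witness (PDW) construction for the lasso to the two-sample scheme here, re-deriving the concentration arguments that in \cite{mai2012direct} relied on i.i.d.\ draws from the joint law of $(X,Y)$. After profiling out the intercept in \eqref{eqn:slda_optimization}, and noting that the coding $y_i\in\{-n/n_0,n/n_1\}$ gives $\bar y=0$, the objective reduces up to additive constants to $\tfrac12\beta^\top\widehat\Sigma\,\beta-\hat\mu_d^{\top}\beta+\lambda\|\beta\|_1$. The population counterpart has unconstrained minimizer on $A$ equal to $\beta^*=\Sigma_{AA}^{-1}(\mu^1_A-\mu^0_A)$, so the analysis becomes a standard lasso support-recovery problem with deterministic ``design'' Gram matrix $\widehat\Sigma$ and ``response'' $\hat\mu_d$.

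For Part~1 (support recovery), I would solve the $A$-restricted program to obtain $\hat\beta_A$, then construct the dual witness on $A^c$ from the full KKT system and verify strict dual feasibility $\|\hat z_{A^c}\|_\infty<1$. Algebraically, $\hat z_{A^c}$ factors through $\widehat\Sigma_{A^cA}\widehat\Sigma_{AA}^{-1}$; expanding around the population quantity via the irrepresentable condition $\kappa<1$ gives $\|\hat z_{A^c}\|_\infty\le\kappa+O\!\big(\varepsilon\varphi/(1-\varepsilon\varphi)\big)$, which stays below $1$ provided $\varepsilon$ lies in the stated range. The ``noise'' events all reduce to $\ell_\infty$-control of $\widehat\Sigma_{AA}-\Sigma_{AA}$, $\widehat\Sigma_{A^cA}-\Sigma_{A^cA}$, and the sample mean differences. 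Because features are conditionally Gaussian with labels fixed by design, each entry-wise deviation is sub-exponential (as a Gaussian quadratic form with Hanson--Wright tails) or sub-Gaussian (for means), and union bounds over $O((d+s)s)$ entries give the rate $(d+s)s\exp\{-c_1\varepsilon^2 n^2/(s^2n_l)\}$ appearing in $f(d,s,n_0,n_1,\varepsilon)$, with the additional Gaussian-mean terms absorbed into the $c_2$-terms.

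For Part~3, subtracting the oracle KKT condition from the population normal equation yields
\[
\hat\beta_A-\beta^*=\widehat\Sigma_{AA}^{-1}\Bigl[(\hat\mu^1_A-\hat\mu^0_A)-(\mu^1_A-\mu^0_A)-(\widehat\Sigma_{AA}-\Sigma_{AA})\beta^*-\lambda\,\mathrm{sign}(\hat\beta_A)\Bigr].
\]
On the concentration event $\|\widehat\Sigma_{AA}-\Sigma_{AA}\|_\infty\le\varepsilon$, the Neumann resolvent identity gives $\|\widehat\Sigma_{AA}^{-1}\|_\infty\le\varphi/(1-\varphi\varepsilon)$; choosing $\varepsilon\le\lambda/(2\varphi\Delta)\wedge\lambda$ makes the bracketed term at most $3\lambda$ and multiplying yields the claimed $4\varphi\lambda$ bound. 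Part~2 then follows by taking $\varepsilon$ small relative to $\xi=|\beta^*|_{\min}/(\Delta\varphi)$ so that the perturbation estimate forces $|\hat\beta_{A,j}|\ge|\beta^*_j|/2>0$ for every $j\in A$; the thresholds $\xi(3+\xi)^{-1}/\varphi$ and $\Delta\xi(6+2\xi)^{-1}$ in the statement arise from balancing the two perturbation terms against $|\beta^*|_{\min}$.

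The main obstacle is the concentration step itself: under the two-sample scheme, entries of $\widehat\Sigma$ are pooled sums of Gaussian quadratic forms with distinct within-class means $\mu^0_j\mu^0_k$ and $\mu^1_j\mu^1_k$, so the bookkeeping differs from \cite{mai2012direct} where $(X,Y)$ is drawn jointly. Hanson--Wright (or a direct $\chi^2$ tail after centering) supplies the required sub-exponential deviations, but tracking the dependence on $n_0$ and $n_1$ separately is what produces the two additive terms in $f(d,s,n_0,n_1,\varepsilon)$. Once these ingredients are in place, the rest is a fairly mechanical adaptation of the Wainwright-style PDW argument, with the final probability $\delta^*_1$ obtained by combining the design-event failure $f(d,s,n_0,n_1,(\kappa+1)\varepsilon\varphi/(1-\varphi\varepsilon))$ with the sample-mean failure $\sum_l 2d\exp(-c_2n_l\lambda^2(1-\kappa-2\varepsilon\varphi)^2/[16(1+\kappa)^2])$ produced by the strict-feasibility slack.
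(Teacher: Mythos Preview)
Your proposal is correct and follows essentially the same route as the paper's own proof: both adapt the primal--dual witness argument of \cite{mai2012direct} to the fixed-$(n_0,n_1)$ sampling scheme, with the only new ingredient being the class-wise concentration bounds for $\widehat\Sigma_{AA}$, $\widehat\Sigma_{A^cA}$, and the mean differences (packaged in the paper as Lemmas~\ref{lem:A1} and~\ref{lem:prod_deviation}). The paper uses a slightly different but equivalent expansion for $\hat\beta_A-\beta^*$ (writing it via $(C^{(n)}_{AA})^{-1}-(\Sigma_{AA})^{-1}$ rather than factoring $(C^{(n)}_{AA})^{-1}$ outside and applying the Neumann bound directly), and its KKT conditions carry a factor $\lambda t_A/2$ rather than $\lambda\,\mathrm{sign}(\hat\beta_A)$ because the loss in \eqref{eqn:slda_optimization} has no $\tfrac12$, but these are cosmetic differences.
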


Aided by Proposition \ref{thm:1},  the next lemma constructs $\mathcal{C}$,  a high probability set under both $P_0$ and $P_1$. Moreover, a high probability bound is derived for the uniform deviation between $\hat s$ and $s^*$ on this set.  


\begin{lemma}\label{lem:large_prob_set}
Suppose $\max\{tr(\Sigma_{AA}), tr(\Sigma_{AA}^2), \|\Sigma_{AA}\|, \|\mu^0_A\|^2, \|\mu^1_A\|^2\}\leq c_0 s$ for some constant $c_0$, where $s = \text{cardinality}(A)$.  
For $\delta_3 = \exp\{- (n_0 \wedge n_1)^{1/2}\}$, there exists some constant  $c_1' > 0$, such that $\mathcal{C} = \{X\in \mathbb{R}^d: \|X_A\| \leq c_1' s^{1/2}(n_0 \wedge n_1)^{1/4} \}$ satisfies $P_0(X\in\mathcal{C})\geq 1-\delta_3$ and $P_1(X\in\mathcal{C})\geq 1-\delta_3$. Moreover, let $\|\hat s - s^* \|_{\infty, \mathcal{C}} := \max_{x\in \mathcal{C}}|\hat s(x) - s^*(x)|$. Then for    $\delta_1 \geq \delta_1^*$ and $\delta_2 \geq \delta_2^*$, where $\delta^*_1$ and $\delta^*_2$ are defined as in Proposition \ref{thm:1},  it holds that with probability at least  $1- \delta_1 - \delta_2$, $$\|\hat s - s^* \|_{\infty, \mathcal{C}}\leq 4c_1' \varphi \lambda s (n_0\wedge n_1)^{1/4}\,.$$
\end{lemma}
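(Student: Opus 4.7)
The plan splits naturally into the two assertions of the lemma. For the first, I would show that on each class, $\|X_A\|$ cannot exceed $c_1' s^{1/2}(n_0 \wedge n_1)^{1/4}$ except with probability $\delta_3 = \exp\{-(n_0 \wedge n_1)^{1/2}\}$. Conditional on $Y=j$, we have $X_A \sim \mathcal{N}(\mu^j_A, \Sigma_{AA})$, so I would write $X_A = \mu^j_A + \Sigma_{AA}^{1/2} Z$ with $Z \sim \mathcal{N}(0, I_s)$ and apply the triangle inequality to obtain $\|X_A\| \le \|\mu^j_A\| + \|\Sigma_{AA}\|^{1/2} \|Z\|$. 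The moment assumption controls the first summand by $\sqrt{c_0 s}$, and for the second I would invoke the standard Gaussian Lipschitz concentration for the 1-Lipschitz map $z \mapsto \|z\|$, namely $P(\|Z\| \ge \mathbb{E}\|Z\| + u) \le e^{-u^2/2}$, together with $\mathbb{E}\|Z\| \le \sqrt{s}$. Choosing $u = \sqrt{2}(n_0 \wedge n_1)^{1/4}$ delivers exactly the tail $\delta_3$, and combining with $\|\Sigma_{AA}\|^{1/2} \le \sqrt{c_0}$ gives a bound of the form $(2\sqrt{c_0} + \sqrt{2c_0})\, s^{1/2}(n_0\wedge n_1)^{1/4}$ (absorbing $s^{1/2}\ge 1$ and $(n_0\wedge n_1)^{1/4}\ge 1$). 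Setting $c_1'$ to be this explicit constant proves both tail bounds simultaneously.

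For the deviation bound I would work on the intersection of the high-probability events from parts 1 and 3 of Proposition \ref{thm:1}, whose complements have total mass at most $\delta_1^* + \delta_2^* \le \delta_1 + \delta_2$. The key observation to set up is that $\beta^*$ in Proposition \ref{thm:1} coincides with $\beta^{\text{Bayes}}_A$: writing the defining relation $\Sigma \beta^{\text{Bayes}} = \mu_d$ blockwise and using that $\beta^{\text{Bayes}}$ is supported on $A$, the $A$-block reduces to $\Sigma_{AA} \beta^{\text{Bayes}}_A = \mu^1_A - \mu^0_A$, giving $\beta^{\text{Bayes}}_A = (\Sigma_{AA})^{-1}(\mu^1_A - \mu^0_A) = \beta^*$. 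On this event, $\hat\beta^{\text{lasso}}$ is supported on $A$ and equals $\hat\beta_A$ there, so for any $x \in \mathcal{C}$,
\[
|\hat s(x) - s^*(x)| = |(\hat\beta_A - \beta^*)^\top x_A| \le \sqrt{s}\,\|\hat\beta_A - \beta^*\|_\infty \cdot \|x_A\|,
\]
after Cauchy--Schwarz and the $\ell_1$--$\ell_\infty$ passage. Part 3 of Proposition \ref{thm:1} supplies $\|\hat\beta_A - \beta^*\|_\infty \le 4\varphi\lambda$ and the definition of $\mathcal{C}$ supplies $\|x_A\| \le c_1' s^{1/2}(n_0 \wedge n_1)^{1/4}$; multiplying yields precisely $4 c_1' \varphi \lambda s (n_0 \wedge n_1)^{1/4}$, uniform in $x \in \mathcal{C}$.

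The only real subtlety I see is matching the decay rate in the first part to the prescribed $\delta_3$. A cruder tail (say, via Chebyshev on the quadratic form $\|X_A\|^2$) would give polynomial rather than the sub-Gaussian $\exp\{-(n_0\wedge n_1)^{1/2}\}$ decay, which would not be enough downstream; Gaussian Lipschitz concentration is what makes the radius $c_1' s^{1/2}(n_0\wedge n_1)^{1/4}$ compatible with $\delta_3$. A minor technical point to flag is that the bound from part 3 is stated for $\hat\beta_A$ (the oracle-support lasso), so one needs the support-recovery event from part 1 to promote it to a bound on $\hat\beta^{\text{lasso}}_A$; this is automatic on the intersection of the two events, and using $\delta_1 \ge \delta_1^*$, $\delta_2 \ge \delta_2^*$ then gives the stated $1 - \delta_1 - \delta_2$ conclusion.
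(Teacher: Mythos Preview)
Your Part~2 argument is essentially the paper's own: both restrict to the intersection of the events from parts~1 and~3 of Proposition~\ref{thm:1}, identify $\beta^* = \beta^{\text{Bayes}}_A$ via the block relation $\Sigma_{AA}\beta^{\text{Bayes}}_A = \mu^1_A-\mu^0_A$, apply the $\ell_\infty$--$\ell_1$ H\"older bound, and pass from $\|x_A\|_1$ to $\sqrt{s}\,\|x_A\|_2$.

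In Part~1 there is a real slip. The hypothesis is $\|\Sigma_{AA}\| \le c_0 s$, not $\le c_0$, so your claim $\|\Sigma_{AA}\|^{1/2} \le \sqrt{c_0}$ is unjustified. With the correct bound $\|\Sigma_{AA}\|^{1/2} \le \sqrt{c_0 s}$, your factorization $\|\Sigma_{AA}^{1/2}Z\| \le \|\Sigma_{AA}\|^{1/2}\|Z\|$ produces a term $\sqrt{c_0 s}\cdot\sqrt{s} = \sqrt{c_0}\,s$ coming from $\mathbb{E}\|Z\|$, and this cannot be absorbed into $c_1' s^{1/2}(n_0\wedge n_1)^{1/4}$ without an unassumed relation between $s$ and the sample size. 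The underlying issue is that factoring out the operator norm overestimates the mean: the correct scale is $\mathbb{E}\|\Sigma_{AA}^{1/2}Z\| \le \sqrt{tr(\Sigma_{AA})} \le \sqrt{c_0 s}$, not $\|\Sigma_{AA}\|^{1/2}\sqrt{s}$.

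Two easy repairs. Either apply Gaussian Lipschitz concentration directly to the $\|\Sigma_{AA}\|^{1/2}$-Lipschitz map $z \mapsto \|\Sigma_{AA}^{1/2}z\|$ (mean $\le \sqrt{tr(\Sigma_{AA})}\le\sqrt{c_0 s}$, deviation $\sqrt{2\|\Sigma_{AA}\|t} \le \sqrt{2c_0 s t}$ at level $e^{-t}$), which gives $\|X_A-\mu^j_A\| \le \text{const}\cdot s^{1/2}(n_0\wedge n_1)^{1/4}$ as required; or, as the paper does, invoke the quadratic-form bound of Lemma~\ref{lem: concentration} (Hsu--Kakade--Zhang), which yields $P_0\bigl(\|X_A-\mu^0_A\|^2 > tr(\Sigma_{AA}) + 2\sqrt{tr(\Sigma_{AA}^2)\,t} + 2\|\Sigma_{AA}\|t\bigr) \le e^{-t}$ and hence $\|X_A-\mu^0_A\|^2 \le c_1'' s t$ directly from the three trace/norm hypotheses. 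Either route restores Part~1; the rest of your plan then goes through.
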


The set $\mathcal{C}$ was constructed with two opposing missions in mind. On one hand, we want to restrict the feature space $\mathbb{R}^d$ to  $\mathcal{C}$ so that the restricted uniform deviation of $\hat s$ from $s^*$ is controlled.  On the other hand, we also want  $\mathcal{C}$ to be sufficiently large, so that $P_0(X\in\mathcal{C}^c)$ and $P_1(X\in\mathcal{C}^c)$ diminish as sample sizes increase.  
The next lemma is implied by Lemma \ref{lem:2} and Lemma \ref{lem:large_prob_set}. 

\begin{lemma}\label{lem:4}
Let $\mathcal{C}$ be defined as in Lemma \ref{lem:large_prob_set}.  Then under the same conditions as in Lemma \ref{prop::R0}, the distance between $R_0( \hat{\phi}_{k^*}|\mathcal{C}) :=  P_0(\hat s(X) \geq \widehat C_{\alpha}| X\in\mathcal{C})$  and $R_0(\phi^*_{\alpha}|\mathcal{C}):=  P_0(s^*(X) \geq C^{**}_{\alpha}| X\in\mathcal{C})$ can be bounded as 
\begin{eqnarray*}
\p\{
|R_0( \hat{\phi}_{k^*} |\mathcal{C}) 
- R_0( \phi^*_{\alpha}|
\mathcal{C}) | > 2[\xi_{\alpha, \delta_0,n_0'}(\delta_0') + \exp\{- (n_0 \wedge n_1)^{1/2}\}] \}
\,\leq\, \delta_0 +  \delta_0'\,,
\end{eqnarray*}
where $\xi_{\alpha, \delta_0,n_0'}(\delta_0') $ is defined in Lemma \ref{prop::R0}.  
\end{lemma}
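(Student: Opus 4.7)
The plan is to derive Lemma \ref{lem:4} from Lemma \ref{lem:2} by the triangle inequality, using $\mathcal{C}$'s large $P_0$-mass to cheaply convert unconditional error distances into conditional ones. Writing
\begin{align*}
|R_0(\hat\phi_{k^*}|\mathcal{C}) - R_0(\phi^*_\alpha|\mathcal{C})|
&\leq |R_0(\hat\phi_{k^*}|\mathcal{C}) - R_0(\hat\phi_{k^*})| \\
&\quad + |R_0(\hat\phi_{k^*}) - R_0(\phi^*_\alpha)| + |R_0(\phi^*_\alpha) - R_0(\phi^*_\alpha|\mathcal{C})|,
\end{align*}
the middle term is directly bounded by $\xi_{\alpha,\delta_0,n_0'}(\delta_0')$ with probability at least $1-\delta_0-\delta_0'$ via Lemma \ref{lem:2}, while the two outer terms will be controlled deterministically.

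For the outer terms, I would establish an elementary measure-theoretic inequality: for any event $A$ and any set $\mathcal{C}$ with $P_0(\mathcal{C})>0$,
$$
|P_0(A) - P_0(A\,|\,X\in\mathcal{C})| \leq \frac{P_0(\mathcal{C}^c)}{P_0(\mathcal{C})} + P_0(\mathcal{C}^c) \leq \frac{2\delta_3}{1-\delta_3},
$$
obtained from the decomposition $P_0(A) = P_0(A,\mathcal{C})+P_0(A,\mathcal{C}^c)$ combined with the definition $P_0(A\,|\,\mathcal{C})=P_0(A,\mathcal{C})/P_0(\mathcal{C})$ and Lemma \ref{lem:large_prob_set}'s bound $P_0(\mathcal{C}^c)\leq \delta_3 := \exp\{-(n_0\wedge n_1)^{1/2}\}$. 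Specializing to $A = \{\hat s(X)\geq \widehat C_\alpha\}$ (with the training samples $\mathcal{S}_0,\mathcal{S}_1,\mathcal{S}_0'$ held fixed, so that $\hat\phi_{k^*}$ is a deterministic function inside $P_0$) and to $A=\{s^*(X)\geq C^{**}_\alpha\}$ controls the first and third outer terms, respectively.

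Assembling the three pieces under the single probabilistic event inherited from Lemma \ref{lem:2} yields, with probability at least $1-\delta_0-\delta_0'$,
$$
|R_0(\hat\phi_{k^*}|\mathcal{C}) - R_0(\phi^*_\alpha|\mathcal{C})| \leq \xi_{\alpha,\delta_0,n_0'}(\delta_0') + \frac{4\delta_3}{1-\delta_3}.
$$
Since $\delta_3$ is exponentially small in $n_0\wedge n_1$ while $\xi_{\alpha,\delta_0,n_0'}(\delta_0')$ decays only polynomially (at rate $n_0'^{-1/4}$ by Lemma \ref{prop::R0}), the factor $1/(1-\delta_3)$ can be absorbed to give the stated bound $2[\xi_{\alpha,\delta_0,n_0'}(\delta_0')+\delta_3]$.

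The argument is essentially measure-theoretic bookkeeping and I do not foresee a conceptual obstacle. The one point requiring care---and the nearest thing to a difficulty---is recognizing that $\mathcal{C}$ is defined in terms of the true support $A$ and the sample sizes $n_0,n_1$, not in terms of the training data; consequently $P_0(\mathcal{C}^c)$ is a deterministic quantity, the two outer terms introduce no further randomness, and the probability budget $\delta_0+\delta_0'$ from Lemma \ref{lem:2} suffices with no additional union bound.
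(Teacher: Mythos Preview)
Your approach is essentially the paper's: both convert the unconditional bound of Lemma~\ref{lem:2} to a conditional one via the large $P_0$-mass of $\mathcal{C}$. The paper writes the unconditional difference via the law of total probability and lower-bounds it by the conditional difference times $P_0(\mathcal{C})$ minus $P_0(\mathcal{C}^c)$, then rearranges; you instead insert the unconditional errors by the triangle inequality and bound each outer term separately. These are two faces of the same computation.

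There is one small gap in your final absorption step. Your rate-comparison argument---``$\delta_3$ is exponentially small while $\xi_{\alpha,\delta_0,n_0'}(\delta_0')$ is only polynomially small''---compares quantities that depend on \emph{different} sample sizes, $n_0\wedge n_1$ versus $n_0'$, and the lemma assumes no relationship between them; in particular $\xi$ can be arbitrarily small relative to $\delta_3$, so you cannot conclude $\xi+4\delta_3/(1-\delta_3)\leq 2(\xi+\delta_3)$ this way. The fix is immediate: your elementary inequality is looser than necessary. From $P_0(A)-P_0(A\mid\mathcal{C})=P_0(A,\mathcal{C}^c)-P_0(A,\mathcal{C})\cdot P_0(\mathcal{C}^c)/P_0(\mathcal{C})$, both terms lie in $[0,P_0(\mathcal{C}^c)]$, so in fact $|P_0(A)-P_0(A\mid\mathcal{C})|\leq P_0(\mathcal{C}^c)\leq\delta_3$. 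With this sharper bound the three pieces give $\xi+2\delta_3\leq 2(\xi+\delta_3)$ trivially, matching the stated constant. Alternatively, the paper simply uses $1/(1-\delta_3)\leq 2$ (from $\delta_3\leq 1/2$) on the single fraction $(\xi+\delta_3)/(1-\delta_3)$ that its decomposition produces.
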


\subsection{Margin assumption and detection condition}

Margin assumption and detection condition are critical theoretical assumptions in \cite{Tong.2013} and \cite{Zhao2016neyman} for bounding excess type II error of the nonparametric NP classifiers constructed in those papers.  
To assist our  proof strategy that divides the $\mathbb{R}^d$ space into a high probability set  (e.g., $\mathcal{C}$ defined in Lemma \ref{lem:large_prob_set}) and its complement, we introduce conditional versions of these assumptions.  


\begin{definition}[conditional margin assumption]\label{Def: margin_Anqi}
A function $f(\cdot)$ is said to satisfy conditional margin assumption restricted to $\mathcal{C}^*$ of order $\bar\gamma$ with respect to probability distribution $P$ (i.e., $X\sim P$) at the level $C^{*}$ if there exists a positive constant $M_0$, such that for any $\delta\geq0$,
$$
P\{|f(X)-C^*|\leq \delta | X\in \mathcal{C}^*\} \,\leq\, M_0\delta^{\bar\gamma}\,.
$$
\end{definition}

The unconditional version of such an assumption was first introduced in \cite{Polonik95}. In the classical binary classification framework, \citet{MamTsy99}  proposed a similar condition named ``margin condition" by requiring most data to be away from the optimal decision boundary and this condition has become a common assumption in classification literature.   In the classical classification paradigm, Definition \ref{Def: margin_Anqi} reduces to the margin condition by taking $f=\eta$, $\mathcal{C}^* = \text{support}(X)$ and $C^* =1/2$, with $\{x:|f(x) - C^*|=0\}=\{x:\eta(x)=1/2\}$ giving the decision boundary of the classical Bayes classifier. For a $\mathcal{C}^*$ with nontrivial probability, the conditional margin assumption is weaker than the unconditional version.  For example, suppose $P(X\in \mathcal{C}^*)\geq 1/2$, then the condition $P\{|f(X)-C^*|\leq \delta \} \leq2M_0\delta^{\bar\gamma}$ would imply the conditional margin assumption in view of the Bayes Theorem.

Definition \ref{Def: margin_Anqi} is a high level assumption. In view of explicit Gaussian modeling assumptions,  it is preferable to derive it based on more elementary assumptions on $\mu^0$, $\mu^1$ and $\Sigma$, for our choices of $f$,  $P$, $C^*$ and $\mathcal{C}^*$. Recall that the NP oracle classifier can be written as 
$$
\phi^*_{\alpha}(x)=\1\{(\Sigma^{-1}\mu_d)^{\top}x > C_{\alpha}^{**}\}\,.
$$
Here we take $f(x) = s^*(x) = (\Sigma^{-1}\mu_d)^{\top}x$, $C^* = C_{\alpha}^{**}$, $P= P_0$, and $\mathcal{C}^* = \mathcal{C}$ in Lemma \ref{lem:large_prob_set}. When $X\sim \mathcal{N}(\mu^0, \Sigma)$, $(\Sigma^{-1}\mu_d)^{\top} X\sim \mathcal{N}(\mu_d^{\top}\Sigma^{-1}\mu^0, \mu_d^{\top}\Sigma^{-1}\mu_d)$.  
%
%
Lemma \ref{lem:large_prob_set} guarantees that for $\delta_3 = \exp\{- (n_0 \wedge n_1)^{1/2}\}$,  $P_0(X\in\mathcal{C})\geq 1- \delta_3$. Moreover,   
\begin{eqnarray*}
&&P_0\left( |s^*(X) - C_{\alpha}^{**}|\leq \delta |X\in\mathcal{C}\right)\\
&\leq & P_0\left( C_{\alpha}^{**} - \delta \leq (\Sigma^{-1}\mu_d)^{\top} X \leq C_{\alpha}^{**} + \delta\right)/(1 - \delta_3)\\
 &= & [\Phi\left(U\right) - \Phi\left(L\right)] /(1 - \delta_3)\,,
\end{eqnarray*}
where $\Phi$ is the cumulative   distribution function of the standard normal distribution, $U = (C_{\alpha}^{**} + \delta - \mu_d^{\top}\Sigma^{-1}\mu^0)/ \sqrt{\mu_d^{\top}\Sigma^{-1}\mu_d}$,  and $L = (C_{\alpha}^{**} - \delta - \mu_d^{\top}\Sigma^{-1}\mu^0)/ \sqrt{\mu_d^{\top}\Sigma^{-1}\mu_d}.$ By the mean value theorem, we have
$$
\Phi\left(U\right) - \Phi\left(L\right) = \phi (z) (U - L) = \phi(z) \frac{2\delta}{\sqrt{\mu_d^{\top}\Sigma^{-1}\mu_d}}\,,
$$
where $\phi$ is the probability distribution function of the standard normal distribution, and $z$ is in $[L, U]$. Clearly $\phi$ is bounded from above by $\phi(0)$. Hence, under the assumptions of Lemma \ref{lem:large_prob_set}, if we additionally assume that $\mu_d^{\top}\Sigma^{-1}\mu_d\geq C$ for some universal positive constant $C$, the conditional margin  assumption is met with the restricted set $\mathcal{C}$, the constant $M_0 = 2\phi(0)/(\sqrt{C}(1-\delta_3))$ and $\bar \gamma = 1$. Since $\delta_3 < 1/2$, we can take $M_0 = 4 \phi(0)/\sqrt{C}$.  

\begin{assumption}\label{assumption:1}
i). $\max\{tr(\Sigma_{AA}), tr(\Sigma_{AA}^2), \|\Sigma_{AA}\|, \|\mu^0_A\|^2, \|\mu^1_A\|^2\}\leq c_0 s$ for some constant $c_0$, where $s = \text{cardinality}(A)$, and $A = \{j: \{ \Sigma^{-1}\mu_d \}_j\neq 0 \}$;   ii). $\mu_d^{\top}\Sigma^{-1}\mu_d\geq C$ for some universal positive constant $C$;  iii). the set $\mathcal{C}$ is defined as in Lemma \ref{lem:large_prob_set} . 
\end{assumption}
\begin{remark}
Under Assumption \ref{assumption:1}, the function $s^*(\cdot)$ satisfies the conditional margin assumption restricted to $\mathcal{C}$ of order $\bar\gamma = 1$ with respect to probability distribution $P_0$ at the level $C^{**}_{\alpha}$.  In addition, the constant $M_0$ can be taken as $M_0 = 4 \phi(0)/\sqrt{C}$.   
\end{remark}

Unlike the classical paradigm where the optimal threshold $1/2$ on regression function is known, the optimal threshold level in the NP paradigm is unknown and needs to be estimated, suggesting the necessity of having sufficient data around the decision boundary to detect it. This concern motivated \cite{Tong.2013} to formulate a detection condition that works as an opposite force to the margin assumption, and \cite{Zhao2016neyman} improved upon it and proved its necessity in bounding excess type II error of an NP classifier.  However, formulating a transparent  detection condition for feature spaces of unbounded support is subtle: to generalize the detection condition in the same way as we generalize the margin assumption to a conditional version, it is not obvious what elementary general assumptions one should impose on the $\mu^0$, $\mu^1$ and $\Sigma$.  The good side is that we are able to  establish explicit  conditions for $s\leq 2$, aided by the literature on the truncated normal distribution.     Also, we need a two-sided detection condition as in \cite{Tong.2013}, because the technique in \cite{Zhao2016neyman}  to get rid of one side does not apply in the unbounded feature support situation.

\begin{definition}[conditional detection condition] 
\label{def::detection}
A function $f(\cdot)$ is said to satisfy conditional detection condition restricted to $\mathcal{C}^*$ of order $\uderbar \gamma$ with respect to $P$ (i.e., $X\sim P$) at level $(C^*,\delta^*)$ 
if there exists a positive constant $M_1$, such that for any $\delta \in (0,\delta^*)$, 
$$P\{C^*\leq f(X)  \leq C^* + \delta | X\in\mathcal{C}^*\} \wedge   P\{C^* - \delta \leq f(X)  \leq C^* | X\in\mathcal{C}^*\}\,\geq\, M_1 \delta^{\uderbar\gamma}\,.$$
\end{definition}
Unlike the conditional margin assumption, the conditional detection condition is stronger than its unconditional counterpart, in view of the Bayes Theorem.  Although we do not have a proof of the necessity for the conditional detection condition, much efforts to bound excess type II error without it failed.

\begin{assumption}\label{assumption:2}
The function $s^*(\cdot)$ satisfies conditional detection condition restricted to $\mathcal{C}$ (defined in Lemma \ref{lem:large_prob_set}) of order $\uderbar\gamma\geq 1$ with respect to $P_0$ at the level $(C^{**}_{\alpha}, \delta^*)$.  
\end{assumption}

Proposition \ref{prop:conddetec} in the Appendix shows that under restrictive settings ($s\leq 2$), Assumption \ref{assumption:2} can be implied by more elementary assumptions on the parameters of the LDA model.

\subsection{NP oracle inequalities}
Having introduced the technical assumptions and lemmas, we present the main theorem.  


\begin{theorem}\label{thm:main}
Suppose Assumptions  \ref{assumption:1} and \ref{assumption:2}, and the assumptions for Lemmas 1-4 hold. Further suppose 
$n_0'\geq \max\{4/(\alpha\alpha_0), \delta_0^{-2}, (\delta'_0)^{-2}, (\frac{1}{10} M_1\delta^{*\uderbar{\gamma}})^{-4}\}$, 
$n_0 \wedge n_1 \geq [- \log (M_1 \delta^{*\uderbar{\gamma}}/4)]^2$, and $C_{\alpha}$ and $\mu_a^{\top} \Sigma^{-1} \mu_d$ are bounded from above and below. For $\delta_0, \delta_0' >0$, $\delta_1 \geq \delta_1^*$ and $\delta_2\geq \delta_2^*$,  there exist constants $\bar{c}_1$, $\bar{c}_2$ and $\bar{c}_3$ such that, with probability at least $1 - \delta_0 - \delta_0' - \delta_1 - \delta_2$, it holds that
\begin{eqnarray*}
&\text{(I)}& R_0(\hat \phi_{k^*})\leq \alpha\,,\\
&\text{(II)}& R_1(\hat \phi_{k^*}) - R_1(\phi^*_{\alpha})
\leq \bar{c}_1 (n_0')^{- \frac{1}{4}(\frac{1+\bar\gamma}{\uderbar{\gamma}}\wedge 1)} + \bar{c}_2  (\lambda s)^{1 + \bar\gamma} (n_0\wedge n_1)^{\frac{1 + \bar\gamma}{4}} \\
&&\quad\quad\quad\quad\quad\quad\quad\quad\quad+ \bar{c}_3 \exp\left\{-(n_0\wedge n_1)^{\frac{1}{2}}(\frac{1+\bar\gamma}{\uderbar{\gamma}}\wedge 1)\right\}\,.
\end{eqnarray*}

\end{theorem}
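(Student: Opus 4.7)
Part (I) is essentially by construction: the order $k^*$ defined in \eqref{eqn:kstar} satisfies $v(k^*)\le\delta_0$, so Proposition~\ref{prop} immediately gives $R_0(\hat\phi_{k^*})\le\alpha$ with probability at least $1-\delta_0$. All effort goes into (II). My cornerstone for (II) is the identity, obtained by decomposing $\phi^*_\alpha-\hat\phi_{k^*}$ on the disagreement sets $A_1=\{\phi^*_\alpha=1,\hat\phi_{k^*}=0\}$, $A_2=\{\phi^*_\alpha=0,\hat\phi_{k^*}=1\}$ and using $\phi^*_\alpha=\1(f_1>C_\alpha f_0)$,
\begin{equation*}
R_1(\hat\phi_{k^*})-R_1(\phi^*_\alpha)\,=\,C_\alpha[\alpha-R_0(\hat\phi_{k^*})]+\int_{A_1}(f_1-C_\alpha f_0)+\int_{A_2}(C_\alpha f_0-f_1)\,.
\end{equation*}
In the LDA model, $f_1-C_\alpha f_0=C_\alpha f_0\bigl(e^{s^*(x)-C^{**}_\alpha}-1\bigr)$, so $|f_1-C_\alpha f_0|\le C_\alpha e^\delta f_0\cdot|s^*-C^{**}_\alpha|$ on a $\delta$-neighborhood of the oracle boundary. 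This is the mechanism that will lift the margin exponent $\bar\gamma$ to $1+\bar\gamma$ in the final rate.

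I would then work on the event $\mathcal{E}$ of probability at least $1-\delta_0-\delta_0'-\delta_1-\delta_2$ on which Lemma~\ref{prop::R0} (via Corollary~\ref{cor:kstarAndkprime}) gives $|R_0(\hat\phi_{k^*})-\alpha|\le\xi$, Lemma~\ref{lem:4} gives $|R_0(\hat\phi_{k^*}|\mathcal{C})-\alpha|\le 2(\xi+\delta_3)$ with $\delta_3=e^{-(n_0\wedge n_1)^{1/2}}$, and Lemma~\ref{lem:large_prob_set} gives $\|\hat s-s^*\|_{\infty,\mathcal{C}}\le\epsilon_1:=4c'_1\varphi\lambda s(n_0\wedge n_1)^{1/4}$. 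Next I convert the conditional $R_0$-gap into a threshold gap: since the uniform deviation sandwiches $R_0(\hat\phi_{k^*}|\mathcal{C})$ between $P_0(s^*>\widehat C_\alpha\pm\epsilon_1|\mathcal{C})$, subtracting $P_0(s^*>C^{**}_\alpha|\mathcal{C})$ and invoking Assumption~\ref{assumption:2} yields $|\widehat C_\alpha-C^{**}_\alpha|\le\epsilon_1+(2(\xi+\delta_3)/M_1)^{1/\uderbar\gamma}$; the sample-size conditions $n_0'\ge(M_1\delta^{*\uderbar\gamma}/10)^{-4}$ and $n_0\wedge n_1\ge[-\log(M_1\delta^{*\uderbar\gamma}/4)]^2$ are precisely what is needed to keep this quantity inside the detection condition's validity window $(0,\delta^*)$. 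Plugging into the cornerstone identity, the first summand is $\le C_\alpha\xi\lesssim(n_0')^{-1/4}$; for the integrals, the disagreement sets intersected with $\mathcal{C}$ sit in a width-$\delta$ band around $C^{**}_\alpha$ with $\delta:=|\widehat C_\alpha-C^{**}_\alpha|+\epsilon_1$, so the linear-in-$\delta$ bound on $|f_1-C_\alpha f_0|$ combined with the conditional margin Assumption~\ref{assumption:1} gives $\int_{A_i\cap\mathcal{C}}|f_1-C_\alpha f_0|\lesssim C_\alpha M_0\delta^{1+\bar\gamma}$, while the $\mathcal{C}^c$-pieces contribute $\lesssim(1+C_\alpha)\delta_3$. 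Using $\delta^{1+\bar\gamma}\lesssim\epsilon_1^{1+\bar\gamma}+(\xi+\delta_3)^{(1+\bar\gamma)/\uderbar\gamma}$ together with $\xi\le\tfrac{5}{2}(n_0')^{-1/4}$ recovers the three stated terms: $\epsilon_1^{1+\bar\gamma}$ becomes the $(\lambda s)^{1+\bar\gamma}(n_0\wedge n_1)^{(1+\bar\gamma)/4}$ piece; the slower of $C_\alpha\xi$ (rate $(n_0')^{-1/4}$) and $\xi^{(1+\bar\gamma)/\uderbar\gamma}$ (rate $(n_0')^{-(1+\bar\gamma)/(4\uderbar\gamma)}$) yields the $(n_0')^{-\frac{1}{4}((1+\bar\gamma)/\uderbar\gamma\wedge 1)}$ piece; and the $\delta_3$-powers combine into the exponential remainder.

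The main obstacle I anticipate is the bookkeeping around conditioning on $\mathcal{C}$: ensuring that $\mathcal{C}^c$-leakage into $P_1$-integrals is charged only to the exponential $\delta_3$ term and not contaminated into the polynomial ones, that the detection condition is invoked strictly inside its validity window $(0,\delta^*)$ (which is exactly why the seemingly technical sample-size hypotheses appear in the theorem), and that the $(1+\bar\gamma)/\uderbar\gamma\wedge 1$ cap emerges cleanly from the competition between the direct slack $C_\alpha[\alpha-R_0(\hat\phi_{k^*})]$ and the detection-lifted margin contribution, without losing constants through careless uses of $(a+b)^p\le 2^{p\vee 0}(a^p+b^p)$. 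The boundedness hypotheses on $C_\alpha$ and $\mu_a^\top\Sigma^{-1}\mu_d$ assumed in the theorem will be used throughout to keep multiplicative factors such as $C_\alpha e^\delta$ under control.
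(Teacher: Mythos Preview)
Your proposal is correct and follows essentially the same route as the paper's proof: the identical excess-risk decomposition (the paper writes it as $P_1(\widehat G)-P_1(G^*)=\int_{\widehat G\setminus G^*}|r-C_\alpha|dP_0+\int_{G^*\setminus\widehat G}|r-C_\alpha|dP_0+C_\alpha\{R_0(\phi^*_\alpha)-R_0(\hat\phi_{k^*})\}$), the same $\mathcal{C}/\mathcal{C}^c$ split, the same conversion of the conditional $R_0$-gap into a threshold gap via Assumption~\ref{assumption:2}, and the same use of the margin assumption to turn band-width into $\delta^{1+\bar\gamma}$. Two minor points: in Lemma~\ref{lem:4} the conditional reference value is $R_0(\phi^*_\alpha|\mathcal{C})$, not $\alpha$ itself (harmless, since only $\Delta R_{0,\mathcal{C}}$ enters your argument); and for the lower bound on $\widehat C_\alpha$ the paper actually splits on the sign of $\bar\Delta R_{0,\mathcal{C}}$ and uses the margin assumption in one subcase, whereas your two-sided use of the detection condition also works and yields the same final bound.
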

Theorem \ref{thm:main} establishes the NP oracle inequalities for the \verb+NP-sLDA+ classifier $\hat \phi_{k^*}$.  Note that the upper bound for excess type II error does not contain the overall feature dimensionality $d$ explicitly. However, the indirect dependency is two-fold: first, the choice of $\lambda$ might depend on $d$; second, the minimum requirements (i.e., lower bounds) for $\delta_1$ and $\delta_2$, which are $\delta_1^*$ and $\delta_2^*$ defined in Proposition \ref{thm:1}, depend on $d$. 

By Assumption \ref{assumption:1}, $\bar \gamma = 1$ and by Proposition \ref{prop:conddetec} in the Appendix, $\uderbar{\gamma} = 1$ under certain conditions.  Take the special case $\bar \gamma = \uderbar{\gamma}= 1$  and take $\lambda \sim \sqrt{\frac{\log d}{(n_0+n_1)}}$, the upper bound on the excess type II error can be simplified to $B_1 = \bar c_1 (n'_0)^{-1/4}+ \bar c_2 s^2\frac{\log d}{\sqrt{n_0 + n_1}} + \bar c_3 \exp\{-(n_0 \wedge n_1)^{1/2}\}$, where $\bar c_1, \bar c_2, \bar c_3$ are generic constants. The upper bound on excess type II errors when $\bar \gamma = \uderbar \gamma = 1$ in \cite{Zhao2016neyman} [nonparametric plug-in estimators for densities, high-dimensional settings, feature independence assumption] is $B_2 = \bar c_1 (n'_0)^{-1/4} + \bar c_2 s^2\left(\frac{\log n_0}{n_0}\right)^{2\beta/(2\beta+1)} + \bar c_3s^2\left(\frac{\log n_1}{n_1}\right)^{2\beta/(2\beta+1)}$,  where $\beta$ is a smoothness parameter of the kernel function and the densities.  Note that the rate about the left-out class 0 sample size $n'_0$ (for threshold estimate) is the same.   This is because although \verb+NP-sLDA+ relies on an optimal order $k^*$ from the NP umbrella algorithm  while the classifier in \cite{Zhao2016neyman} uses the order $k'$ specified in equation (19) of the current paper, analytic approximation of $k^*$ falls back to $k'$. However, we should note that $B_1$ and $B_2$ are not directly comparable due to the simplifying feature independence assumption and a screening stage in \cite{Zhao2016neyman}. Concretely, there is a marginal screening step before constructing the scoring function and threshold, and that requires reserving some class 0 and class 1 observations.  The sample sizes of these reserved observations, as well as the full feature space dimensionality $d$,  do not enter the bound $B_2$ because the theory part of \cite{Zhao2016neyman} assumes a minimum sample size requirement on the observations for screening in terms of $d$. So the effect of the screening only enters as a probability compromise as opposed to an extra term in the upper bound.  Moreover, after the marginal screening step, \cite{Zhao2016neyman} effectively dealt with $s$ one-dimensional problems due to the feature independence assumption. This explains the appearance of the typical exponent $2\beta/(2\beta+1)$ in one-dimensional nonparametric estimation. Without the feature independence assumption, we would see the exponent as $2\beta/(2\beta + s)$. For a typical $\beta$ value $\beta = 2$ and a moderate $s$ value $s = 20$, we have $2\beta/(2\beta + s) = 1/6$ which is smaller than $1/2$. Hence without the feature independence assumption, the second and third terms in $B_2$ could represent slower rates in terms of the sample sizes $n_0$ and $n_1$ compared to the counterparts in $B_1$. Overall, if $n_0, n_0', n_1 \sim n$, taking $\beta=2$ and $s=20$ without the feature independence assumption, we have $B_2\sim (\log n/ n)^{1/6}$ while $B_1 \sim n^{-1/4}$; the parametric assumptions result in a better rate in this case.  Moreover, in important applications such as severe disease diagnosis, the sample size $n_0$ is much smaller than $n_1$. In $B_1$, these sample sizes appear together as $n_0 + n_1$, but in $B_2$, $n_0$ appears alone as in $(\log n_0 / n_0)^{2\beta/(2\beta+1)}$, which is likely much larger than both $(\log n_1 / n_1)^{2\beta/(2\beta+1)}$ and $1/\sqrt{n_0+ n_1}$,   considering the $n_0 \ll n_1$ situation.

\section{Data-adaptive sample splitting scheme\label{sec::adaptive-split}}

In practice, researchers and practitioners  are not given data as separate sets $\mathcal{S}_0$, $\mathcal{S}_0'$ and $\mathcal{S}_1$.  Instead, they have a single dataset  $\mathcal{S}$ that consists of mixed class $0$ and class $1$ observations. More class $0$ observations to better train the base algorithm and more class $0$ observations to provide more candidates for threshold estimate each has its own merits.     Hence how to split the class $0$ observations into two parts, one to train the base algorithm and the other to estimate the score threshold, is far from intuitive. \cite{tong2016np} adopts a half-half split for class $0$ and ignores this issue in the development of the NP umbrella algorithm, as that paper focuses on the type I error violation rate control.   

Now switching the focus to type II error, we propose a data-adaptive splitting scheme that universally enhances the power (i.e., reduces type II error) of NP classifiers, as demonstrated in the subsequent numerical studies.   The procedure is to choose a split proportion $\tau$ according to rankings of $K$-fold cross-validated type II errors. Concretely, for each split proportion candidate  $\tau\in \{.1, .2, \cdots, .9\}$, the following steps are implemented.   
\begin{enumerate}
\item Randomly split class $1$ observations into $K$-folds. 
\item Use all class $0$ observations and $K-1$ folds of class $1$ observations to train an NP classifier. For class $0$ observations, $\tau$ proportion is used to train the base algorithm, and $1-\tau$ proportion for threshold estimate.  
\item For this classifier, calculate its classification error on the validation fold of the class $1$ observations (type II error). 
\item Repeat steps (2) and (3) for $K$ times, with each of the $K$ folds used exactly once as the validation data.  Compute the mean of type II errors in step (3), and denote it by $e(\tau)$. 
\end{enumerate}
Our choice of the split proportion is 
$$\tau_{\min}=\argmin_{\tau\in\{.1, \cdots, .9\}} e(\tau)\,.$$ 
Note that $\tau_{\min}$ not only depends on the dataset $\mathcal{S}$, but also on the base algorithm and the thresholding algorithm one uses, as well as on the user-specified $\alpha$ and $\delta_0$.  
Merits of this adaptive splitting scheme will be revealed in the simulation section. Here we elaborate on how to reconcile this adaptive scheme with the violation rate control objective.     The type I error violation rate control was proved based on a fixed split proportion of class $0$ observations, so will the adaptive splitting scheme be overly aggressive on type II error such that we can no longer keep the type I error violation rate under control? If for each realization (among infinite realizations) of the mixed sample $\mathcal{S}$, we do adaptive splitting on class $0$ observations before implementing  \verb+NP-sLDA+ $\hat \phi_{k^*}$ (or other NP classifiers), then the overall procedure indeed does not lead to a classifier with type I error violation rate controlled under $\delta_0$. However, this is not how we think about this process; instead, we only adaptively split for one realization of $\mathcal{S}$, getting a split proportion $\hat \tau$, and then fix $\hat \tau$ in all rest realizations of $\mathcal{S}$. This implementation of the overall procedure keeps the type I error violation rate under control.



\section{Numerical analysis}


Through extensive simulations and real data analysis, we study the performance of  \verb+NP-LDA+, \verb+NP-sLDA+, \verb+pNP-LDA+ and \verb+pNP-sLDA+. In addition, we will study the new adaptive sample splitting scheme. In this section, $N_0$ denotes the total class $0$ training sample size (we do not use $n_0$ and $n_0'$ here, as class $0$ observations are not assumed to be pre-divided into two parts),  $n_1$ denotes the class $1$ training sample size, and $N = N_0 + n_1$ denotes the total sample size.
\subsection{Simulation studies under low-dimensional settings}
In this subsection, we consider the low-dimensional settings with two examples.  In particular, we would like to compare the performance of four NP based methods: \verb+NP-LDA+, \verb+NP-sLDA+, \verb+pNP-LDA+ and \verb+pNP-sLDA+. In all NP methods, $\tau$, the class $0$ split proportion, is fixed at $0.5$. In every simulation setting, the experiments are repeated $1,000$  times. 

\begin{figure}[t]
\caption{Example 1. 
Type II error of \texttt{NP-LDA}, \texttt{NP-sLDA}, \texttt{pNP-LDA} and \texttt{pNP-sLDA}  vs. $N_0$ or $d$. The dashed red line represent the NP oracle.  \label{fig::example 1}}
 \begin{subfigure}[t]{0.5\textwidth}
        \centering
        \includegraphics[scale=0.39]{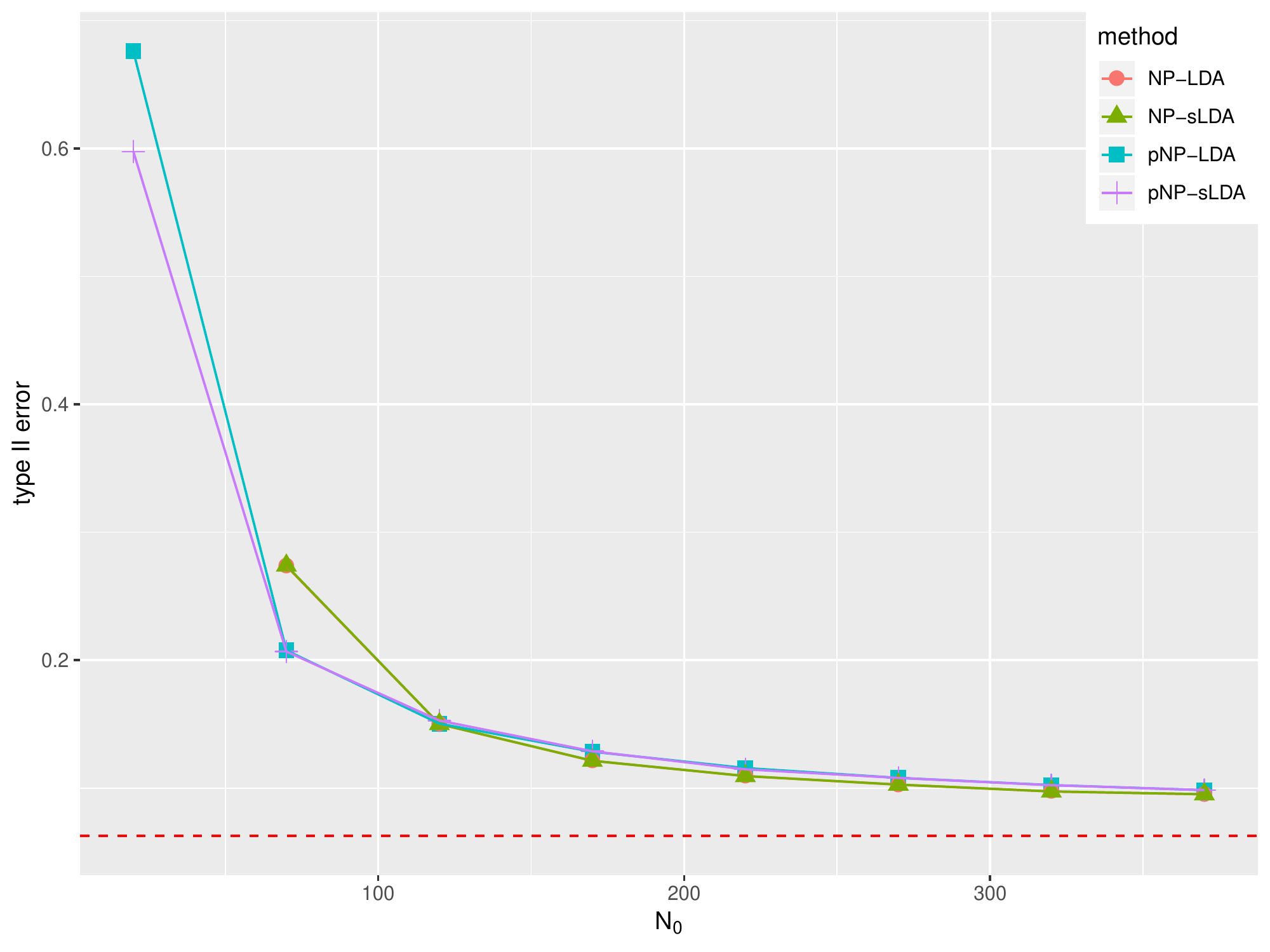}
        \caption{Example 1a}
    \end{subfigure}%
    \begin{subfigure}[t]{0.5\textwidth}
        \centering
        \includegraphics[scale=0.39]{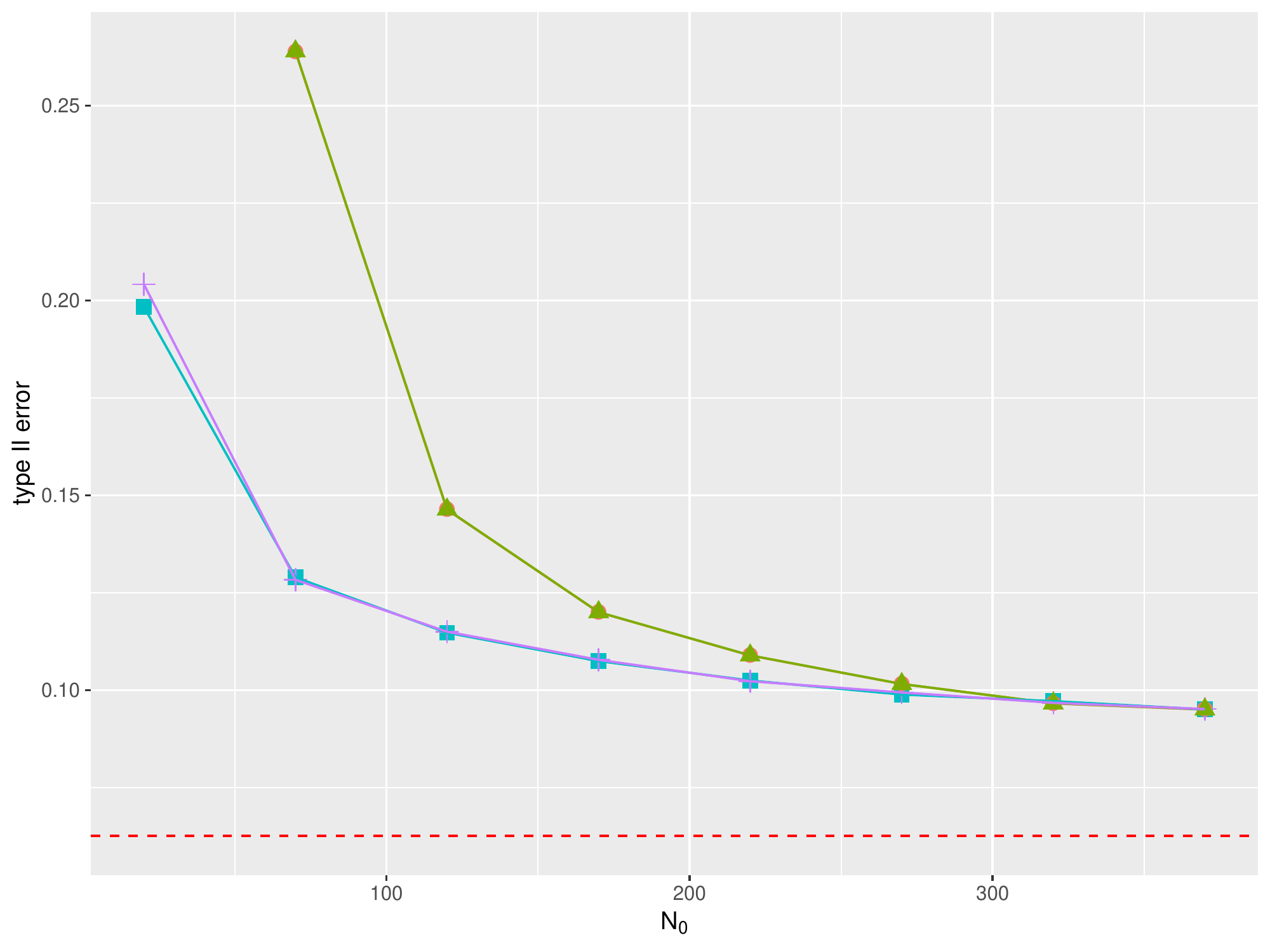}
        \caption{Example 1b}
    \end{subfigure}
    
    \begin{subfigure}[t]{0.5\textwidth}
        \centering
        \includegraphics[scale=0.39]{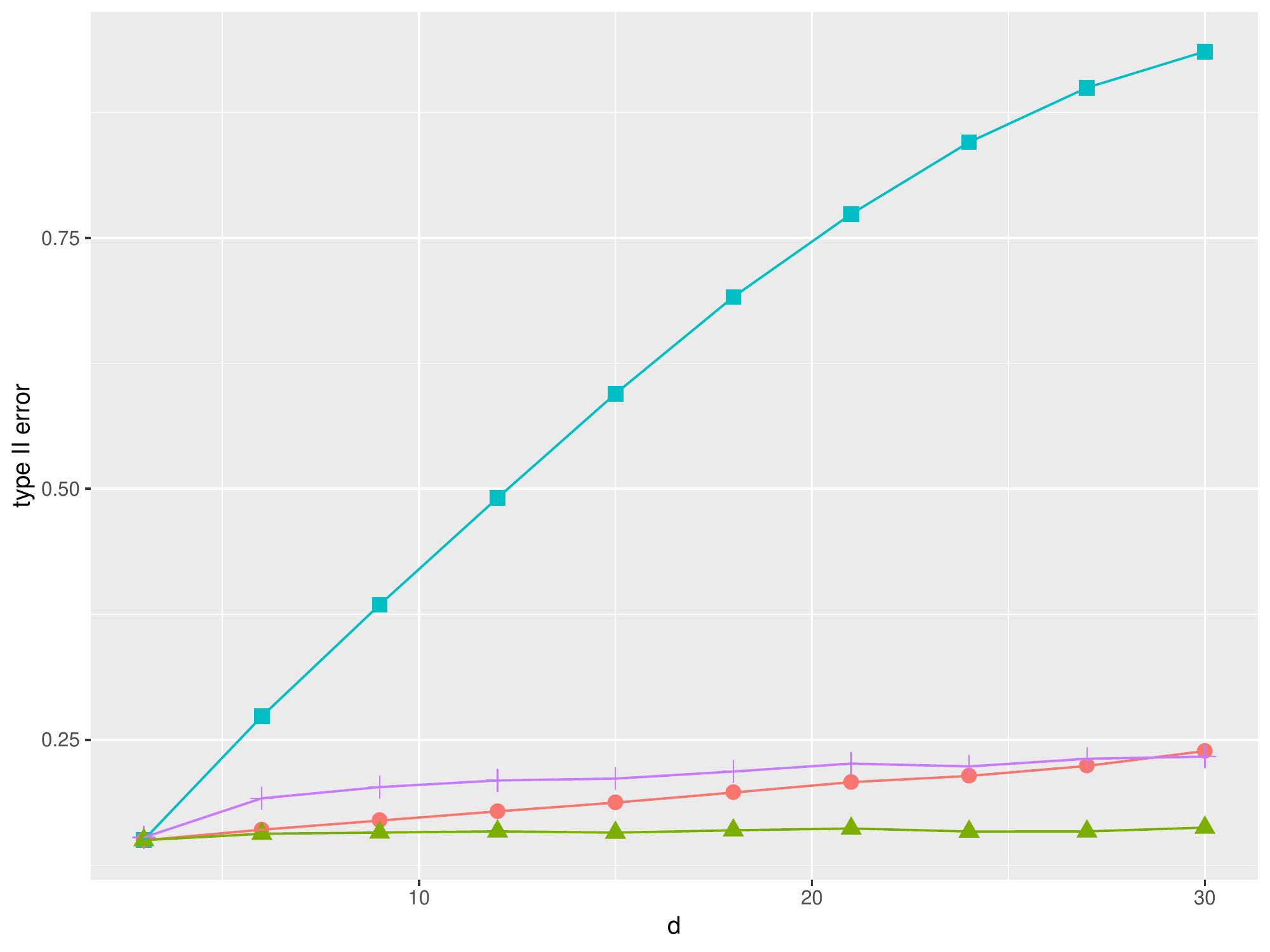}
        \caption{Example 1c}
    \end{subfigure}
    \begin{subfigure}[t]{0.5\textwidth}
        \centering
        \includegraphics[scale=0.39]{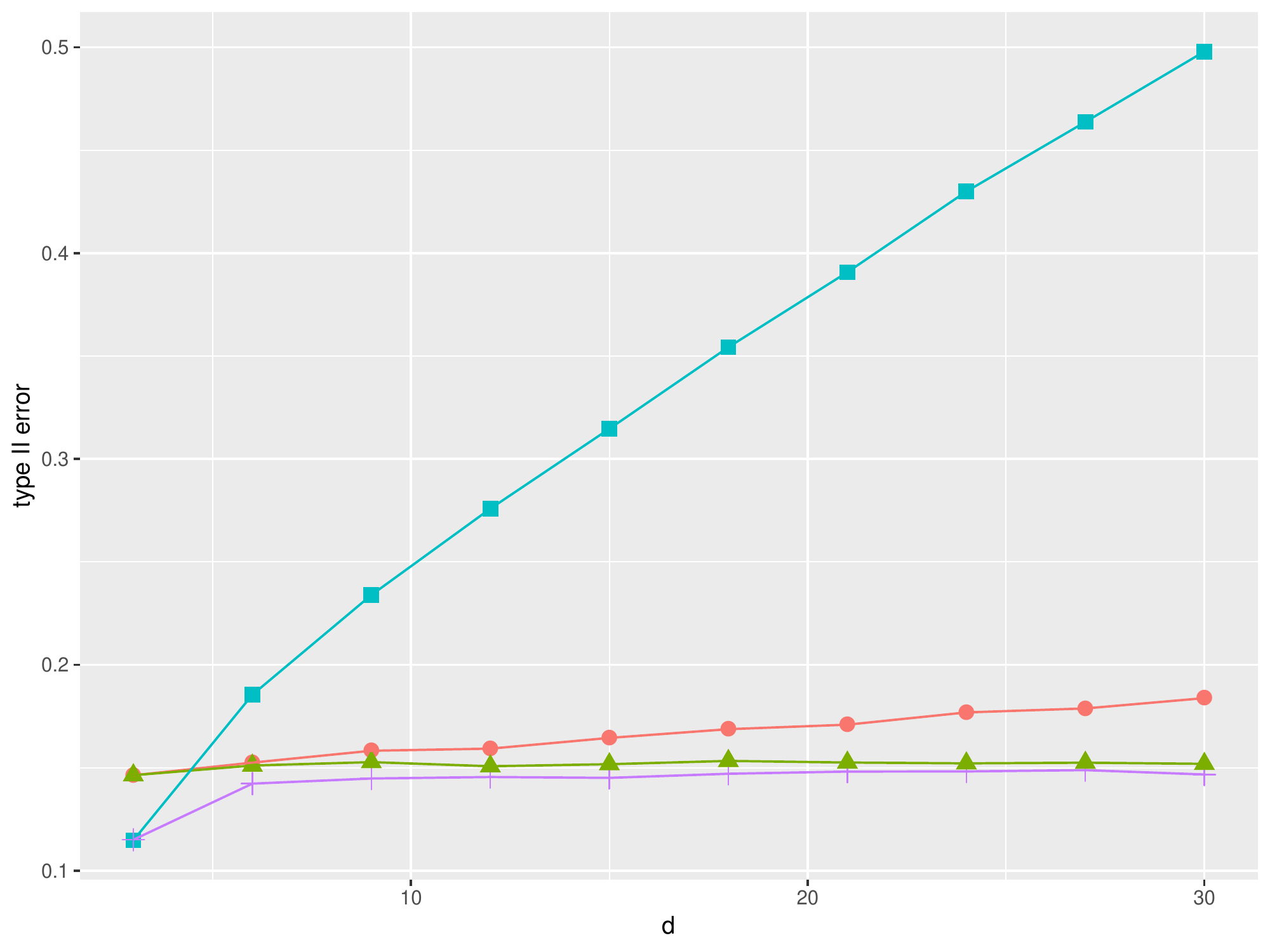}
        \caption{Example 1d}
    \end{subfigure}%
    
%

\end{figure}
\begin{example}\label{ex-1}
The data are generated from an LDA model with common covariance matrix $\Sigma$, where $\Sigma$ is set to be an AR(1) covariance matrix with $\Sigma_{ij}=0.5^{|i-j|}$ for all $i$ and $j$. The $\beta^{\text{Bayes}}=\Sigma^{-1}\mu_d = 1.2\times(1_{d_0},0_{d-d_0})^{\top}$, $\mu^0=0_d$, $d_0 = 3$. We set $\pi_0 = \pi_1 = 0.5$ and $\alpha=\delta_0=0.1$. 
\begin{enumerate}[(\ref{ex-1}a).]
\item $d=3$, varying $N_0=n_1\in \{20, 70, 120, 170, 220, 270, 320, 370\}$.
\item $d=3$, $n_1=500$, varying $N_0 \in \{20, 70, 120, 170, 220, 270, 320, 370\}$.
\item $N_0=n_1=125$, varying $d\in\{3, 6, 9, 12, 15, 18, 21, 24, 26, 30\}$.
\item $N_0=125$, $n_1=500$, varying $d\in\{3, 6, 9, 12, 15, 18, 21, 24, 26, 30\}$.
\end{enumerate}
\end{example}
The results are summarized in the Figure  \ref{fig::example 1}. Several observations are made in order.  First, from the first row of the figure (\ref{ex-1}a and \ref{ex-1}b), we observe that when $N_0$ is very small, the implementable methods that can achieve the desired type I error control are \verb+pNP-LDA+ and \verb+pNP-sLDA+; the NP umbrella algorithm based methods fail its minimum class 0 sample size requirement. Second,  the type II error of all methods decreases as $N_0$ increases from the first row of the figure (\ref{ex-1}a and \ref{ex-1}b),  and increases when $d$ increases from the second row of the figure (\ref{ex-1}c and \ref{ex-1}d). Third, from the first row of the figure, we see  \verb+pNP-LDA+ and \verb+pNP-sLDA+ have advantages over \verb+NP-LDA+ and \verb+NP-sLDA+ when the sample sizes are small. Finally, from the second row of the figure, we see that the nonparametric NP umbrella algorithm gains more and more advantages over the parametric thresholding rule (\verb+pNP-LDA+) as $d$ increases, since $\widehat C^p_{\alpha}$ specified in \eqref{eqn:parametric threshold} can become loose when $d$ is large. It is worth to mention that by taking advantage of the sparse solution generated by \texttt{sLDA}, the performance of \verb+pNP-sLDA+ does not deteriorates  as $d$ increases and performs the best for Example \ref{ex-1}d. 


\begin{example}\label{ex-2}
The data are generated from an LDA model with common covariance matrix $\Sigma$, where $\Sigma$ is set to be AR(1) covariance matrix with $\Sigma_{ij}=0.5^{|i-j|}$ for all $i$ and $j$. The $\beta^{\text{Bayes}}=\Sigma^{-1}\mu_d = C_d \cdot 1_d^{\top}$, $\mu^0=0_d$. We set $\pi_0 = \pi_1 = 0.5$ and $\alpha=\delta_0=0.1$. Here $C_d$ is a constant depending on $d$ such that the oracle classifier always has type II error 0.112 for any choice of $d$.
\begin{enumerate}[(\ref{ex-2}a).]
\item $d=3$, varying $N_0=n_1\in \{20, 70, 120, 170, 220, 270, 320, 370\}$.
\item $d=6$, varying $N_0=n_1\in \{20, 70, 120, 170, 220, 270, 320, 370\}$.
\item $N_0=n_1=125$, varying $d\in\{3, 6, 9, 12, 15, 18, 21, 24, 26, 30\}$.
\item $N_0=125$, $n_1=500$, varying $d\in\{3, 6, 9, 12, 15, 18, 21, 24, 26, 30\}$.
\end{enumerate}
\end{example}

\begin{figure}[t]
\caption{Example 2. 
Type II error of \texttt{NP-LDA}, \texttt{NP-sLDA}, \texttt{pNP-LDA} and \texttt{pNP-sLDA} vs. $N_0$ or $d$. The dashed red line represents the NP oracle.  \label{fig::ex 2}}

 \begin{subfigure}[t]{0.5\textwidth}
        \centering
        \includegraphics[scale=0.39]{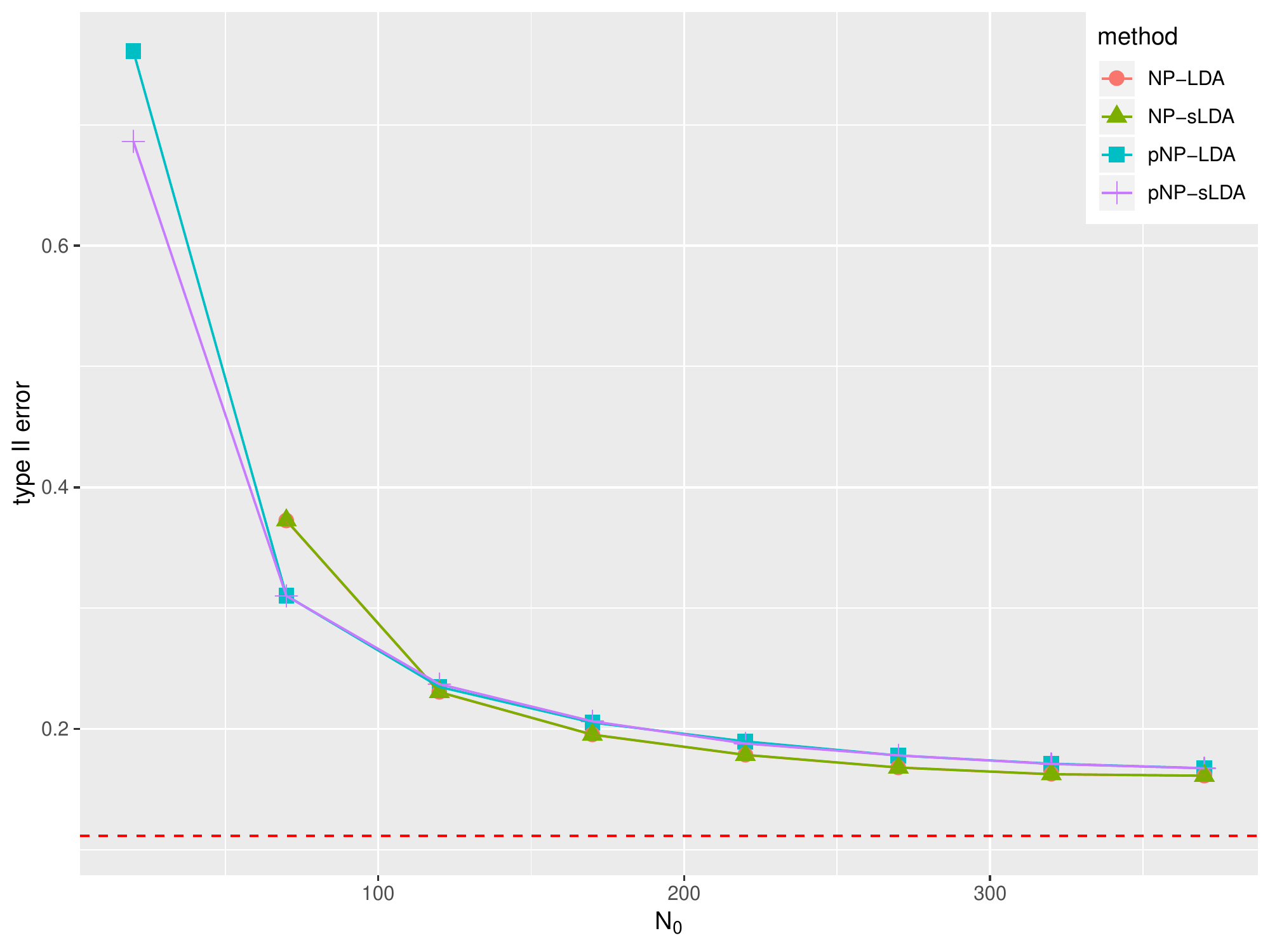}
        \caption{Example 2a}
    \end{subfigure}%
    \begin{subfigure}[t]{0.5\textwidth}
        \centering
        \includegraphics[scale=0.39]{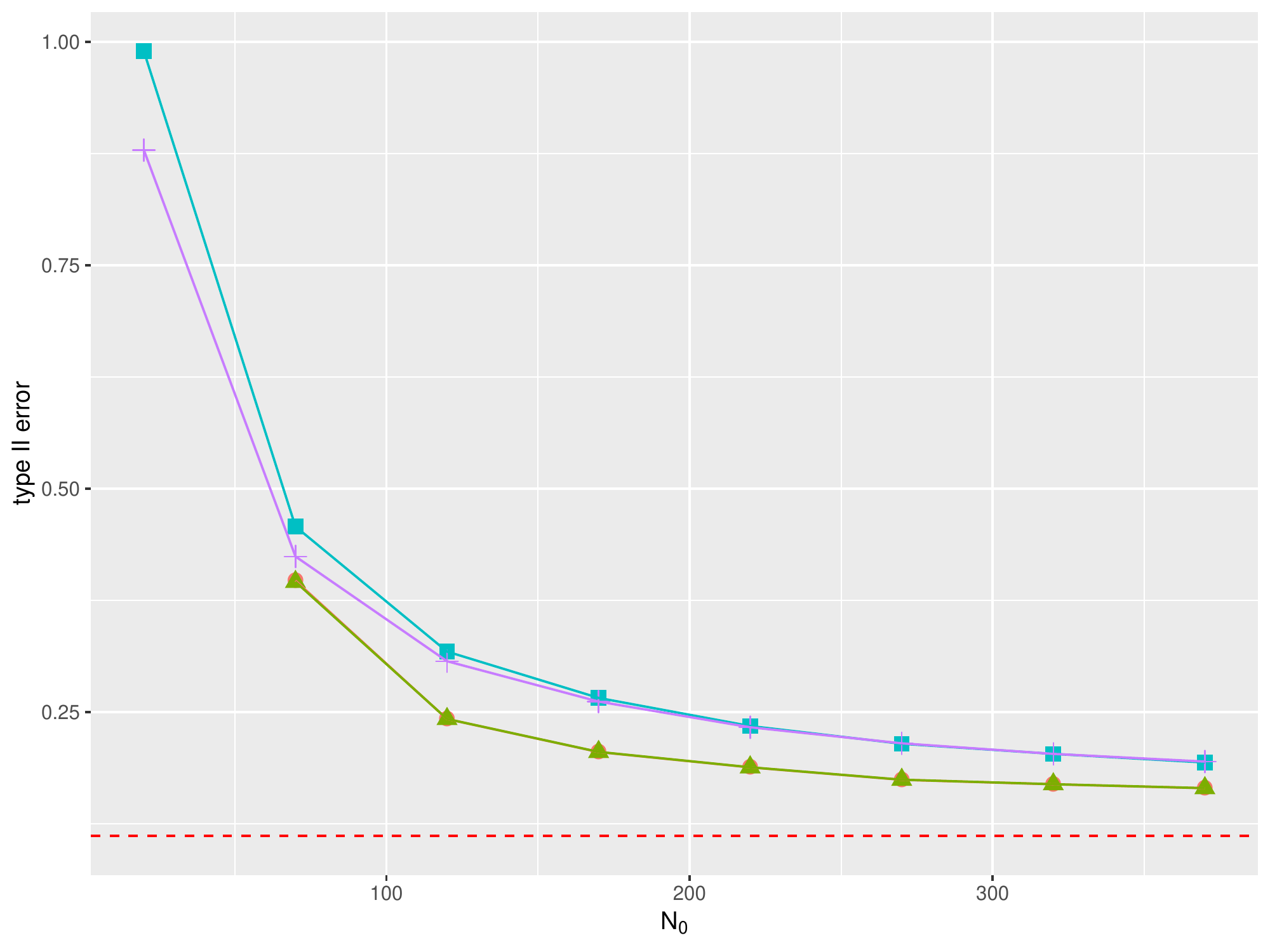}
        \caption{Example 2b}
    \end{subfigure}
    
    \begin{subfigure}[t]{0.5\textwidth}
        \centering
        \includegraphics[scale=0.39]{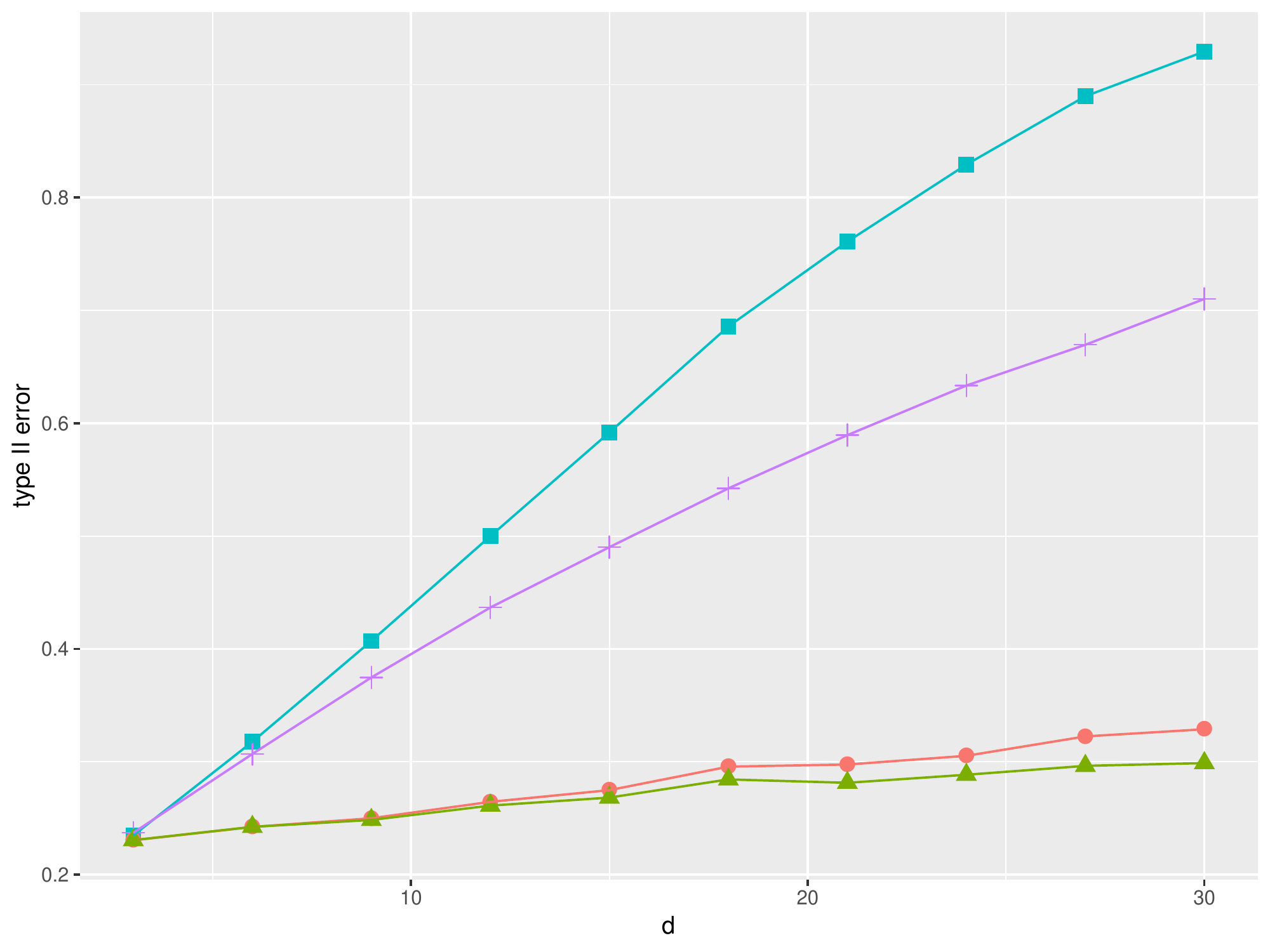}
        \caption{Example 2c}
    \end{subfigure}
    \begin{subfigure}[t]{0.5\textwidth}
        \centering
        \includegraphics[scale=0.39]{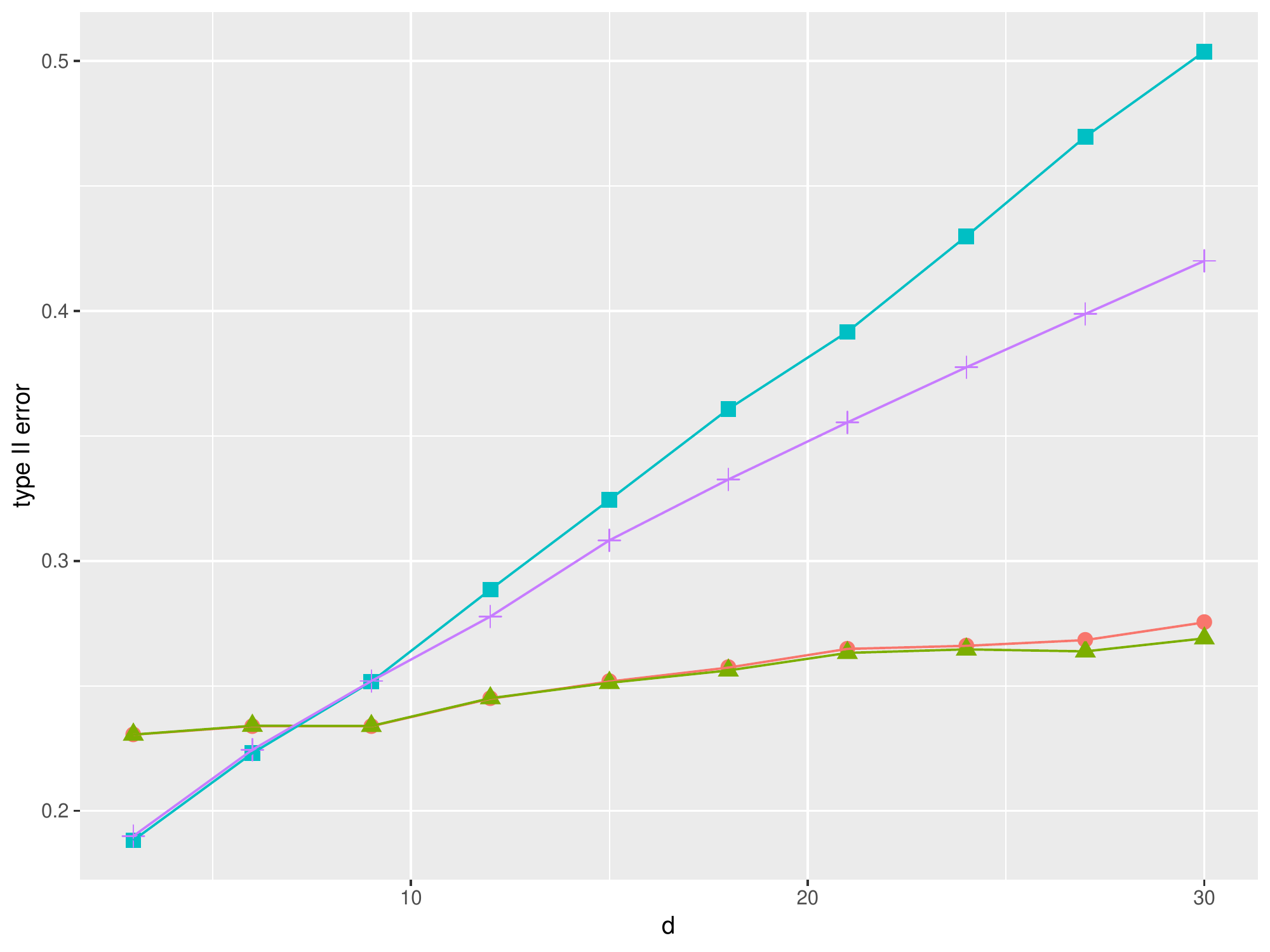}
        \caption{Example 2d}
    \end{subfigure}%
    

\end{figure}
Example \ref{ex-2} is a more challenging scenario where the oracle rule depends on all features. Similar observations as in Example  \ref{ex-1} can be made for Example \ref{ex-2}  from Figure \ref{fig::ex 2}. It is worth to mention that in this case, although we still see improvement of \verb+pNP-sLDA+ over \verb+pNP-LDA+ throughout all $d$,  the performance of \verb+pNP-sLDA+ is dominated by \verb+NP-LDA+ and \verb+NP-sLDA+ for moderate sample sizes as all features are important.

From Figure \ref{fig::example 1} (a)(b) and Figure \ref{fig::ex 2}(a)(b), we see that as sample sizes increase, the performance of \verb+NP-sLDA+ gets better and eventually dominates \verb+pNP-sLDA+ even though the latter takes advantage of the parametric model assumption in both training the scoring function and constructing the threshold.  This might seem a little counter-intuitive at  first glance.  The reason is that the construction of threshold estimate $\widehat C_{\alpha}$ in \verb+pNP-sLDA+ replies on a high probability upper bound on the inaccessible model-specific oracle threshold (see Lemma \ref{lem:iii}).  The construction of this upper bound involves bounding a quadratic form of $\Sigma$ by $\lambda_{\max}(\Sigma)$ and studying the relations between $\lambda_{\max}(\Sigma)$ and $\lambda_{\max}(\widehat\Sigma)$ without structural assumptions on $\Sigma$. Thus, the upper bound can be on the conservative side (i.e., larger than what is necessary) for a specific covariance structure. On the other hand, when the sample sizes are large, the number of candidate thresholds in \verb+NP-sLDA+ becomes large, then one can choose an order $k$ in the nonparametric NP umbrella algorithm to make the violation rate $v(k)$ very close to $\delta_0$ in equation \eqref{eqn:kstar}. Thus the loss due to universal handling of covariance structure in \verb+pNP-sLDA+ may outweigh the loss due to discretization in \verb+NP-sLDA+.

\subsection{Simulation studies under high-dimensional settings}

  In Examples \ref{ex-3}$-$\ref{ex-5}, we conduct simulations to compare the empirical performance of the proposed \verb+NP-sLDA+ and \verb+pNP-sLDA+ with other non-LDA based NP classifiers as well as the \verb+sLDA+ \citep{mai2012direct}.  In every simulation setting, the experiments are repeated $1,000$  times.

\begin{example}\label{ex-3}
The data are generated from an LDA model with common covariance matrix $\Sigma$, where $\Sigma$ is set to be an AR(1)  covariance matrix with $\Sigma_{ij}=0.5^{|i-j|}$ for all $i$ and $j$. The $\beta^{\text{Bayes}}=\Sigma^{-1}\mu_d = 0.556\times(3,1.5,0,0,2,0,\cdots,0)^{\top}$, $\mu^0=0^{\top}$, $d = 1,000$, and $N_0 = n_1 = 200$. Bayes error = $10\%$ under $\pi_0 = \pi_1 = 0.5$. $\alpha = \delta_0 = 0.1$. 
\end{example}

\begin{example}\label{ex-4}
The data are generated from an LDA model with common covariance matrix $\Sigma$, where $\Sigma$ is set to be a compound symmetric covariance matrix with $\Sigma_{ij}=0.5$ for all $i\neq j$ and $\Sigma_{ii}=1$ for all $i$. The $\beta^{\text{Bayes}}=\Sigma^{-1}\mu_d = 0.551\times(3,1.7,-2.2,-2.1,2.55,0,\ldots,0)^{\top}$, $\mu^0=0^{\top}$, $d = 2,000$, and $N_0 = n_1 = 300$. Bayes error = $10\%$ under $\pi_0 = \pi_1 = 0.5$.  $\alpha = \delta_0 = 0.1$.
\end{example}
\begin{example}\label{ex-5}
Same as in Example \ref{ex-4}, except $d=3,000$, $N_0=n_1=400$, and the $\beta^{\text{Bayes}}=\Sigma^{-1}\mu_d = 0.362\times(3,1.7,-2.2,-2.1,2.55,0,\ldots,0)^{\top}$. Bayes error = $20\%$ under $\pi_0 = \pi_1 = 0.5$. $\alpha = 0.2$ and $\delta_0 = 0.1$.

\end{example}

In Examples \ref{ex-3}$-$\ref{ex-5}, we compare the empirical type I/II error performance of \verb+NP-sLDA+, \verb+NP-penlog+ ({\it penlog} stands for penalized logistic regression), \verb+NP-svm+, \verb+pNP-sLDA+, and \verb+sLDA+ on a large test data set of size $20,000$ that consist of $10,000$ observations from each class. In all NP methods, $\tau$, the class $0$ split proportion, is fixed at $0.5$. The choices for $\alpha$ in these examples match the corresponding Bayes errors, so that the comparison between NP and classical methods does not obviously favor the former.    

\begin{table}
\caption{Violation rate and type II error for Examples \ref{ex-3}, \ref{ex-4} and \ref{ex-5} over $1,000$ repetitions.\label{tb::simu1}}
\begin{center}
\begin{tabular}{l|lrrrrr}
\hline
&&\texttt{NP-sLDA}&\texttt{NP-penlog}&\texttt{NP-svm}&\texttt{pNP-sLDA}&\texttt{sLDA}\\
\hline
\multirow{3}{*}{Ex 3}&    violation rate&.068&.055&.054&.001&.764\\
&type II error (mean)&.189&.205&.621&.220&.104\\
&type II error (sd)&.057&.063&.077&.052&.010\\\hline
\multirow{3}{*}{Ex 4}&violation rate&.073&.081&.081&.000&1.000\\
&type II error (mean)&.246&.255&.615&.824&.129\\
&type II error (sd)&.051&.053&.070&.121&.010\\
\hline
\multirow{3}{*}{Ex 5}&violation rate&.079&.088&.099&.000&.997\\
&type II error (mean)&.332&.334&.584&.748&.231\\
&type II error (sd)&.044&.044&.045&.128&.012\\
\hline
\end{tabular}
\end{center}
\end{table}
Table \ref{tb::simu1} indicates that  while the classical \verb+sLDA+ method cannot control the type I error violation rate under $\delta_0$, all the NP classifiers are able to do so. In addition, among the four NP classifiers, \verb+NP-sLDA+ gives the smallest mean type II error. \verb+pNP-sLDA+ performs reasonably well in Example \ref{ex-3}, where the NP oracle rule is extremely sparse. In Examples \ref{ex-4} and \ref{ex-5}, however, the threshold $\widehat C_{\alpha}^p$ becomes overly large as a result of more selected features due to the less sparse oracle, thus leading to an overly conservative classifier with 0 violation rate \footnote{Strictly speaking, the observed type I error violation rate is only an approximation to the real violation rate.  The approximation is two-fold: i).  in each repetition of an experiment, the population type I error is approximated by empirical type I error on a large test set; ii). the violation rate should be calculated based on infinite repetitions of the experiment, but we only calculate it based on $1,000$ repetitions. However, such approximation is unavoidable in numerical studies.}. 

\subsection{Adaptive sample splitting}
By explanations in the last paragraph of Section \ref{sec::adaptive-split}, the adaptive splitting scheme does not affect the type I error control objective.   
Examples \ref{ex-6} and \ref{ex-7}  investigate the power enhancement as a result of the adaptive splitting scheme over the default half-half choice. These examples include an array of situations, including low-  and high-dimensional settings ($d=20$ and $1,000$), balanced and imbalanced  classes ($N_0:n_1 = 1:1$ to $1:256$), and small to medium sample sizes ($N_0 = 100$ to $500$).  

\begin{example}\label{ex-6}
Same as in Example \ref{ex-3}, except taking the following sample sizes. \begin{enumerate}[(\ref{ex-6}a).]
\item     $N_0=100$ and varying $n_1/N_0 \in \{1,2,4,8,16,32,64,128,256\}$. 
\item  Varying $n_1=N_0\in \{100,150,200,250,300,350,400,450,500\}$.
\end{enumerate}

\end{example}

\begin{example}\label{ex-7} Same as in Example \ref{ex-3}, except that $d=20$, $N_0=100$ and varying $n_1/N_0\in\{1,2,4,8,16\}$. 
\end{example}

Note that Examples \ref{ex-6}a and \ref{ex-6}b each includes $9$ different simulation settings, and Example \ref{ex-7} includes $5$.  For each simulation setting, we generate $1,000$ (training) datasets and a common test set of size $100,000$ from class $1$. Only class $1$ test data are needed because only type II error is investigated in these examples. In each simulation setting, we train $10$ NP classifiers of the same base algorithm using each of the $1,000$ datasets. Nine of these $10$ NP classifiers use fixed split proportions in $\{.1, \cdots, .9\}$, and the last one uses adaptive split proportion using $K=5$. Overall in Examples  \ref{ex-6} and \ref{ex-7}, we set $\alpha=\delta_0=0.1$, and train an enormous number of NP classifiers.  For instance,  in Example \ref{ex-6}a, we train $9 \times 1,000 \times 10 = 90, 000$ \verb+NP-sLDA+ classifiers, and the same number of NP classifiers  for any other base algorithm under investigation. We fix the thresholding rule as the NP umbrella algorithm in this subsection.

For each simulation setting, denote by $\widetilde R_1(\cdot)$ the empirical type II error on the test set. We fix a simulation setting so that we do not need to have overly complex sub or sup indexes in the following discussion. Denote by $\hat h_{i, b, \tau}$ an NP classifier with base algorithm $b$, trained on the $i$th dataset ($i\in\{1, \cdots, 1000\}$) using split proportion $\tau$. This classifier also depends on users' choices of $\alpha$ and $\delta_0$, but we suppress these dependencies here to highlight our focus. In fixed proportion scenarios, $\tau\in\{.1, \cdots, .9\}$. Let $\tau^{\text{ada}}(j,b)$ represent the adaptive split proportion trained on the $j$-th dataset with base algorithm $b$ using adaptive splitting scheme described in Section 5.  Therefore, $\hat h_{i, b, \tau^{\text{ada}}(j,b)}$ refers to the NP classifier with base algorithm $b$, trained on the $i$-th dataset using the split proportion  $\tau^{\text{ada}}(j,b)$ pre-determined in the $j$-th dataset,  where $i, j\in \{1, \cdots, 1000\}$.  Let $\text{Ave}_{b, \tau}$ and $\text{Ave}_{b, \hat \tau}$ be our performance measures for fix proportion and adaptive proportion respectively, which are defined by, 
$$\text{Ave}_{b, \tau} = \frac{1}{1000} \sum_{i=1}^{1000} \widetilde R_1(\hat h_{i, b, \tau})\,, \text{and }\text{Ave}_{b, \hat \tau} = \text{median}_{j = 1, \cdots, 1000}\left(\frac{1}{1000}\sum_{i=1}^{1000}\widetilde R_1\left(\hat h_{i, b, \tau^{\text{ada}}(j,b)}\right)\right)\,.$$
While the meaning of the measure  $\text{Ave}_{b, \tau}$ is almost self-evident, $\text{Ave}_{b, \hat \tau}$ deserves some elaboration. As we explained in the last paragraph of Section \ref{sec::adaptive-split}, the adaptive splitting scheme returns a proportion based on one realization of $\mathcal{S}$, and then we adopt it in all subsequent realizations.  Let
$$w_b(j) = \frac{1}{1000}\sum_{i=1}^{1000}\widetilde R_1\left(\hat h_{i, b, \tau^{\text{ada}}(j,b)}\right)\,,$$
then $w_b(j)$ is a performance measure of the adaptive scheme if the proportion is returned from training on the $j$-th dataset.  To account for the variation among $w_b(j)$'s for different choices of $j$, we take the median over $w_b(j)$'s  as our final measure. Also, we denote the average of adaptively selected proportions by $\tau_{b, \text{ada}} = \frac{1}{1000}\sum_{j=1}^{1000}\tau^{\text{ada}}(j,b)$, and define the  average optimal split proportion $\tau_{b,\text{opt}}$ by 
$$
\tau_{b,\text{opt}} = \frac{1}{1000}\sum_{i=1}^{1000}\argmin_{\tau\in\{.1\cdots, .9\}} \widetilde  R_1 (\hat h_{i, b, \tau})\,.
$$

With Example \ref{ex-6}, we investigate i). the effectiveness (in terms of type II error) of the adaptive splitting strategy compared to a fixed half-half split, illustrated by the left panels of Figures \ref{fig::split.proportion} and \ref{fig::n0}; ii). how close is $\tau_{b, \text{ada}}$ compared to $\tau_{b, \text{opt}}$, illustrated by the right panels of Figures \ref{fig::split.proportion} and \ref{fig::n0}; iii). how the class imbalance affects \verb+NP-sLDA+ and \verb+NP-penlog+, illustrated by both panels of Figure \ref{fig::split.proportion}; and iv). how the absolute class $0$  sample size affects \verb+NP-sLDA+ and \verb+NP-penlog+,  
illustrated by both panels of Figure  \ref{fig::n0}.     

In Figure \ref{fig::split.proportion} (Example 6a), the left panel
presents the trend of type II errors ($\text{Ave}_{b, .5}$ and $\text{Ave}_{b, \hat \tau}$) as the sample size ratio $n_1/N_0$ increases from $1$ to $256$ for fixed $N_0 = 100$. For both \verb+NP-penlog+ and \verb+NP-sLDA+, type II error decreases as $n_1/N_0$ increases from $1$ to $16$ and gradually stabilizes afterwards. Neither \verb+NP-penlog+ nor \verb+NP-sLDA+ suffers from training on imbalanced classes.   In terms of type II error performance, the adaptive splitting strategy significantly improves over the fixed split proportion $0.5$. The right panel of Figure \ref{fig::split.proportion} shows that, on average the 
adaptive split proportion is very close to the optimal one throughout all sample size ratios. 


%
%
%
\begin{figure}
\caption{Example \ref{ex-6}a. 
Left panel: type II error ($\text{Ave}_{b, .5}$ and $\text{Ave}_{b, \hat \tau}$) of \texttt{NP-sLDA} and \texttt{NP-penlog} vs. $n_1/N_0$; Right panel:  average split proportion ($\tau_{b, \text{ada}}$ and $\tau_{b, \text{opt}}$) vs. $n_1/N_0$. $N_0$ is fixed to be $100$ for both panels. \label{fig::split.proportion}}
\begin{center}
\includegraphics[scale=0.39]{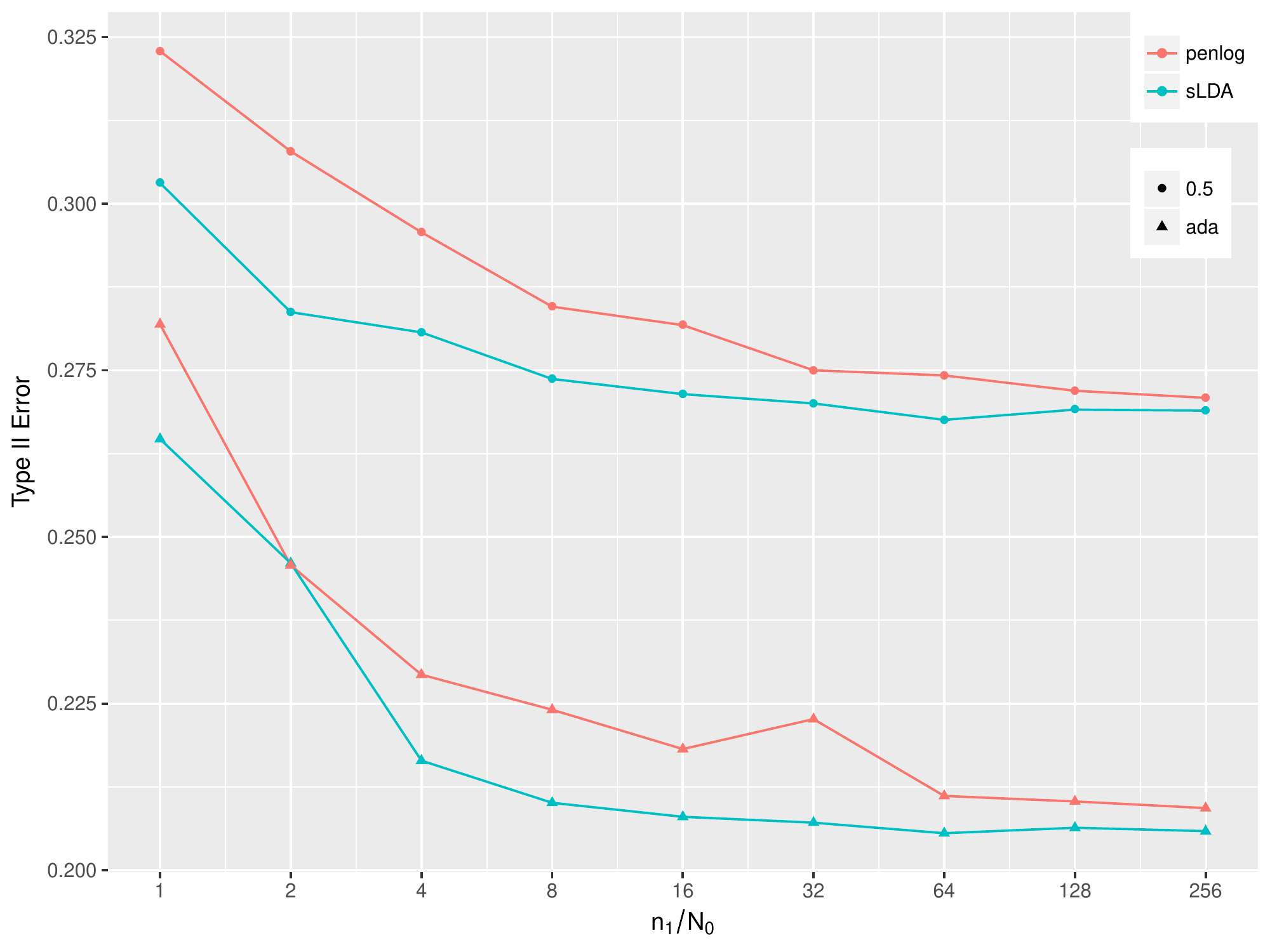}    
\includegraphics[scale=0.39]{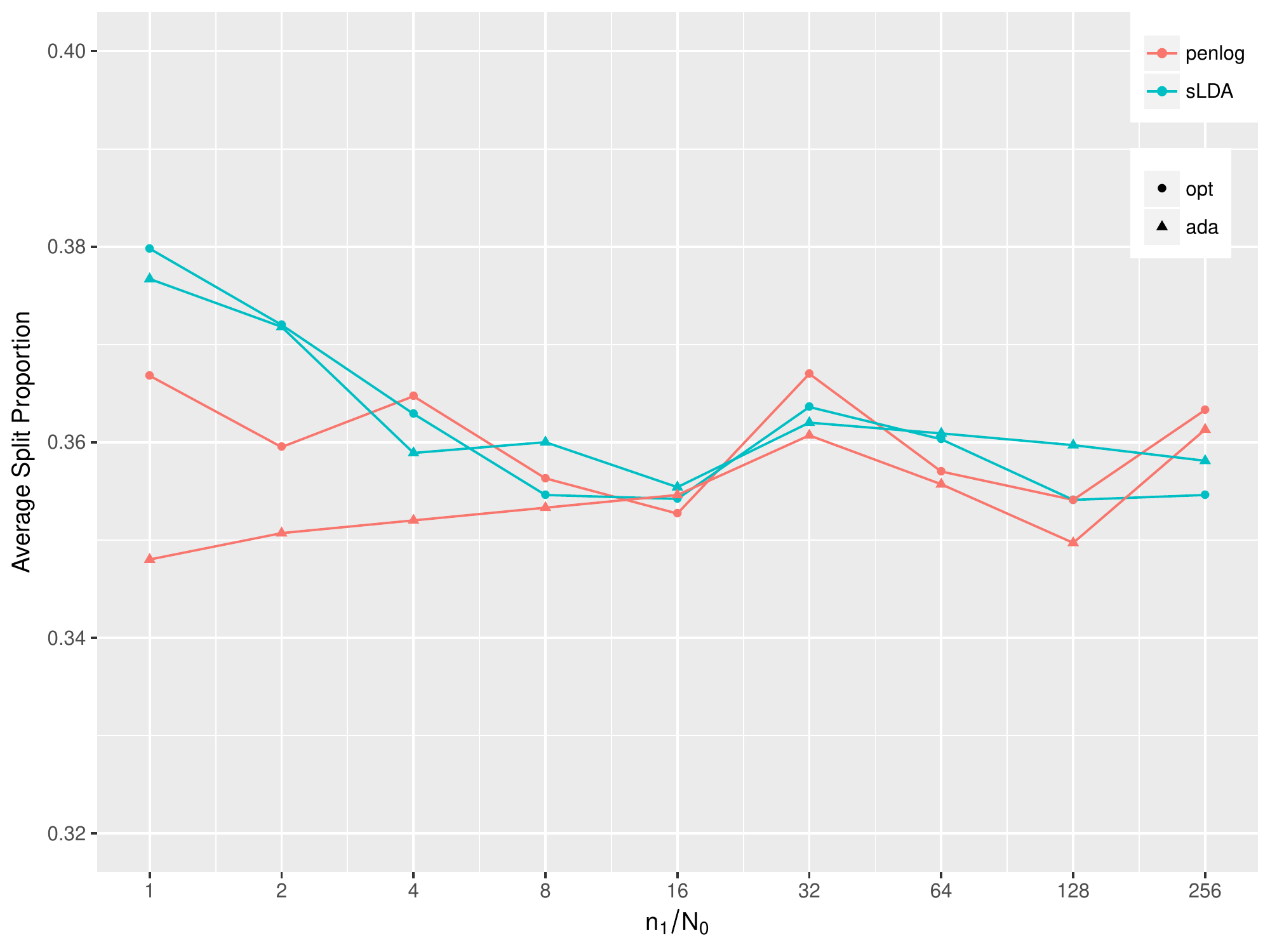}
    
\end{center}

\end{figure}

In Figure \ref{fig::n0} (Example \ref{ex-6}b), the left panel presents the trend of type II errors ($\text{Ave}_{b, .5}$ and $\text{Ave}_{b, \hat \tau}$) as the class $0$ sample  size $N_0$ ($n_1=N_0$) increases from $100$ to $500$, indicating that type II error clearly benefits from increasing training sample sizes of both classes. For the same base algorithm, the adaptive splitting strategy significantly improves over the fixed split proportion $0.5$ for $N_0$ and $n_1$ small, although the improvement diminishes as both sample sizes become large. The right panel of Figure \ref{fig::n0} shows that on average, the adaptive split proportion is very close to the optimal one throughout all sample sizes. Furthermore, the average optimal split proportion seems to increase as $N_0$ increases  in general. An intuition is that when $N_0$ is smaller,  a higher proportion of class $0$ observations is needed for threshold estimate, to guarantee the type I error violation rate control.

\begin{figure}
\caption{Example \ref{ex-6}b. Left panel: type II error ($\text{Ave}_{b, .5}$ and $\text{Ave}_{b, \hat \tau}$) of \texttt{NP-sLDA} and \texttt{NP-penlog} vs. $N_0$; Right panel:  average split proportion ($\tau_{b, \text{ada}}$ and $\tau_{b, \text{opt}}$) vs. $N_0$. $n_1=N_0$ for both panels. \label{fig::n0}}
\begin{center}
\includegraphics[scale=0.39]{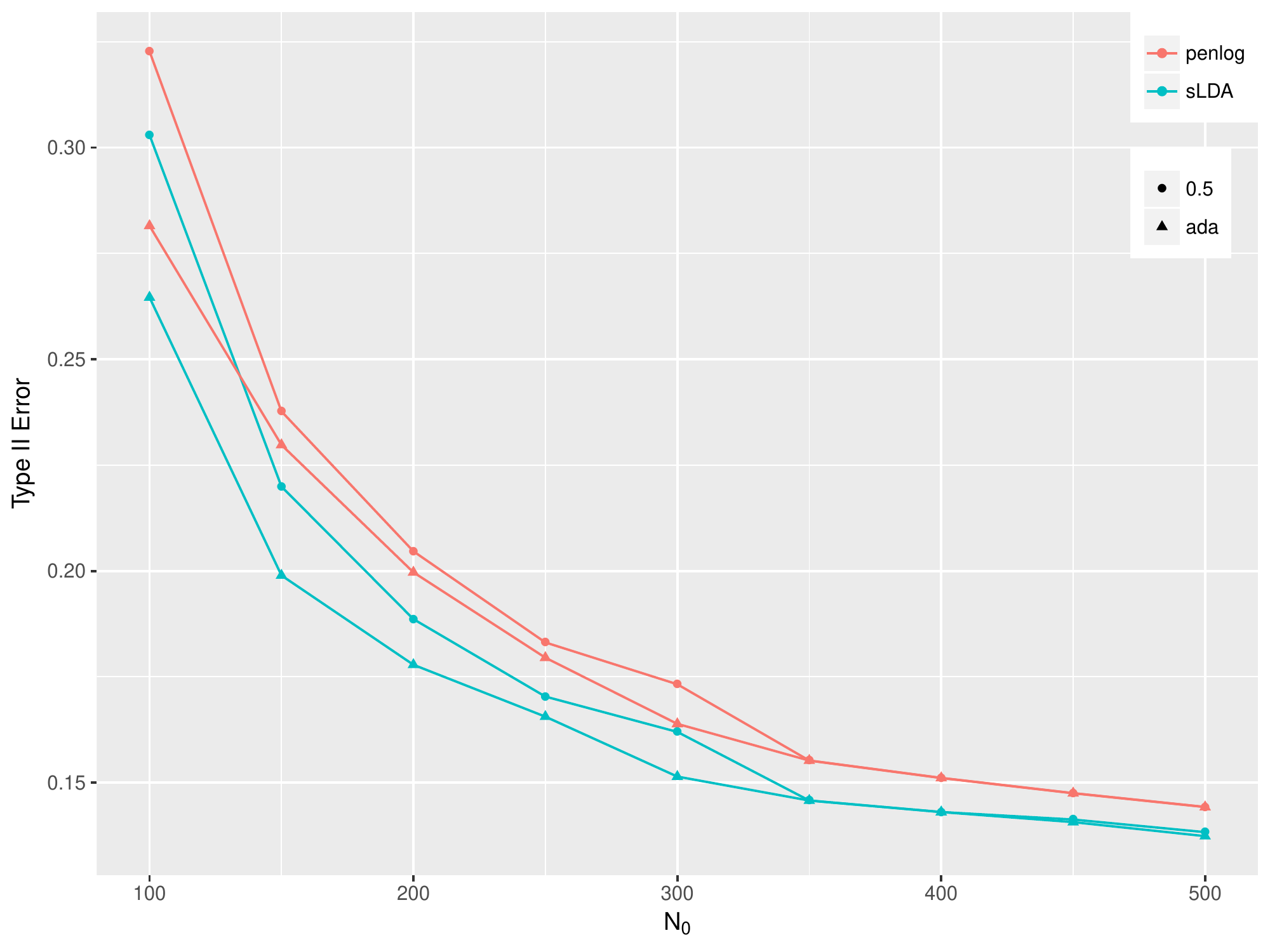}
\includegraphics[scale=0.39]{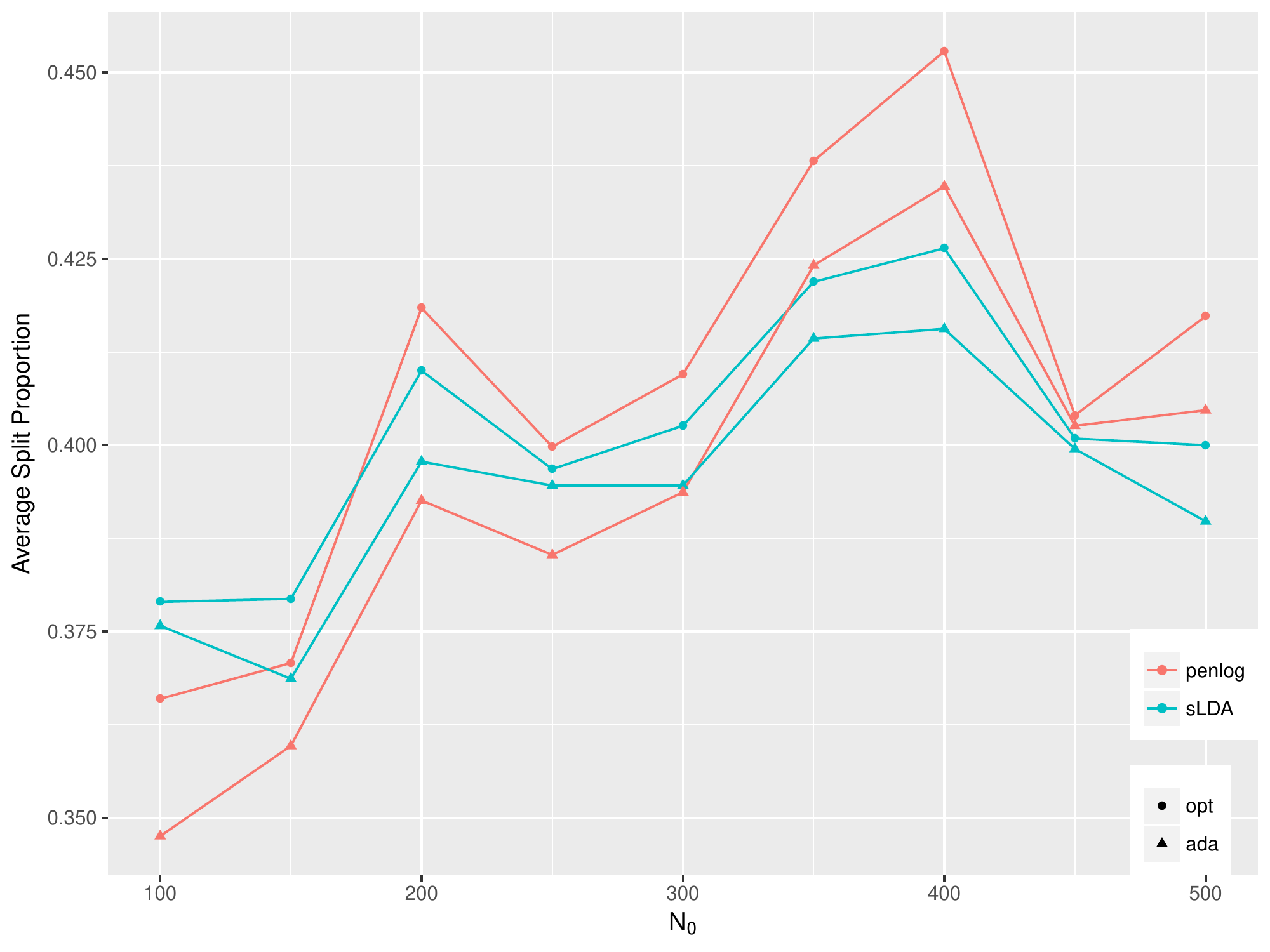}
    
\end{center}

\end{figure}

\begin{figure}
\caption{Example \ref{ex-7}. Type II error ($\text{Ave}_{b, .5}$ and $\text{Ave}_{b, \hat \tau}$) vs. sample size ratios  for four NP classifiers (\texttt{NP-sLDA}, \texttt{NP-penlog}, \texttt{NP-svm}, \texttt{NP-randomforest}), with both multiple random splits ($M=11$) and single random split. $N_0=100$ for all sample size ratios. \label{fig::multiple_adaptive}}
\begin{center}

\includegraphics[scale=0.7]{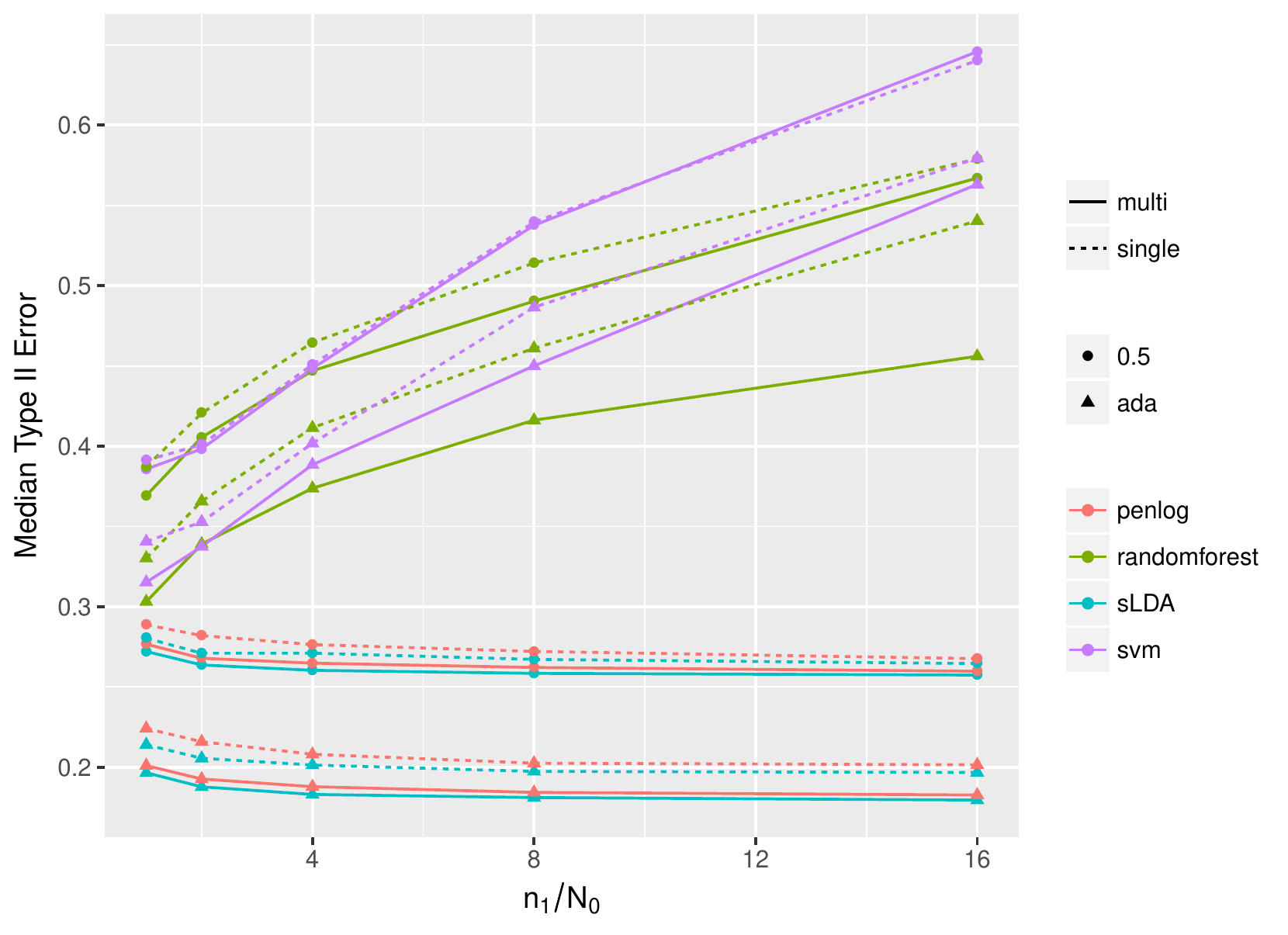}
    
\end{center}
    
\end{figure}

\begin{table}[ht]
\caption{Example \ref{ex-7}. Average computational cost (in seconds) for four NP-classifiers (\texttt{NP-sLDA}, \texttt{NP-penlog}, \texttt{NP-randomforest}, \texttt{NP-svm}) over 1,000 repetitions (standard deviation in parentheses). \label{tb::ex7:computation_cost_comparison}}
\begin{center}
\begin{tabular}{lrrrr}
\hline
$n_1/N_0$&\texttt{NP-sLDA}&\texttt{NP-penlog}&\texttt{NP-randomforest}&\texttt{NP-svm}\\
\hline
1&1.17(0.14)&3.58(0.58)&1.49(4.83)&33.81(2.08)\\
2&1.19(0.13)&5.24(1.03)&1.43(0.16)&36.09(1.98)\\
4&1.19(0.14)&7.63(1.79)&2.06(0.10)&41.44(2.38)\\
8&1.22(0.09)&11.11(2.31)&3.43(0.19)&53.25(6.21)\\
16&1.30(0.08)&16.08(3.79)&6.92(0.25)&84.64(5.28)\\
\hline
\end{tabular}
\end{center}
\end{table}

With Example \ref{ex-7}, we investigate the interaction between adaptive splitting strategy and multiple random splits on different  NP classifiers.  Multiple random splits of class $0$ observations were proposed in the NP umbrella algorithm in \cite{tong2016np} to increase the stability of the type II error performance.  When an NP classifier uses $M>1$ multiple splits, each split will result in a classifier, and the final prediction rule is a majority vote of these classifiers.   Figure \ref{fig::multiple_adaptive} shows the trend of type II error of \verb+NP-sLDA+, \verb+NP-penlog+, \verb+NP-randomforest+, and \verb+NP-svm+,  as the sample size ratio $n_1/N_0$  increases from $1$ to $16$ while keeping $N_0=100$. For each base algorithm, four scenarios are considered: (fixed 0.5 split proportion, single split), (adaptive split proportion, single split), (fixed 0.5 split proportion, multiple splits), and (adaptive split proportion, multiple splits).  Figure \ref{fig::multiple_adaptive} suggests the following interesting findings: i). type II error decreases for \verb+NP-sLDA+ and \verb+NP-penlog+ but increases for \texttt{NP-randomforest} and \texttt{NP-svm}, as a function of $n_1/N_0$ while keeping $N_0$ constant; ii). with both fixed $0.5$ split proportion and adaptive splitting strategy, performing multiple splits leads to a smaller type II error compared with their single split counterparts; iii). for both single split and multiple splits, the adaptive split always improves upon the fixed $0.5$ split proportion;  iv). \verb+NP-svm+ and \verb+NP-randomforest+ are affected by the imbalance scenario, and  one might consider downsampling or upsampling methods before applying an NP algorithm; and v). adding multiple splits to the adaptive splitting strategy leads to a further reduction on the type II error. Nevertheless, the reduction in type II error from the adaptive splitting scheme alone is much larger than the marginal gain from adding multiple splits on top of it. Therefore, when computation power is limited, one should implement the adaptive splitting scheme before considering multiple splits.  

 Lastly, from Table \ref{tb::ex7:computation_cost_comparison}, we would like to point out \verb+NP-sLDA+ is the fastest method to compute among the four NP classifiers with more evident advantages as the sample size increases. \footnote{All numerical experiments were performed on HP Enterprise XL170r with CPU E5-2650v4 (2.20 GHz) and 16 GB memory. }

\subsection{Real data analysis}
We study two high-dimensional datasets in this subsection.  
The first is 
a neuroblastoma dataset containing $d=43,827$ gene expression measurements
from $N=498$ neuroblastoma samples generated by the
Sequencing Quality Control (SEQC) consortium \citep{wang2014concordance}. The
samples fall into two classes: $176$ high-risk (HR) samples and $322$ non-HR
samples. It is usually understood that misclassifying an HR sample
as non-HR will have more severe consequences than the other way
around. Formulating this problem under the NP classification
framework, we label the HR samples as class $0$ observations  and the non-HR
samples as class $1$ observations  and, use all gene expression measurements
as features to perform classification. We set $\alpha=\delta_0=0.1$, and compare \verb+NP-sLDA+ with \verb+NP-penlog+, \verb+NP-randomforest+ and \verb+NP-svm+. We randomly split the dataset $1,000$ times into a training set ($70\%$) and a test set ($30\%$), and then train the NP classifiers on each training data and compute their empirical type I and type II errors over the corresponding test data.  We consider each fixed split proportion in $\{.1, .2, .3, .4, .5, .6, .7, .8\}$ 
 as well as  the adaptive splitting strategy. Here, the split proportion $0.9$ is not considered since it leads to a left-out sample size which is too small to control the type I error at the given $\alpha$ and $\delta_0$ values. Figure \ref{fig::SEQC} indicates that the average type I error is less than $\alpha$ across different split proportions for all four methods considered. Regarding the average type II error, \verb+NP-sLDA+ has the smallest values for a wide range of split proportions.  In particular, the smallest average type II error for \verb+NP-sLDA+ corresponds to split proportion $0.4$. The average location of the split proportion chosen by the adaptive splitting scheme would lead to a type II error close to the minimum. This demonstrates that the adaptive splitting scheme works well for different NP classifiers.  Lastly, we note that for a specific splitting proportion, the median computation time is 213.95 seconds for \verb+pNP-sLDA+ vs. 1717.69 seconds for \verb+NP-randomforest+ over 1,000 random splits.
\begin{figure}
\caption{The average type I and type II errors vs. splitting proportion on the neuroblastoma data set for \texttt{NP-sLDA}, \texttt{NP-penlog}, \texttt{NP-randomforest} and \texttt{NP-svm} over $1,000$ random splits.     The ``*" point on each line represents the average split proportion chosen by adapting splitting. \label{fig::SEQC}}
\begin{center}
\includegraphics[scale=0.39]{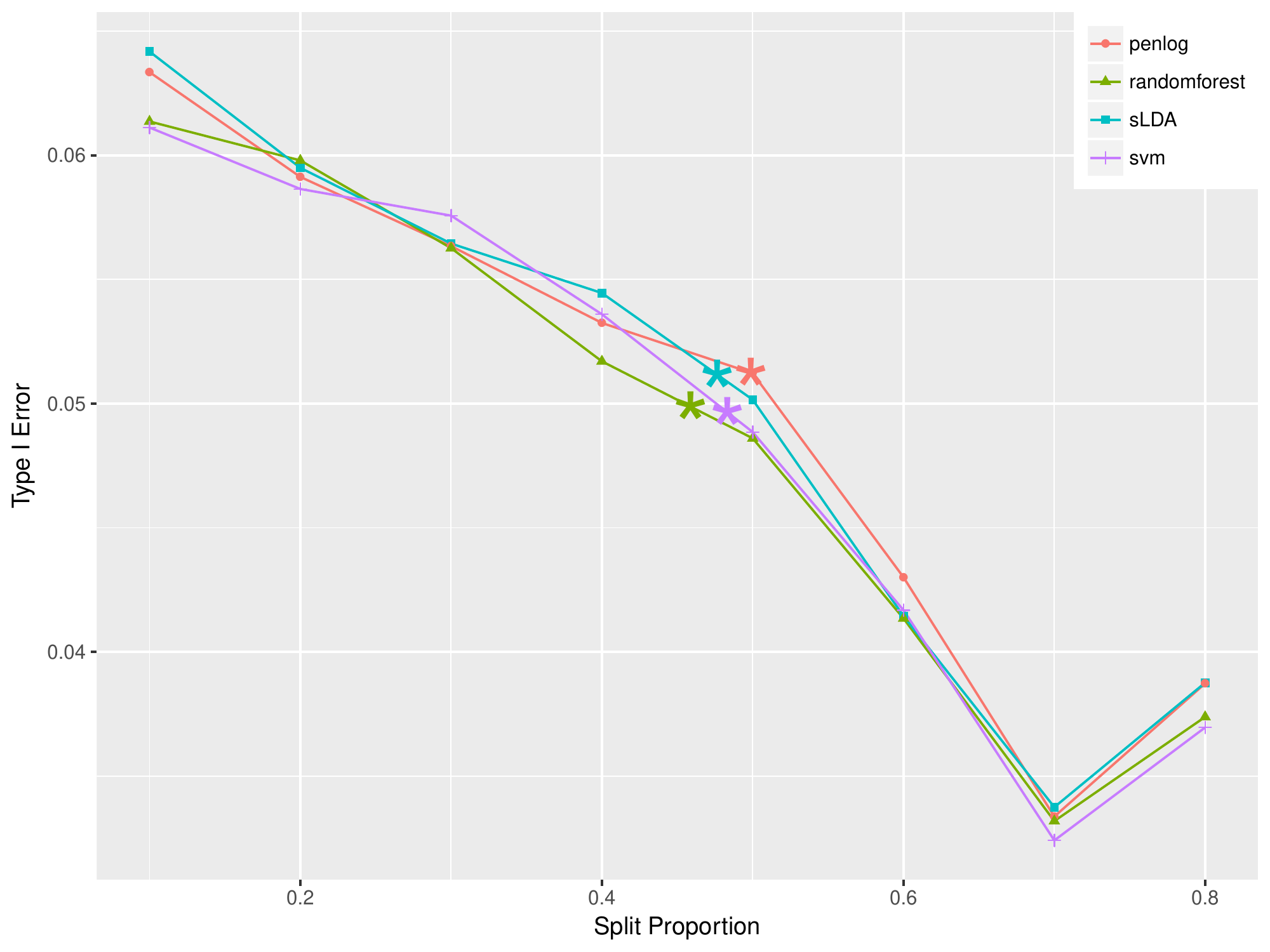}
\includegraphics[scale=0.39]{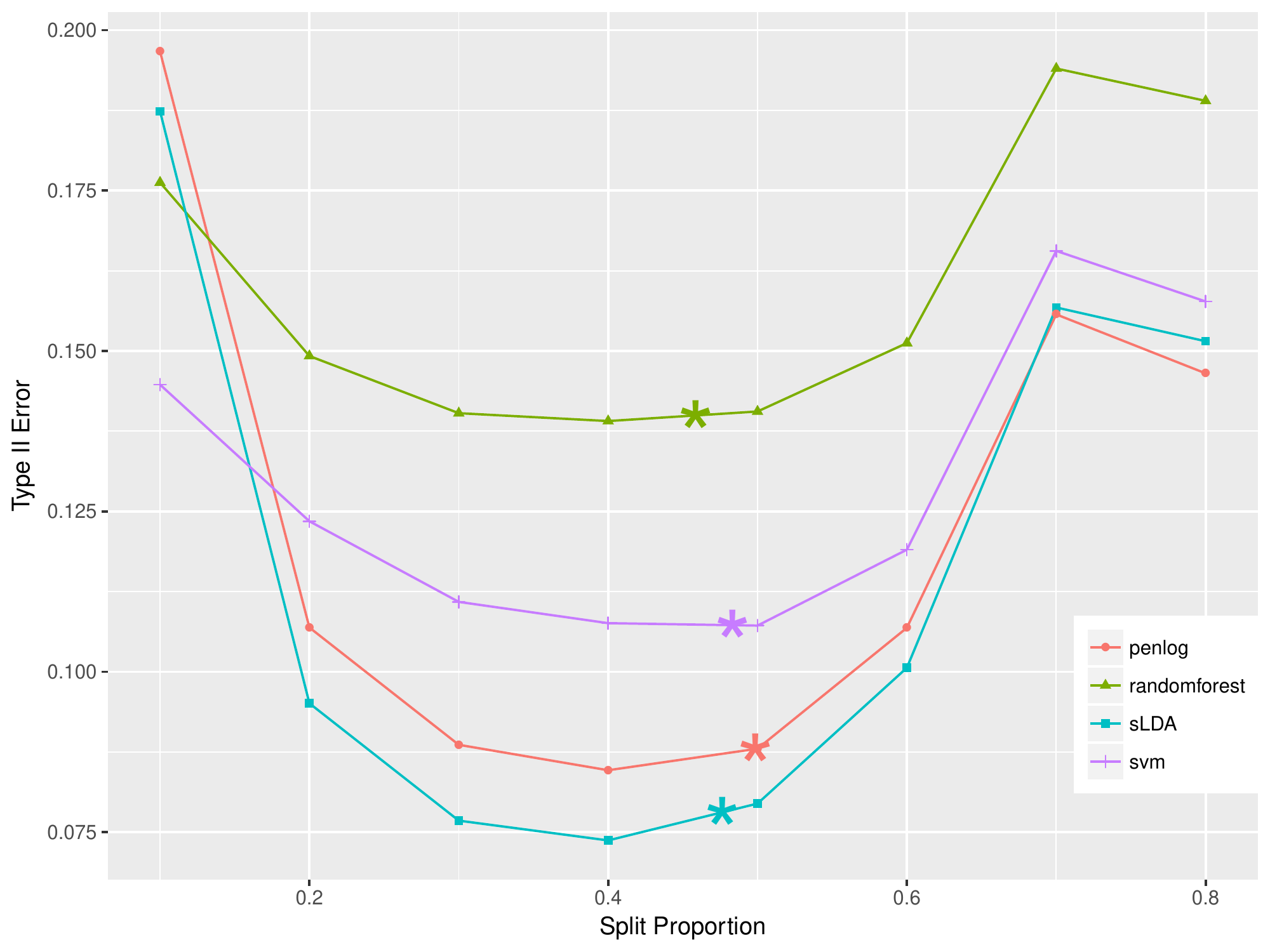}
\end{center}
\end{figure}

The second dataset is a high-dimensional breast cancer dataset ($d=22,215, N=118$)  \citep{Chin:2006bt} with gene expression measurements of subjects that fall into positive (class 0, $N=75$) and negative (class 1, $n_1=43$) groups.  We set $\delta_0=0.1$ and vary $\alpha$ from $0.05$ to $0.15$, and compare the performance of \verb+NP-sLDA+ with \verb+pNP-sLDA+.  We randomly split the dataset 1,000 times into a training set  ($2/3$) and a test set ($1/3$), train the two methods on the training set,  and  compute  the empirical type I and type II errors on the corresponding test set. Due to the limited sample size, it is clear from Table \ref{tb::breast:cancer} that when $\alpha$ is small, the minimum sample size requirement for the NP umbrella algorithm is not satisfied. 
The \verb+pNP-sLDA+, on the other hand, took advantage of the parametric assumption and the corresponding violation rate is under $\delta_0$ throughout all choices of $\alpha$. 
%

\begin{table}[ht]
\caption{Violation rate and average type II errors over 1,000 replications for the breast cancer data. Here, $\delta_0=0.1$ and $\alpha$ varies from 0.05 to 0.15.   \label{tb::breast:cancer}}
\begin{center}
\begin{tabular}{lrrrr}
\hline
$\alpha$&vio (NP)&vio (pNP)&type II (NP)&type II (pNP)\\
\hline
0.05&NA&0.021&NA&0.663\\
0.06&NA&0.025&NA&0.626\\
0.07&NA&0.029&NA&0.594\\
0.08&NA&0.015&NA&0.565\\
0.09&0.111&0.015&0.448&0.536\\
0.1&0.111&0.015&0.448&0.515\\
\hline
\end{tabular}
\end{center}
\end{table}

\section{Discussion}
This work develops Neyman-Pearson (NP) classification theory and methodology under parametric model assumptions. Most specifically, based on the linear discriminant analysis (LDA) model, we develop a new parametric model-based thresholding rule for high probability type I error control, and this complements the nonparametric NP umbrella algorithm when the minimum sample size requirement of the latter is not met. In practice, when the minimum sample size requirement is met and the scoring function depends on more than a few features, the NP umbrella algorithm is still recommended based on better empirical performance.   For future work, it would be interesting to investigate NP classifiers under other parametric  settings, such as quadratic discriminant analysis (QDA) model and heavy-tailed distributions which are appropriate to model financial data. We expect that new model-specific thresholding rules will be developed for NP classification.  
\section*{Acknowledgement}
The authors would like to thank the Action Editor and three anonymous referees for many constructive comments which greatly improved the paper. This work was partially supported by National Science Foundation grants DMS-1554804 and  DMS-1613338, and National Institutes of Health grant R01 GM120507.%

\appendix 



\section{Neyman-Pearson Lemma} \label{NP lemma}
The oracle classifier under the NP paradigm arises from its close connection to the Neyman-Pearson Lemma in statistical hypothesis testing. 
Hypothesis testing bears strong resemblance to binary classification if we assume the following  model. Let $P_1$ and $P_0$ be two \textit{known} probability distributions on $\mathcal{X}\subset \mathbb{R}^d$. 
Assume that $Y\sim \text{Bern}(\zeta)$ for some $\zeta \in (0,1)$, and the conditional distribution of $X$ given $Y$ is $P_Y$.
Given such a model, the goal of statistical hypothesis testing is to determine if we should reject the null hypothesis that $X$ was generated from $P_0$. 
To this end, we construct a randomized test $\phi:\mathcal{X} \to [0,1]$ that rejects the null with probability $\phi(X)$.  
Two types of errors arise: type~I error occurs when $P_0$ is rejected yet $X\sim P_0$, and type~II error occurs when $P_0$ is not rejected yet $X\sim P_1$. 
The Neyman-Pearson paradigm in hypothesis testing amounts to choosing $\phi$ that solves the following constrained optimization problem
$$
\text{maximize } \E[\phi(X)|Y=1]\,,
\text{ subject to }  \E[\phi(X)|Y= 0 ]\leq\alpha\,,
$$
where $\alpha \in (0,1)$ is the significance level of the test. A solution to this constrained optimization problem is called  \emph{a most powerful test} of level $\alpha$. The Neyman-Pearson Lemma gives mild sufficient conditions for the existence of such a test.

\begin{lemma}[Neyman-Pearson Lemma]\label{lemma:NP}
Let $P_1$ and $P_0$ be two probability measures with densities $f_1
$ and $f_0$ respectively, and denote the density ratio as $r(x)=f_1(x)/f_0(x)$.
For a given significance level $\alpha$, let $C_{\alpha}$ be such that
$P_0\{r(X)>C_{\alpha}\}\leq\alpha$ and $P_0\{r(X)\geq C_{\alpha}\}\geq\alpha$.  
Then,
the most powerful test of level $\alpha$ is
\begin{equation*}
\phi^*_{\alpha}(X)=\left\{
 \begin{array}{ll}
     1 & \text{if $\,\,r(X)>C_{\alpha}$}\,,\\
     0 & \text{if $\,\,r(X)<C_{\alpha}$}\,,\\
     \frac{\alpha-P_0\{r(X)>C_{\alpha}\}}{P_0\{r(X)=C_{\alpha}\}} & \text{if $\,\,r(X)=C_{\alpha}$}\,.
   \end{array}      \right.
\end{equation*}
\end{lemma}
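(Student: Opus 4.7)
The plan is to prove the Neyman--Pearson Lemma via the classical two-step strategy: first verify that $\phi^*_{\alpha}$ has size exactly $\alpha$, then show that any other test $\phi$ with $\E[\phi(X)\mid Y=0]\le \alpha$ has power no larger than $\phi^*_{\alpha}$ via a pointwise domination argument.

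First I would establish the size claim. By construction, $\phi^*_{\alpha}$ takes the value $1$ on $\{r(X)>C_{\alpha}\}$, the value $0$ on $\{r(X)<C_{\alpha}\}$, and a deterministic randomization level on $\{r(X)=C_{\alpha}\}$. Computing
\begin{equation*}
\E[\phi^*_{\alpha}(X)\mid Y=0] \;=\; P_0\{r(X)>C_{\alpha}\} + \frac{\alpha - P_0\{r(X)>C_{\alpha}\}}{P_0\{r(X)=C_{\alpha}\}}\, P_0\{r(X)=C_{\alpha}\} \;=\; \alpha,
\end{equation*}
which uses the standing hypotheses $P_0\{r(X)>C_{\alpha}\}\le \alpha \le P_0\{r(X)\ge C_{\alpha}\}$ to guarantee that the fraction lies in $[0,1]$ (so the randomization is well defined), and reduces trivially to $\alpha$ in the continuous case where $P_0\{r(X)=C_{\alpha}\}=0$.

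Next I would prove optimality by the standard pointwise trick. Let $\phi$ be any (possibly randomized) test with $\E[\phi(X)\mid Y=0]\le \alpha$. Consider the quantity
\begin{equation*}
\Delta(x) \;:=\; \bigl(\phi^*_{\alpha}(x) - \phi(x)\bigr)\bigl(f_1(x) - C_{\alpha}\, f_0(x)\bigr).
\end{equation*}
The key observation is that $\Delta(x)\ge 0$ pointwise: on $\{r>C_{\alpha}\}$ we have $\phi^*_{\alpha}=1\ge \phi$ and $f_1 - C_{\alpha}f_0>0$; on $\{r<C_{\alpha}\}$ we have $\phi^*_{\alpha}=0\le \phi$ and $f_1-C_{\alpha}f_0<0$; on $\{r=C_{\alpha}\}\cap\{f_0>0\}$ the second factor vanishes, and on the $P_0$-null set $\{f_0=0\}$ the first factor's contribution to any $P_0$-integral is negligible while $f_1-C_{\alpha}f_0\ge 0$ makes the product non-negative against $f_1$ too.

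Integrating and rearranging then finishes the argument:
\begin{equation*}
0 \;\le\; \int \Delta(x)\,dx \;=\; \bigl(\E[\phi^*_{\alpha}(X)\mid Y=1] - \E[\phi(X)\mid Y=1]\bigr) \;-\; C_{\alpha}\bigl(\E[\phi^*_{\alpha}(X)\mid Y=0] - \E[\phi(X)\mid Y=0]\bigr).
\end{equation*}
Since $\E[\phi^*_{\alpha}(X)\mid Y=0]=\alpha$ and $\E[\phi(X)\mid Y=0]\le \alpha$, the second parenthesis is non-negative, so with $C_{\alpha}\ge 0$ we conclude $\E[\phi^*_{\alpha}(X)\mid Y=1]\ge \E[\phi(X)\mid Y=1]$. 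The only subtle step is the handling of the equality set $\{r=C_{\alpha}\}$ and the null set $\{f_0=0\}$, which is where the explicit randomization value is forced; this is the one place where care is needed, but it is routine once the size calculation above is in hand.
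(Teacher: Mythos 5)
Your proof is correct: the size computation shows $\E[\phi^*_{\alpha}(X)\mid Y=0]=\alpha$ (with the degenerate case $P_0\{r(X)=C_{\alpha}\}=0$ handled by the two defining inequalities for $C_{\alpha}$), and the pointwise inequality $\bigl(\phi^*_{\alpha}(x)-\phi(x)\bigr)\bigl(f_1(x)-C_{\alpha}f_0(x)\bigr)\ge 0$ integrated against Lebesgue measure gives optimality, using $C_{\alpha}\ge 0$ (which indeed follows from $r\ge 0$ and $P_0\{r(X)>C_{\alpha}\}\le\alpha<1$, a point you assert but could state explicitly). Note, however, that the paper does not prove this lemma at all: it is quoted in the appendix purely for the reader's convenience as the classical Neyman--Pearson result, so there is no in-paper argument to compare against; what you have written is the standard textbook (Lehmann-style) proof, and it is sound, with the only slightly informal step being the handling of the set $\{f_0=0\}$, which is harmless since that set is $P_0$-null and the second factor is $f_1\ge 0$ there.
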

Under mild continuity assumption, we take the \emph{NP oracle classifier} 
\begin{align}\label{eq::oracle}
\phi^*_{\alpha}(x) \,=\, \1\{f_1(x)/f_0(x) > C_{\alpha}\} \,=\, \1\{r(x) > C_{\alpha}\}\,,
\end{align}
as our plug-in target for NP classification.

\section{Additional Lemmas and Propositions}

\begin{lemma}[\cite{Hsu.Kakade.Zhang.2012}]\label{lem: concentration}
Let $A\in R^{m\times n}$ be a matrix, and let $\Sigma := A^{\top} A$. Let $x = (x_1, \cdots, x_n)^{\top}$ be an isotropic multivariate Gaussian random vector with mean zero. For all $t>0$, 
$$
\p\left(\|Ax\|^2 > tr(\Sigma) + 2 \sqrt{tr(\Sigma^2)t} + 2\|\Sigma\|t\right) \leq e^{-t}\,.
$$
\end{lemma}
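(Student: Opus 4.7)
The plan is to reduce the statement to a Chernoff bound on a weighted chi-squared sum, and then optimize the Chernoff parameter by a well-chosen substitution. Since $\|Ax\|^2 = x^\top A^\top A x = x^\top \Sigma x$, I would first diagonalize $\Sigma = U \Lambda U^\top$ with $\Lambda = \mathrm{diag}(\lambda_1,\dots,\lambda_n)$, $\lambda_i \geq 0$, and note that $z := U^\top x$ is again an isotropic Gaussian. Hence $\|Ax\|^2 \stackrel{d}{=} \sum_i \lambda_i z_i^2$ with $z_i$ i.i.d.\ $\mathcal{N}(0,1)$, and the three quantities appearing on the right-hand side can be rewritten as $\mathrm{tr}(\Sigma) = \sum_i \lambda_i$, $\mathrm{tr}(\Sigma^2) = \sum_i \lambda_i^2$, and $\|\Sigma\| = \max_i \lambda_i$. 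The case $\Sigma = 0$ is trivial, so I assume $\mathrm{tr}(\Sigma^2) > 0$.

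Next I would apply the standard Chernoff trick. For any $\eta \in (0, 1/(2\|\Sigma\|))$, Markov's inequality gives
\begin{equation*}
\p\Bigl(\sum_i \lambda_i z_i^2 > s\Bigr) \,\leq\, e^{-\eta s} \prod_{i} (1 - 2\eta \lambda_i)^{-1/2},
\end{equation*}
using the closed-form chi-squared MGF $\E[e^{\eta \lambda_i z_i^2}] = (1 - 2\eta \lambda_i)^{-1/2}$. Taking logarithms, the task becomes bounding $-\tfrac{1}{2}\sum_i \log(1 - 2\eta \lambda_i)$. I would use the identity $-\log(1-u) = u + \int_0^u v/(1-v)\,dv$ together with the monotonicity $1/(1-v) \leq 1/(1-2\eta\|\Sigma\|)$ for $v \in [0,2\eta\lambda_i]$ to obtain the Bernstein-flavored estimate
\begin{equation*}
-\tfrac{1}{2}\sum_i \log(1 - 2\eta \lambda_i) \,\leq\, \eta\,\mathrm{tr}(\Sigma) + \frac{\eta^2 \mathrm{tr}(\Sigma^2)}{1 - 2\eta \|\Sigma\|}.
\end{equation*}

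Finally I would specialize to $s = \mathrm{tr}(\Sigma) + 2\sqrt{\mathrm{tr}(\Sigma^2)\,t} + 2\|\Sigma\|\,t$ and optimize in $\eta$. Writing $v = \mathrm{tr}(\Sigma^2)$, $b = \|\Sigma\|$, the clean choice is
\begin{equation*}
\eta \,=\, \frac{\sqrt{t}}{\sqrt{v} + 2b\sqrt{t}},
\end{equation*}
which lies in $(0, 1/(2b))$ since $\sqrt{v}>0$, and satisfies $1 - 2\eta b = \sqrt{v}/(\sqrt{v}+2b\sqrt{t})$. A short algebraic check then shows $-\eta s + \eta \mathrm{tr}(\Sigma) + \eta^2 v/(1-2\eta b) = -t$ exactly, so the Chernoff inequality delivers the bound $e^{-t}$ as claimed. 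The main technical hurdle is this last step: picking the right Chernoff parameter so that the Gaussian-type correction $2\eta\sqrt{vt}$ cancels $\eta^2 v/(1-2\eta b)$ while the Bernstein-type term $2\eta b t$ absorbs the leading $t$; once the $\eta$ above is guessed, the remaining verification collapses to multiplying both sides of the target inequality by $(\sqrt{v}+2b\sqrt{t})$ and cancelling like terms.
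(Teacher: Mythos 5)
Your proof is correct. Note that the paper does not prove this lemma at all --- it is quoted verbatim from \cite{Hsu.Kakade.Zhang.2012} --- so there is no in-paper argument to compare against; your Chernoff derivation (diagonalize $\Sigma$, bound $-\tfrac12\sum_i\log(1-2\eta\lambda_i)$ by $\eta\,\mathrm{tr}(\Sigma)+\eta^2\mathrm{tr}(\Sigma^2)/(1-2\eta\|\Sigma\|)$, then take $\eta=\sqrt{t}/(\sqrt{\mathrm{tr}(\Sigma^2)}+2\|\Sigma\|\sqrt{t})$, which indeed makes the exponent exactly $-t$) is essentially the standard proof of the cited result, specialized to the zero-mean case the paper needs.
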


\begin{lemma} \label{lem:beta}
Recall that $\beta^{\text{Bayes}} = \Sigma^{-1}\mu_d = \Sigma^{-1} (\mu^1 - \mu^0)$ and $A = \{j: \{ \Sigma^{-1}\mu_d \}_j\neq 0 \}$. Denote by  $\beta^* = (\Sigma_{AA})^{-1}(\mu^1_{A}-\mu^0_{A})$\, and $\widetilde{\beta}^{\text{Bayes}}$ by letting $\widetilde{\beta}^{\text{Bayes}}_A = \beta^{*}$ and $\widetilde{\beta}^{\text{Bayes}}_{A^c} = 0$.  Then $\widetilde{\beta}^{\text{Bayes}} = \beta^{\text{Bayes}}$.
\end{lemma}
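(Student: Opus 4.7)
The plan is to use the block decomposition of $\Sigma$ displayed earlier in the paper together with the defining property of $A$. Since $A$ is the support of $\beta^{\text{Bayes}} = \Sigma^{-1}\mu_d$, we immediately have $\beta^{\text{Bayes}}_{A^c} = 0$, which matches $\widetilde\beta^{\text{Bayes}}_{A^c} = 0$. So the entire content of the lemma is the identity $\beta^{\text{Bayes}}_A = \beta^* = (\Sigma_{AA})^{-1}(\mu^1_A - \mu^0_A)$ on the support.

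To obtain this, I would start from the linear system $\Sigma\beta^{\text{Bayes}} = \mu_d$ and write it out in block form using the partition of $\Sigma$ into $\Sigma_{AA}, \Sigma_{AA^c}, \Sigma_{A^cA}, \Sigma_{A^cA^c}$ given in the Notations section. Reading off the top block row and using $\beta^{\text{Bayes}}_{A^c} = 0$ yields
\begin{equation*}
\Sigma_{AA}\,\beta^{\text{Bayes}}_A + \Sigma_{AA^c}\,\beta^{\text{Bayes}}_{A^c} = \mu_{d,A} \quad\Longrightarrow\quad \Sigma_{AA}\,\beta^{\text{Bayes}}_A = \mu^1_A - \mu^0_A.
\end{equation*}
Since $\Sigma$ is positive definite, so is its principal submatrix $\Sigma_{AA}$, and therefore $\Sigma_{AA}$ is invertible. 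Multiplying both sides by $(\Sigma_{AA})^{-1}$ gives $\beta^{\text{Bayes}}_A = (\Sigma_{AA})^{-1}(\mu^1_A - \mu^0_A) = \beta^{*}$, which is exactly $\widetilde\beta^{\text{Bayes}}_A$ by construction.

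Combining with $\beta^{\text{Bayes}}_{A^c} = 0 = \widetilde\beta^{\text{Bayes}}_{A^c}$ concludes that $\widetilde\beta^{\text{Bayes}} = \beta^{\text{Bayes}}$. There is essentially no obstacle here: the argument is purely linear-algebraic and the only nontrivial fact used is that a positive definite matrix has positive definite (hence invertible) principal submatrices. The whole proof is two or three lines and requires no probabilistic input.
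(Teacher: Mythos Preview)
Your proposal is correct and essentially identical to the paper's own proof: both start from $\Sigma\beta^{\text{Bayes}}=\mu_d$, write it in block form, use $\beta^{\text{Bayes}}_{A^c}=0$ to reduce to $\Sigma_{AA}\beta^{\text{Bayes}}_A=\mu^1_A-\mu^0_A$, and invert $\Sigma_{AA}$. The only minor addition in your write-up is the explicit remark that $\Sigma_{AA}$ is invertible as a principal submatrix of a positive definite matrix, which the paper leaves implicit.
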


Recall these notations for the following lemma: let $\mathcal{S}_0 = \{x^0_{1}, \cdots, x^0_{n_0}\}$ be an i.i.d. sample of class $0$ of size  $n_0$ and $\mathcal{S}_1 = \{x^1_{1}, \cdots, x^1_{n_1}\}$ be an i.i.d. sample of class $1$ of size  $n_1$, and $n = n_0 + n_1$.  We use $\mathcal{S}_0$ and $\mathcal{S}_1$ to find an estimate of $\beta^{\text{Bayes}}$.   Let $\widetilde{X}$ be the $(n\times d)$ centered predictor matrix, whose column-wise mean is zero, which can be decomposed into $\widetilde{X}^0$, the $(n_0\times d)$ centered predictor matrix based on class $0$ observations and $\widetilde{X}^1$,  the $(n_1\times d)$ centered predictor matrix based on class $1$ observations.  Let $C^{(n)} = (\widetilde{X})^{\top}\widetilde{X} /n$, then $$C^{(n)} = \frac{n_0}{n}\widehat{\Sigma}^0 + \frac{n_1}{n}\widehat{\Sigma}^1\,,$$
where
$\widehat{\Sigma}^0 = (\widetilde{X}^{0})^{T} \widetilde{X}^{0} / n_0\,, \text{ and }\widehat{\Sigma}^1 = (\widetilde{X}^{1})^{T} \widetilde{X}^{1} / n_1\,.$
\begin{lemma}\label{lem:A1}
Suppose there exists $c > 0$ such that $\Sigma_{jj}\leq c$ for all $j = 1, \cdots, d$.  
There exist constants $\varepsilon_0$ and $c_1$, $c_2$ such that for any $\varepsilon \leq \varepsilon_0$ we have, 

\begin{equation}\label{eqn:A1}
\p\left(|(\widehat{\mu}^1_{j} - \widehat{\mu}^0_{j})-(\mu_{j}^1 - \mu_{j}^0)|\geq \varepsilon)\right) \leq 2\exp(-n_0 \varepsilon^2 c_2)  + 2\exp(-n_1 \varepsilon^2 c_2)\,, \text{ for } j = 1, \cdots, d\,. 
\end{equation}

\begin{equation} \label{eqn:A2}
\p\left( |\widehat{\Sigma}^l_{ij} - \Sigma_{ij}| \geq \varepsilon \right) \leq 2 \exp (-n_l \varepsilon^2 c_1)\,, \text{  for } l = 0, 1, \text{ }i,j = 1, \cdots, d\,.
\end{equation}

\begin{equation}\label{eqn:A3}
\p\left(\|\widehat{\Sigma}^l_{AA} - \Sigma_{AA}\|_{\infty}\geq \varepsilon\right)\leq 2 s^2 \exp(-n_l s^{-2} \varepsilon^2 c_1)\,.
\end{equation}

\begin{equation}\label{eqn:A4}
\p\left(\|\widehat{\Sigma}_{A^cA} - \Sigma_{A^cA}\|_{\infty}\geq \varepsilon\right)\leq (d-s)s\exp(-n_l s^{-2} \varepsilon^2 c_1)\,.
\end{equation}

\begin{equation}\label{eqn:A5}
\p\left(\| (\widehat{\mu}^1 -\widehat{\mu}^0) - (\mu^1 - \mu^0) \|_{\infty}\geq \varepsilon  \right) \leq 2d\exp(-n_0 \varepsilon^2 c_2)  + 2d\exp(-n_1 \varepsilon^2 c_2)\,. 
\end{equation}

\begin{equation}\label{eqn:A6}
\p\left(\| (\widehat{\mu}^1_A -\widehat{\mu}^0_A) - (\mu^1_A - \mu^0_A) \|_{\infty}\geq \varepsilon  \right) \leq 2s\exp(-n_0 \varepsilon^2 c_2)  + 2s\exp(-n_1 \varepsilon^2 c_2)\,. 
\end{equation}

\begin{equation}\label{eqn:A7}
\p\left(|C^{(n)}_{ij} -  \Sigma_{ij}|\geq \varepsilon\right)
 \leq  2 \exp\left( - \frac{c_1 \varepsilon^2 n^2}{4 n_0}\right) + 2 \exp\left( - \frac{c_1 \varepsilon^2 n^2}{4 n_1}\right)\,.
\end{equation}

\begin{equation}\label{eqn:A8}
\p\left(|C^{(n)}_{AA} - \Sigma_{AA}|\geq \varepsilon\right)
\leq  2s^2 \exp\left(- \frac{c_1 \varepsilon^2 n^2}{4  s^2n_0}  \right) + 2s^2 \exp\left(- \frac{c_1 \varepsilon^2 n^2}{4  s^2n_1}  \right)\,.
\end{equation}

\begin{equation}\label{eqn:A9}
\p\left(|C^{(n)}_{A^cA} - \Sigma_{A^c A}|\geq \varepsilon\right)
\leq  (d-s) s \exp\left( -\frac{c_1  \varepsilon^2n^2}{4s^2n_0} \right)  + (d-s) s \exp\left( -\frac{c_1 \varepsilon^2n^2 }{4s^2n_1} \right)\,.
\end{equation}

\end{lemma}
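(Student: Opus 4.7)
The plan is to prove the nine inequalities in a cascade: the substantive scalar statements are (A1), (A2), and (A7), while (A3)--(A6), (A8), and (A9) will follow from these by standard union-bound arguments. Throughout I rely on the Gaussian assumption $X^l \sim \mathcal{N}(\mu^l, \Sigma)$ and the uniform bound $\Sigma_{jj} \leq c$.

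For (A1), I would write $(\widehat{\mu}^1_j - \widehat{\mu}^0_j) - (\mu^1_j - \mu^0_j) = (\widehat{\mu}^1_j - \mu^1_j) - (\widehat{\mu}^0_j - \mu^0_j)$, a difference of two independent centered Gaussians with variances $\Sigma_{jj}/n_l \leq c/n_l$. A triangle-inequality split of the event $\{|\cdot| \geq \varepsilon\}$ followed by the standard Gaussian tail bound $\p(|\mathcal{N}(0,\sigma^2)| \geq t) \leq 2\exp(-t^2/(2\sigma^2))$ applied to each half yields (A1) with $c_2 = 1/(8c)$. Inequality (A5) then comes from a union bound over $j = 1, \ldots, d$, and (A6) from a union bound over $j \in A$.

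For (A2), I would decompose
$$\widehat{\Sigma}^l_{ij} - \Sigma_{ij} = \frac{1}{n_l}\sum_{k=1}^{n_l}\bigl(Z^l_{k,i}Z^l_{k,j} - \Sigma_{ij}\bigr) - (\bar Z^l_i)(\bar Z^l_j),$$
where $Z^l_k := X^l_k - \mu^l$ and $\bar Z^l = \widehat{\mu}^l - \mu^l$. The first term is an average of i.i.d. centered products of jointly Gaussian variables; since $\Sigma_{ii}, \Sigma_{jj} \leq c$, these products are sub-exponential with parameters depending only on $c$, and Bernstein's inequality yields a sub-Gaussian tail $2\exp(-c' n_l \varepsilon^2)$ provided $\varepsilon \leq \varepsilon_0$ for a small universal $\varepsilon_0$. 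The residual $|\bar Z^l_i \bar Z^l_j|$ is shown to lie below $\varepsilon/2$ with exponentially small failure probability by applying a Gaussian tail bound to each factor at level $\sqrt{\varepsilon/2}$. Combining by the triangle inequality and absorbing constants into $c_1$ gives (A2). Then (A3) follows by observing $\|\widehat{\Sigma}^l_{AA} - \Sigma_{AA}\|_\infty \leq \max_{i \in A}\sum_{j \in A}|\widehat{\Sigma}^l_{ij} - \Sigma_{ij}|$, requiring each of the $s$ entries per row to deviate by at most $\varepsilon/s$, and union-bounding the $s \times s$ entries via (A2); (A4) is analogous on the $(d-s)\times s$ off-diagonal block.

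For (A7), I would use the identity $C^{(n)} = (n_0/n)\widehat{\Sigma}^0 + (n_1/n)\widehat{\Sigma}^1$ to write
$$C^{(n)}_{ij} - \Sigma_{ij} = \frac{n_0}{n}(\widehat{\Sigma}^0_{ij} - \Sigma_{ij}) + \frac{n_1}{n}(\widehat{\Sigma}^1_{ij} - \Sigma_{ij}).$$
A deviation of size at least $\varepsilon$ forces $|\widehat{\Sigma}^l_{ij} - \Sigma_{ij}| \geq \varepsilon n/(2n_l)$ for some $l \in \{0,1\}$; substituting this threshold into (A2) and summing over $l$ produces the exponent $c_1\varepsilon^2 n^2/(4 n_l)$ in the claimed form. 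Inequalities (A8) and (A9) then follow from (A7) by the same entrywise union-bound argument used for (A3) and (A4). The main obstacle is the Bernstein step inside (A2): one needs the sub-exponential parameters of $Z^l_{k,i}Z^l_{k,j}$ to be controlled uniformly in $i,j$ by the single constant $c$, and the cutoff $\varepsilon \leq \varepsilon_0$ must be chosen small enough to keep us in the variance-dominated regime of Bernstein (so the rate stays sub-Gaussian) and to render the cross-term $(\bar Z^l_i)(\bar Z^l_j)$, of order $1/n_l$, a negligible correction. Once that calibration is settled, every remaining inequality is a mechanical union-bound or triangle-inequality step.
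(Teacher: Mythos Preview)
Your proposal is correct and follows essentially the same route as the paper. The paper simply defers (A1)--(A6) to \cite{mai2012direct} and then derives (A7)--(A9) from (A2)--(A4) via the same ``$A+B\ge\varepsilon \Rightarrow A\ge\varepsilon/2$ or $B\ge\varepsilon/2$'' split you use; the only cosmetic difference is that the paper obtains (A8) and (A9) by applying the split directly to (A3) and (A4), whereas you apply the split first (yielding (A7)) and then union-bound, which commutes and gives the identical exponent.
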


\begin{lemma}\label{lem:prod_deviation}
Recall that $\kappa = \| \Sigma_{A^c A} (\Sigma_{AA})^{-1}\|_{\infty}$, $\varphi = \|(\Sigma_{AA})^{-1}\|_{\infty}$ and $\Delta = \|\mu^1_A - \mu^0_A\|_{\infty}$. Let $C^{(n)}_{A^cA} = \frac{n_0}{n} (\widetilde X^{0}_{A^c})^{\top}  \widetilde X^{0}_{A} + \frac{n_1}{n} (\widetilde X^{1}_{A^c})^{\top}  \widetilde X^{1}_{A} = \frac{n_0}{n} \widehat{\Sigma}^0_{A^cA} + \frac{n_1}{n} \widehat{\Sigma}^1_{A^cA}$, and    $C^{(n)}_{AA} = \frac{n_0}{n}\widehat{\Sigma}^0_{AA} + \frac{n_1}{n}\widehat{\Sigma}^1_{AA}$. There exist constants $c_1$ and $\varepsilon_0$ such that for any $\varepsilon \leq \min (\varepsilon_0, 1/\varphi)$, we have
$$
\p\left(\|C^{(n)}_{A^cA}(C^{(n)}_{AA})^{-1} - \Sigma_{A^cA}(\Sigma_{AA})^{-1}\|_{\infty}\geq (\kappa + 1)\varepsilon \varphi (1-\varphi \varepsilon)^{-1}\right)  \leq f(d, s, n_0, n_1, \varepsilon)\,,
$$
where $f(d, s, n_0, n_1, \varepsilon) = (d+s) s \exp\left( -\frac{c_1  \varepsilon^2n^2}{4s^2n_0} \right)  + (d+s) s \exp\left( -\frac{c_1 \varepsilon^2n^2 }{4s^2n_1} \right)$, and $n = n_0 + n_1$. 

\end{lemma}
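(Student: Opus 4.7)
\medskip

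\noindent\textbf{Proof plan for Lemma \ref{lem:prod_deviation}.}  The plan is to reduce the claim to the two concentration bounds \eqref{eqn:A8} and \eqref{eqn:A9} already established in Lemma \ref{lem:A1}, via an elementary matrix perturbation argument.  First I would write the difference of matrix products in a convenient telescoped form.  Setting $M = C^{(n)}_{AA}$, $N = \Sigma_{AA}$, $P = C^{(n)}_{A^cA}$ and $Q = \Sigma_{A^cA}$, I use
\[
PM^{-1} - QN^{-1} \;=\; (P-Q)M^{-1} \;+\; QN^{-1}(N-M)M^{-1},
\]
which follows from the identity $M^{-1} - N^{-1} = N^{-1}(N-M)M^{-1}$.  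Taking $\|\cdot\|_{\infty}$ and using submultiplicativity of the induced matrix $\infty$-norm gives
\[
\|PM^{-1}-QN^{-1}\|_{\infty} \;\leq\; \|M^{-1}\|_{\infty}\bigl(\|P-Q\|_{\infty} + \kappa\,\|N-M\|_{\infty}\bigr),
\]
using $\|QN^{-1}\|_{\infty} = \|\Sigma_{A^cA}(\Sigma_{AA})^{-1}\|_{\infty} = \kappa$.

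Next I would control $\|M^{-1}\|_{\infty} = \|(C^{(n)}_{AA})^{-1}\|_{\infty}$ by a Neumann-series perturbation around $\Sigma_{AA}^{-1}$.  On the event $\{\|C^{(n)}_{AA}-\Sigma_{AA}\|_{\infty}\leq \varepsilon\}$, provided $\varphi\varepsilon < 1$ (which is guaranteed by the assumption $\varepsilon \leq 1/\varphi$, with a shrink by a constant factor if needed), a standard Neumann series argument gives
\[
\|(C^{(n)}_{AA})^{-1}\|_{\infty} \;\leq\; \frac{\varphi}{1-\varphi\varepsilon}.
\]
Combined with $\|P-Q\|_{\infty}\leq\varepsilon$ and $\|N-M\|_{\infty}\leq\varepsilon$, this yields
\[
\|C^{(n)}_{A^cA}(C^{(n)}_{AA})^{-1} - \Sigma_{A^cA}(\Sigma_{AA})^{-1}\|_{\infty} \;\leq\; \frac{(\kappa+1)\varepsilon\,\varphi}{1-\varphi\varepsilon},
\]
which is exactly the deviation bound claimed.

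Finally, I would turn the event bound into a probability bound by a union bound over the two ``good'' events.  Invoking \eqref{eqn:A8} and \eqref{eqn:A9} with the same $\varepsilon$:
\[
\p\bigl(\|C^{(n)}_{AA}-\Sigma_{AA}\|_{\infty}\geq \varepsilon\bigr) \;\leq\; 2s^2\exp\!\Bigl(-\tfrac{c_1\varepsilon^2 n^2}{4s^2 n_0}\Bigr) + 2s^2\exp\!\Bigl(-\tfrac{c_1\varepsilon^2 n^2}{4s^2 n_1}\Bigr),
\]
\[
\p\bigl(\|C^{(n)}_{A^cA}-\Sigma_{A^cA}\|_{\infty}\geq \varepsilon\bigr) \;\leq\; (d-s)s\exp\!\Bigl(-\tfrac{c_1\varepsilon^2 n^2}{4s^2 n_0}\Bigr) + (d-s)s\exp\!\Bigl(-\tfrac{c_1\varepsilon^2 n^2}{4s^2 n_1}\Bigr).
\]
Adding these and noting that $2s^2 + (d-s)s = (d+s)s$ recovers precisely the stated $f(d,s,n_0,n_1,\varepsilon)$.

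There is no real obstacle here beyond bookkeeping: the main subtlety is ensuring the Neumann series converges, which is why the hypothesis $\varepsilon \leq \min(\varepsilon_0, 1/\varphi)$ enters (the $\varepsilon_0$ is inherited from Lemma \ref{lem:A1} and controls the sub-Gaussian tail regime for the entrywise concentration).  I would make sure to apply the union bound \emph{before} passing to the deviation bound on the matrix-product difference, so that the event on which the Neumann perturbation applies and the events on which the concentration inequalities hold are all handled simultaneously.
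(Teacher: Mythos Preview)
Your proposal is correct and follows essentially the same approach as the paper: a matrix perturbation argument reducing to a Neumann-series bound on $\|(C^{(n)}_{AA})^{-1}\|_{\infty}$, followed by the union bound over the concentration events \eqref{eqn:A8} and \eqref{eqn:A9}. The only cosmetic difference is that the paper expands the product difference into four terms (introducing $\eta_3 = \|(C^{(n)}_{AA})^{-1}-(\Sigma_{AA})^{-1}\|_{\infty}$ explicitly and then bounding it by $\varphi^2\eta_1(1-\varphi\eta_1)^{-1}$), whereas your two-term telescoping $PM^{-1}-QN^{-1}=(P-Q)M^{-1}+QN^{-1}(N-M)M^{-1}$ reaches the identical bound $(\kappa+1)\varepsilon\varphi(1-\varphi\varepsilon)^{-1}$ more directly; the arithmetic $2s^2+(d-s)s=(d+s)s$ that recovers $f$ is also exactly what the paper is (implicitly) doing.
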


\begin{lemma}\label{lemma-sub1}
Let $\widetilde{\mathcal{C}}^0$ be in Equation \eqref{eqn:C0tilde}, $\mathcal{C}$ as in Lemma \ref{lem:large_prob_set}, and $\widetilde{X}_A = \Sigma_{AA}^{-1/2} X_A$. Assume $\lambda_m=\lambda_{\min}(\Sigma^{-1/2}_{AA})$ is bounded from below,  then we have
\[
P_0\{C^{**}_\alpha\leq s^*(X)\leq C^{**}_\alpha+\delta | X\in \mathcal{C}\}\geq (1-\delta_3)P_0(C^{**}_\alpha\leq (\mu^1_A-\mu^0_A)^{\top}\Sigma^{-1/2}_{AA}\widetilde{X}_A\leq C^{**}_\alpha+\delta |\widetilde{\mathcal{C}}^0)\,,
\]
where $\delta_3 = \exp\{-(n_0 \wedge n_1)^{1/2}\}$.  
\end{lemma}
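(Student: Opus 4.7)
The plan is to transfer the probability from the $X$-space (conditioned on $\mathcal{C}$) to the $\widetilde X_A$-space (conditioned on $\widetilde{\mathcal{C}}^0$) by a linear change of variables, paying a multiplicative price of at most $(1-\delta_3)$ for the translation between the two conditioning sets. First I would rewrite $s^*(X)$ in the transformed coordinates. Since $\beta^{\text{Bayes}}=\Sigma^{-1}\mu_d$ is supported on $A$ (Lemma \ref{lem:beta}), we have
\[
s^*(X) \,=\, (\beta^*)^{\top}X_A \,=\, (\mu^1_A - \mu^0_A)^{\top}\Sigma_{AA}^{-1}X_A \,=\, (\mu^1_A - \mu^0_A)^{\top}\Sigma_{AA}^{-1/2}\widetilde X_A.
\]
Thus the event $\{C^{**}_\alpha \leq s^*(X) \leq C^{**}_\alpha+\delta\}$ is measurable with respect to $\widetilde X_A$, so the conditional probability on the right-hand side is well defined and matches the left-hand side numerator event-by-event.

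Next I would verify two properties of $\widetilde{\mathcal{C}}^0$ that make the comparison work. First, $\widetilde X_A \in \widetilde{\mathcal{C}}^0$ must imply $X \in \mathcal{C}$. Recall $\mathcal{C}=\{X:\|X_A\|\leq c_1' s^{1/2}(n_0\wedge n_1)^{1/4}\}$, and because $\|X_A\|=\|\Sigma_{AA}^{1/2}\widetilde X_A\|\leq \|\Sigma_{AA}^{1/2}\|\,\|\widetilde X_A\|$, it suffices that $\widetilde{\mathcal{C}}^0$ enforces an $\ell^2$ bound on $\widetilde X_A$ of the form $c_1'\lambda_m \, s^{1/2}(n_0\wedge n_1)^{1/4}$, which is nontrivial precisely because the assumption $\lambda_m=\lambda_{\min}(\Sigma_{AA}^{-1/2})$ being bounded away from zero means $\|\Sigma_{AA}^{1/2}\|=1/\lambda_m$ is bounded above. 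Second, $P_0(\widetilde X_A \in \widetilde{\mathcal{C}}^0)\geq 1-\delta_3$. Under $P_0$, $\widetilde X_A\sim \mathcal{N}(\Sigma_{AA}^{-1/2}\mu^0_A, I_s)$, and a direct application of the Gaussian concentration bound in Lemma \ref{lem: concentration}, with $t\sim (n_0\wedge n_1)^{1/2}$, yields the target tail probability $\delta_3=\exp\{-(n_0\wedge n_1)^{1/2}\}$; this mirrors the argument already carried out in Lemma \ref{lem:large_prob_set} for the untransformed $X_A$.

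With those two facts in hand the conclusion follows from a short conditional-probability chain. Writing $E=\{C^{**}_\alpha \leq s^*(X)\leq C^{**}_\alpha+\delta\}$,
\[
P_0(E\mid X\in\mathcal{C}) \,=\, \frac{P_0(E, X\in\mathcal{C})}{P_0(X\in\mathcal{C})} \,\geq\, P_0(E, X\in\mathcal{C}) \,\geq\, P_0(E, \widetilde X_A\in\widetilde{\mathcal{C}}^0),
\]
using $P_0(X\in\mathcal{C})\leq 1$ for the first inequality and the inclusion $\{\widetilde X_A\in\widetilde{\mathcal{C}}^0\}\subseteq\{X\in\mathcal{C}\}$ for the second. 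Factoring,
\[
P_0(E, \widetilde X_A\in\widetilde{\mathcal{C}}^0) \,=\, P_0(E\mid \widetilde X_A\in\widetilde{\mathcal{C}}^0)\,P_0(\widetilde X_A\in\widetilde{\mathcal{C}}^0) \,\geq\, (1-\delta_3)\,P_0(E\mid \widetilde X_A\in\widetilde{\mathcal{C}}^0),
\]
which is the claimed bound after substituting the rewritten $s^*$.

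The main obstacle is not the conditioning manipulation, which is routine, but the \emph{geometric} step: confirming that the norm ball defining $\widetilde{\mathcal{C}}^0$ is calibrated so that its image under $\Sigma_{AA}^{1/2}$ sits inside the $A$-projection of $\mathcal{C}$, while also being large enough for the Gaussian concentration to leave only mass $\delta_3$ outside it. The assumption $\lambda_m \geq$ const is what makes both sides of this trade-off compatible; without it, either the inclusion $\widetilde{\mathcal{C}}^0\subseteq \mathcal{C}$ (via $\Sigma_{AA}^{1/2}$) or the probability bound would break. Everything else—the change-of-variables identity for $s^*$ and the probability inequality chain—is mechanical.
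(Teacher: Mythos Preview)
Your proposal follows the same architecture as the paper's proof: rewrite $s^*(X)$ via Lemma \ref{lem:beta}, establish $\widetilde{\mathcal{C}}^0\subset\mathcal{C}$, bound $P_0(\widetilde{\mathcal{C}}^0)\geq 1-\delta_3$ by Gaussian concentration, and conclude via the identical conditional-probability chain $P_0(E\mid\mathcal{C})\geq P_0(E\cap\mathcal{C})\geq P_0(E\cap\widetilde{\mathcal{C}}^0)\geq(1-\delta_3)P_0(E\mid\widetilde{\mathcal{C}}^0)$.

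There is one concrete slip in the inclusion step. The set $\widetilde{\mathcal{C}}^0$ in \eqref{eqn:C0tilde} constrains the \emph{centered} variable $V^0=\widetilde X_A-\Sigma_{AA}^{-1/2}\mu^0_A$, not $\widetilde X_A$ itself, so your route $\|X_A\|\leq \|\Sigma_{AA}^{1/2}\|\,\|\widetilde X_A\|$ does not close the inclusion (it leaves an uncontrolled additive $\|\Sigma_{AA}^{-1/2}\mu^0_A\|$). The paper instead passes through the centered intermediate set $\mathcal{C}^0=\{\|X_A-\mu^0_A\|\leq L\}$ from the proof of Lemma \ref{lem:large_prob_set}: since $\|V^0\|\geq\lambda_m\|X_A-\mu^0_A\|$, the bound $\|V^0\|\leq\lambda_m L$ gives $\widetilde{\mathcal{C}}^0\subset\mathcal{C}^0\subset\mathcal{C}$ directly. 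The centering is also what lets Lemma \ref{lem: concentration} (stated for mean-zero isotropic Gaussians) apply to $V^0$, and the lower bound on $\lambda_m$ enters by allowing the constant $c_1'$ in Lemma \ref{lem:large_prob_set} to be chosen with $c_1'\geq 2/\lambda_m$, so that the concentration ball $\{\|V^0\|^2\leq 4st\}$ sits inside $\widetilde{\mathcal{C}}^0=\{\|V^0\|\leq\lambda_m c_1'\sqrt{st}\}$. With this adjustment your argument is complete and coincides with the paper's.
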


\begin{lemma}\label{lemma-sub2}
Let us denote $a=\Sigma^{-1/2}_{AA}(\mu^1_A-\mu^0_A)$. Assume there exist $M>0$ such that the following conditions hold:
\begin{itemize}
\item[i)] $C^{**}_\alpha - (\mu^1_A-\mu^0_A)^{\top}\Sigma^{-1}_{AA}\mu^0_A\in (C^1, C^2)$ for some constants $C^1$, $C^2$.  
\item[ii)] When $s=1$, $a$ is a scalar. $f_{\mathcal{N}(0,|a|)}$ is bounded below on interval $(C^1, C^2+\delta^*)$ by $M$.
\item[iii)] When $s=2$, $a=(a_1, a_2)^{\top}$ is a vector. 
\[\left(\frac{1}{\sqrt{2\pi}|a_1|\big(\frac{a^2_2}{a^2_1}+1\big)}\exp\{-\frac{a^2_1+a^2_2-a^2_2 a^2_1}{a^2_1 (a^2_1+a^2_2)} t^2\}\right) \left(2\Phi(\sqrt{\widetilde{L}^2-\frac{a^2_1+a^2_2-a^2_2 a^2_1}{a^2_1 (a^2_1+a^2_2)} t^2})-1\right)\]
 is bounded below on interval $t\in(C^1, C^2+\delta^*)$ by $M$.
\end{itemize}
Then, for $s\leq 2$, for any $\delta\in(0,\delta^*)$, there exists $M_1$ which is a constant depending on $M$, such that the following inequality holds
\[
P_0(C^{**}_\alpha\leq (\mu^1_A-\mu^0_A)^{\top}\Sigma^{-1/2}_{AA}\widetilde{X}_A\leq C^{**}_\alpha+\delta |\widetilde{\mathcal{C}}^0)\geq M_1\delta.
\]
\end{lemma}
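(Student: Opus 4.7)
The plan is to translate the event into an explicit integral over the density of a low-dimensional Gaussian, then use the uniform positive lower bounds supplied by conditions (ii) and (iii) to extract a factor of $\delta$. Under $P_0$, $\widetilde{X}_A = \Sigma_{AA}^{-1/2}X_A \sim \mathcal{N}(\Sigma_{AA}^{-1/2}\mu_A^0, I_s)$, so $(\mu_A^1-\mu_A^0)^{\top}\Sigma_{AA}^{-1/2}\widetilde{X}_A = a^{\top}\widetilde{X}_A$ is Gaussian with mean $(\mu_A^1-\mu_A^0)^{\top}\Sigma_{AA}^{-1}\mu_A^0$ and variance $\|a\|^2$. After subtracting this mean, the target interval for the shifted variable is $(t,t+\delta)$ with $t = C_\alpha^{**} - (\mu_A^1-\mu_A^0)^{\top}\Sigma_{AA}^{-1}\mu_A^0 \in (C^1,C^2)$ by condition (i). Since the conditional probability equals the joint probability divided by $P_0(\widetilde{\mathcal{C}}^0)\leq 1$, it suffices to lower bound the corresponding joint probability by $M_1\delta$.

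For $s=1$, the centered variable $a^{\top}\widetilde{X}_A$ is simply the scalar Gaussian with density $f_{\mathcal{N}(0,|a|)}$, and I would integrate this density over the portion of $(t,t+\delta)$ compatible with $\widetilde{\mathcal{C}}^0$. Condition (ii) furnishes a uniform density lower bound $M$ on $(C^1,C^2+\delta^*) \supset (t,t+\delta)$, which yields a joint probability of at least $M\delta$ up to a harmless factor capturing the fraction of the interval retained by $\widetilde{\mathcal{C}}^0$. For $s=2$, I would parametrize $\widetilde{X}_A=(\widetilde{Y}_1,\widetilde{Y}_2)^{\top}$ via the change of variables $(t,\widetilde{s}) = (a_1 Y_1 + a_2 Y_2,\,Y_2)$, where $(Y_1,Y_2)$ are the independent centered standard Gaussians. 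The Jacobian $1/|a_1|$ produces the prefactor $\tfrac{1}{2\pi|a_1|}$, and integrating $\widetilde{s}$ subject to the requirement that $(Y_1,Y_2)$ lie in the slice of $\widetilde{\mathcal{C}}^0$ at value $t$ produces a Gaussian density in $t$ multiplied by a standard-normal mass $2\Phi(\cdot)-1$ whose argument is the half-length of that slice --- exactly the product displayed in condition (iii). Integrating that product over $(t_0,t_0+\delta)$ and invoking the uniform lower bound $M$ from condition (iii) again yields $\geq M\delta$.

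The main obstacle I anticipate is the bookkeeping of the slice geometry in the two-dimensional case: I need to verify carefully that the argument $\sqrt{\widetilde{L}^2 - (\cdots)t^2}$ inside $\Phi$ in condition (iii) is indeed the half-length of the intersection of $\widetilde{\mathcal{C}}^0$ with the level set $\{a_1 Y_1 + a_2 Y_2 = t\}$, and that the accompanying Gaussian prefactor matches what the change of variables and completing the square produce in the joint density. Once that identification is secured, the proof closes via the trivial observation that the integral of a function bounded below by $M$ over an interval of length $\delta$ is at least $M\delta$, so we may take $M_1$ proportional to $M$ with the proportionality constant depending only on $P_0(\widetilde{\mathcal{C}}^0)$, which is bounded away from zero by Lemma \ref{lemma-sub1}.
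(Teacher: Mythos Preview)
Your proposal is correct and follows essentially the same route as the paper: center to $V^0=\widetilde X_A-\Sigma_{AA}^{-1/2}\mu^0_A\sim\mathcal N(0,I_s)$, for $s=2$ apply the change of variables $(V_1,V_2)\mapsto(a_1V_1+a_2V_2,\,V_2)$, complete the square, integrate out the second coordinate over the slice of the ball $\{\|V^0\|\le\widetilde L\}$ to obtain precisely the product in condition~(iii), and then invoke the uniform lower bound over an interval of length $\delta$. The only cosmetic difference is that you lower bound the \emph{joint} probability and use $P_0(\widetilde{\mathcal C}^0)\le 1$ to pass to the conditional, whereas the paper works directly with the truncated conditional density and its normalizing constant $k_{\widetilde L,s}\ge 1$; these two devices are equivalent, and in particular your final remark that $M_1$ must absorb a factor of $P_0(\widetilde{\mathcal C}^0)$ is unnecessary---your own inequality already gives conditional $\ge$ joint $\ge M\delta$.
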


\begin{proposition}\label{prop:conddetec}
Suppose that $\lambda_{\min}(\Sigma^{-1/2}_{AA})$, the minimum eigenvalue of $\Sigma^{-1/2}_{AA}$, is bounded from  below. Let us denote $a=\Sigma^{-1/2}_{AA}(\mu^1_A-\mu^0_A)$. Let us also assume that there exists $M>0$ such that the following conditions hold:
\begin{itemize}
\item[i)] $C^{**}_\alpha - (\mu^1_A-\mu^0_A)^{\top}\Sigma^{-1}_{AA}\mu^0_A\in (C^1, C^2)$ for some constants $C^1$, $C^2$.  
\item[ii)] When $s=1$, $a$ is a scalar. $f_{\mathcal{N}(0,|a|)}$ is bounded below on interval $(C^1-\delta^*, C^2+\delta^*)$ by $M$.
\item[iii)] Let $\widetilde{L}=\lambda_{\min}(\Sigma^{-1/2}_{AA})c'_1 s^{1/2}(n_0\wedge n_1)^{1/4}$. When $s=2$, $a=(a_1, a_2)$ is a vector. 
\[\left(\frac{1}{\sqrt{2\pi}|a_1|\big(\frac{a^2_2}{a^2_1}+1\big)}\exp\{-\frac{a^2_1+a^2_2-a^2_2 a^2_1}{a^2_1 (a^2_1+a^2_2)} t^2\}\right) \left(2\Phi(\sqrt{\widetilde{L}^2-\frac{a^2_1+a^2_2-a^2_2 a^2_1}{a^2_1 (a^2_1+a^2_2)} t^2})-1\right)\]
 is bounded below on interval $t\in(C^1-\delta^*, C^2+\delta^*)$ by $M$.
\end{itemize}
Then for $s\leq 2$, the function $s^*(\cdot)$ satisfies conditional detection condition restricted to $\mathcal{C}$ of order $\uderbar\gamma= 1$ with respect to $P_0$ at the level $(C^{**}_{\alpha}, \delta^*)$.  In other words, Assumption \ref{assumption:2} is satisfied.
\end{proposition}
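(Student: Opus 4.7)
The plan is to combine Lemmas \ref{lemma-sub1} and \ref{lemma-sub2} to establish the upper-side inequality of the conditional detection condition, and then argue that the lower-side inequality follows by a parallel argument that the stated hypotheses were specifically designed to accommodate.

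First, I would unpack $s^*(X) = (\Sigma^{-1}\mu_d)^\top X$ on the support $A$. By Lemma \ref{lem:beta}, $(\Sigma^{-1}\mu_d)_A = (\Sigma_{AA})^{-1}(\mu^1_A - \mu^0_A)$, so $s^*(X) = (\mu^1_A - \mu^0_A)^\top \Sigma_{AA}^{-1/2}\widetilde{X}_A$ with $\widetilde{X}_A = \Sigma_{AA}^{-1/2}X_A$. This is exactly the form appearing on the right-hand side of the Lemma \ref{lemma-sub1} inequality, so chaining the two sub-lemmas yields, for any $\delta\in(0,\delta^*)$,
\begin{equation*}
P_0\bigl\{C^{**}_\alpha \leq s^*(X) \leq C^{**}_\alpha + \delta \,\big|\, X\in\mathcal{C}\bigr\} \;\geq\; (1-\delta_3)\,M_1\,\delta\,.
\end{equation*}
Since $\delta_3 = \exp\{-(n_0\wedge n_1)^{1/2}\}$ is bounded away from $1$ under the assumed sample size growth, the factor $(1-\delta_3)$ can be absorbed into a new constant, giving the required bound $\geq \widetilde{M}_1 \delta$ of order $\uderbar{\gamma} = 1$.

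Second, I would handle the symmetric lower-side probability $P_0\{C^{**}_\alpha - \delta \leq s^*(X) \leq C^{**}_\alpha \mid X\in\mathcal{C}\}$. The same reduction via Lemma \ref{lemma-sub1} (which inspection shows is set-theoretic and works equally well with the interval $[C^{**}_\alpha - \delta, C^{**}_\alpha]$ in place of $[C^{**}_\alpha, C^{**}_\alpha + \delta]$) reduces the problem to lower-bounding the probability of $(\mu^1_A - \mu^0_A)^\top \Sigma_{AA}^{-1/2}\widetilde{X}_A$ falling in $[C^{**}_\alpha - \delta, C^{**}_\alpha]$ conditional on $\widetilde{\mathcal{C}}^0$. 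The hypothesis in items (ii) and (iii) that the relevant density is bounded below on the \emph{enlarged} interval $(C^1 - \delta^*, C^2 + \delta^*)$ (rather than just $(C^1, C^2 + \delta^*)$ as in Lemma \ref{lemma-sub2}) is precisely what enables an analogous lower bound $\widetilde{M}_1\delta$ on the left-side interval. Taking the minimum of the two constants then yields the desired $M_1$ in Definition \ref{def::detection}.

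The hard part is really already localized in Lemma \ref{lemma-sub2}, which is why the proposition is restricted to $s\leq 2$: the lower bound on the conditional density of $(\mu^1_A - \mu^0_A)^\top \Sigma_{AA}^{-1/2}\widetilde{X}_A$ given $\widetilde{\mathcal{C}}^0$ is derived by explicit integration against the standard Gaussian density truncated to a Euclidean ball of radius $\widetilde{L}$, and the closed-form expressions used (the univariate truncated normal density in case $s=1$, and the bivariate expression with the $2\Phi(\sqrt{\widetilde{L}^2 - \cdots}) - 1$ factor in case $s=2$) do not extend transparently to higher dimensions. Once that lemma is available, the remaining work in assembling the proposition is essentially bookkeeping: verifying that the set-inclusion step in Lemma \ref{lemma-sub1} is symmetric in the sign of $\delta$, checking that the constants depend only on $M$, $C^1$, $C^2$, and the lower bound for $\lambda_{\min}(\Sigma_{AA}^{-1/2})$, and confirming that the sample-size growth guarantees $\delta_3 \leq 1/2$ so that $(1-\delta_3)$ can be absorbed.
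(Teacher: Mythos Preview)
Your proposal is correct and follows essentially the same route as the paper: the paper's proof simply defines the auxiliary set $\widetilde{\mathcal{C}}^0$ (equation \eqref{eqn:C0tilde}), verifies $\widetilde{\mathcal{C}}^0\subset\mathcal{C}^0$, and then invokes Lemmas \ref{lemma-sub1} and \ref{lemma-sub2} to obtain the one-sided bound \eqref{eqn:one_sided}, remarking that the other side is analogous. Your additional observations---that the enlarged interval $(C^1-\delta^*,C^2+\delta^*)$ is exactly what makes the lower-side analogue go through, and that $(1-\delta_3)\geq 1/2$ allows absorption into the constant---make explicit what the paper leaves to the reader.
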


\section{Simulation for the control of largest empirical eigenvalues}\label{sec::eigen:bound:simu}
Here, we present a simulation to verify the inequality on the largest eigenvalue of the sample covariance matrix in relation to that of the population covariance matrix, as stated in \eqref{eqn:max_eigen}. 
\begin{example}\label{ex::eigen:bound}
The data are generated from an LDA model with common covariance matrix $\Sigma$. We set $\epsilon = 1e-3$ and  consider the 8 combinations for the following three factors and vary $N_0=n_1\in \{20, 40, 60, 80, 100, 120, 140, 160, 180, 200\}$. $\mu^0=0_d$. 
\begin{enumerate}[(\ref{ex::eigen:bound}a).]
 \item The dimension $d\in\{3, 10\}$. $\mu^1=1.16 \cdot  1_3$ when $d=3$ and $\mu^1=0.75\cdot 1_{10}$ when $d=10$. 
	\item $\Sigma$ is set to be an AR(1) covariance matrix with $\Sigma_{ij}=\rho^{|i-j|}$ or an CS (compound symmetry) covariance matrix with $\Sigma_{ii}=1$ and $\Sigma_{ij}=\rho$ for all $i\neq j$. 
	\item The correlation parameter $\rho\in\{.5, .9\}$.
\end{enumerate}
\end{example}
From Table \ref{tb::eigen:bound}, it is clear that the bound is satisfied with very high probability across all scenarios considered. 
\begin{table}[ht]
\caption{The probability that the bound stated in \eqref{eqn:max_eigen} is satisfied over 1000 replications.\label{tb::eigen:bound}}
\begin{center}
\begin{tabular}{l|c|c|rrrrrrrrrr}
\hline
&Covariance&\diagbox{$\rho$}{$N_0$}&20&40&60&80&100&120&140&160&180&200\\
\hline
\multirow{4}{*}{$d=3$}&\multirow{2}{*}{AR(1)}&.5&1.0&1.0&1.0&1.0&1.0&.999&.999&1.0&.999&1.0\\
&&.9&1.0&1.0&1.0&1.0&1.0&.999&.999&1.0&.999&1.0\\
\cline{2-13}
&\multirow{2}{*}{CS}&.5&1.0&1.0&1.0&1.0&1.0&1.0&1.0&1.0&1.0&1.0\\
&&.9&1.0&1.0&1.0&1.0&1.0&1.0&1.0&1.0&1.0&1.0\\\hline
\multirow{4}{*}{$d=10$}&\multirow{2}{*}{AR(1)}&.5&1.0&1.0&1.0&1.0&1.0&1.0&1.0&1.0&1.0&1.0\\
&&.9&1.0&1.0&1.0&1.0&1.0&1.0&1.0&1.0&1.0&1.0\\\cline{2-13}
&\multirow{2}{*}{CS}&.5&1.0&1.0&1.0&1.0&1.0&1.0&1.0&1.0&1.0&1.0\\
&&.9&1.0&1.0&1.0&1.0&1.0&1.0&1.0&1.0&1.0&1.0\\
\hline
\end{tabular}
\end{center}
\end{table}
\section{Proofs}

\begin{proof}[Proof of Lemma \ref{lem:2}]
By Corollary \ref{cor:kstarAndkprime}, $k^* \leq k'$.  This implies that $R_0(\hat \phi_{k^*})\geq R_0(\hat \phi_{k'})$. Moreover, by Lemma \ref{prop::R0}, for any $\delta_0' \in (0,1)$ and $n'_0 \geq 4/(\alpha\delta_0)$, 
$$
\p\left(|R_0( \hat{\phi}_{k'} ) 
- R_0( \phi^*_{\alpha}) | > \xi_{\alpha, \delta_0,n_0'}(\delta_0')\right)\leq \delta_0'\,.
$$
Let $\mathcal{E}_0 = \{R_0(\hat \phi_{k^*})\leq \alpha\}$ and $\mathcal{E}_1 = \{|R_0( \hat{\phi}_{k'} ) 
- R_0( \phi^*_{\alpha}) | \leq \xi_{\alpha, \delta_0,n_0'}(\delta_0')\}$.
On the event $\mathcal{E}_0 \cap \mathcal{E}_1$, we have
$$
\alpha = R_0(\phi^*_{\alpha})\geq R_0(\hat \phi_{k^*})\geq R_0(\hat \phi_{k'})\geq R_0(\phi^*_{\alpha}) - \xi_{\alpha, \delta_0,n_0'}(\delta_0')\,,
$$
This implies that
\begin{eqnarray*}
|R_0( \hat{\phi}_{k^*} ) 
- R_0( \phi^*_{\alpha}) | \leq \xi_{\alpha, \delta_0,n_0'}(\delta_0') \,.	
\end{eqnarray*}

\end{proof}

\begin{proof}[Proof of Lemma \ref{lem:large_prob_set}]
Note that $\Sigma^{-1/2}_{AA} (X_A - \mu^0_A) \sim \mathcal{N}(0, I_s)$.  By Lemma \ref{lem: concentration}, for all $t> 0$,
$$
P_0\left(\|X_A - \mu^0_A \|^2  > tr(\Sigma_{AA}) + 2\sqrt{tr(\Sigma^2_{AA})t} + 2 \|\Sigma_{AA}\|t \right)  \leq e ^{-t}\,.
$$
For $t = (n_0 \wedge n_1)^{1/2}$ ($>1$), the above inequality implies there exists some $c_1'' > 0$ such that 
$$
P_0(\|X_A - \mu^0_A \|^2  > c_1'' st     )\leq e^{-t}\,.
$$
Similarly, $P_1(\|X_A - \mu^1_A \|^2  > c_1'' st    )\leq e^{-t}$.
Let $\mathcal{C}^0 = \{X: \|X_A - \mu^0_A \|^2  \leq  c_1'' st     \}$ and $\mathcal{C}^1 = \{X: \|X_A - \mu^1_A \|^2  \leq  c_1'' st     \}$.  There exists some $c_1' > 0$, such that both $\mathcal{C}^0$ and $\mathcal{C}^1$ are subsets of $\mathcal{C} = \{X:\|X_A\|  \leq  c_1' s^{1/2} t^{1/2}  \}$. 
%
%
Then $P_0(X\in\mathcal{C})\geq 1- \delta_3$ and $P_1(X\in\mathcal{C})\geq 1- \delta_3$, for $\delta_3 = \exp\{-(n_0 \wedge n_1)^{-1/2}\}$. 


By Proposition \ref{thm:1}, for $\delta_1 \geq \delta_1^*$ and $\delta_2 \geq \delta_2^*$, we have with probability at least $1-\delta_1 - \delta_2$, $\hat{\beta}^{\text{lasso}}_{A^c} = \beta^{\text{Bayes}}_{A^c} = 0$. Moreover, 
\begin{eqnarray*}
\|\hat s - s^*\|_{\infty, \mathcal{C}}
&\leq & \max_{x\in\mathcal{C}}|x^{\top}_A \hat{\beta}^{\text{lasso}}_A - x^{\top}_A \beta^{\text{Bayes}}_A| + \max_{x\in\mathcal{C}}|x^{\top}_{A^c} \hat{\beta}^{\text{lasso}}_{A^c} - x^{\top}_{A^c} \beta^{\text{Bayes}}_{A^c}|\\
&=&  \max_{x\in\mathcal{C}}|x^{\top}_A \hat{\beta}^{\text{lasso}}_A - x^{\top}_A \beta^{\text{Bayes}}_A|\\
&\leq & \|\hat \beta^{\text{lasso}}_A - \beta^{\text{Bayes}}_A\|_{\infty}\cdot \max_{x\in\mathcal{C}}\|X_A\|_1\\
&\leq &  \|\hat \beta^{\text{lasso}}_A - \beta^{\text{Bayes}}_A\|_{\infty}\cdot \sqrt{s} \max_{x\in\mathcal{C}}\|X_A\|_2\\
&\leq & 4 \varphi \lambda \cdot c_1' s (n_0\wedge n_1)^{1/4}\,,
\end{eqnarray*}
where the last inequality uses a relation $\beta^* = \beta^{\text{Bayes}}_A$, which is derived in Lemma \ref{lem:beta}. 
\end{proof}


\begin{proof}[Proof of Lemma \ref{lem:4}]
Note that by Lemma \ref{lem:large_prob_set}, $P_0(X\in\mathcal{C}) \geq 1- \exp\{- (n_0 \wedge n_1)^{1/2}\}$, 
so we have
\begin{eqnarray*}
&&|R_0( \hat{\phi}_{k^*} ) 
- R_0( \phi^*_{\alpha}) |\\
&=&   |[R_0( \hat{\phi}_{k^*} |\mathcal{C}) 
- R_0( \phi^*_{\alpha}|\mathcal{C})]P_0(X\in\mathcal{C}) + [R_0( \hat{\phi}_{k^*} |\mathcal{C}^c) 
- R_0( \phi^*_{\alpha}|\mathcal{C}^c)]P_0(X\in\mathcal{C}^c)|\\
&\geq &  |[R_0( \hat{\phi}_{k^*} |\mathcal{C}) 
- R_0( \phi^*_{\alpha}|\mathcal{C})]P_0(X\in\mathcal{C})| - |[R_0( \hat{\phi}_{k^*} |\mathcal{C}^c) 
- R_0( \phi^*_{\alpha}|\mathcal{C}^c)]P_0(X\in\mathcal{C}^c)|\\
&\geq &  |R_0( \hat{\phi}_{k^*} |\mathcal{C}) 
- R_0( \phi^*_{\alpha}|\mathcal{C})|(1- \exp\{- (n_0 \wedge n_1)^{1/2}\}) - 1 \cdot \exp\{- (n_0 \wedge n_1)^{1/2}\}\,.
\end{eqnarray*}

Lemma \ref{lem:2} says that 
\begin{eqnarray*}
\p\{
|R_0( \hat{\phi}_{k^*} ) 
- R_0( \phi^*_{\alpha}) | > \xi_{\alpha, \delta_0,n_0'}(\delta_0') \}
\,\leq\, \delta_0 +  \delta_0'\,.
\end{eqnarray*}
This combined with the above inequality chain implies 
\begin{eqnarray*}
\p\{
|R_0( \hat{\phi}_{k^*} |\mathcal{C}) 
- R_0( \phi^*_{\alpha}|
\mathcal{C}) | > \frac{[\xi_{\alpha, \delta_0,n_0'}(\delta_0') + \exp\{- (n_0 \wedge n_1)^{1/2}\}]}{1 - \exp\{- (n_0 \wedge n_1)^{1/2}\}} \}
\,\leq\, \delta_0 +  \delta_0'\,. 
\end{eqnarray*}
Since $\exp\{- (n_0 \wedge n_1)^{1/2}\}\leq 1/2$, the conclusion follows. 
\end{proof}



\begin{proof}[Proof of Lemma \ref{lem:beta}]
Note that $\mu^1 - \mu^0  = \Sigma \beta^{\text{Bayes}}$. After shuffling the $A$ coordinates to the front if necessary, we have
$$
\mu^1 - \mu^0 = \begin{bmatrix}
   \Sigma_{AA} & \Sigma_{A A^c} \\
   \Sigma_{A^cA} & \Sigma_{A^c A^c} 
   \end{bmatrix} \begin{bmatrix} \beta^{\text{Bayes}}_A \\ \beta^{\text{Bayes}}_{A^c}      \end{bmatrix}\,.
$$
Then, 
$\mu^1_{A} - \mu^0_{A} = \left(\Sigma_{AA}\right)\beta^{\text{Bayes}}_A$ as $\beta^{\text{Bayes}}_{A^c} = 0\in \mathbb{R}^{|A^c|}$ by definition.  Therefore we have, $$\beta^* = \Sigma_{AA}^{-1} (\mu^1_{A} - \mu^0_{A}) = \beta^{\text{Bayes}}_A\,,$$
which combined with $\widetilde{\beta}^{\text{Bayes}}_{A^c} = \beta^{\text{Bayes}}_{A^c} = 0$ leads to $\widetilde{\beta}^{\text{Bayes}} = \beta^{\text{Bayes}}$.
\end{proof}

\begin{proof}[Proof of Lemma \ref{lem:A1}]

Inequalities \eqref{eqn:A1}-\eqref{eqn:A6} can be proved similarly as in \cite{mai2012direct}, so proof is omitted.  

Inequalities \eqref{eqn:A7}-\eqref{eqn:A9} can be proved by applying \eqref{eqn:A2}-\eqref{eqn:A4} respectively and observe that $A + B \geq \varepsilon $ implies $A \geq \varepsilon /2$ or $B \geq \varepsilon /2$.  More concretely, they are proven by the following arguments:

\begin{eqnarray*}
\p\left(|C^{(n)}_{ij} -  \Sigma_{ij}|\geq \varepsilon\right)
 &=& \p\left(|\frac{n_0}{n} \widehat{\Sigma}^0_{ij} + \frac{n_1}{n} \widehat{\Sigma}^1_{ij} - \Sigma_{ij}| \geq \varepsilon  \right)\\
 &\leq & \p\left( \frac{n_0}{n}|\widehat{\Sigma}^0_{ij} - \Sigma_{ij}|\geq \varepsilon /2  \right)   + \p\left( \frac{n_1}{n}|\widehat{\Sigma}^1_{ij} - \Sigma_{ij}|\geq \varepsilon /2 \right) \\
 &= & \p\left(|\widehat{\Sigma}^0_{ij} - \Sigma_{ij}|\geq \frac{n\varepsilon}{2 n_0 }  \right) + \p\left(|\widehat{\Sigma}^1_{ij} - \Sigma_{ij}|\geq \frac{n\varepsilon}{2 n_1 }  \right)\\
 & \leq & 2 \exp\left( - \frac{c_1 \varepsilon^2 n^2}{4 n_0}\right) + 2 \exp\left( - \frac{c_1 \varepsilon^2 n^2}{4 n_1}\right)\,.
\end{eqnarray*}

\begin{eqnarray*}
\p\left(|C^{(n)}_{AA} - \Sigma_{AA}|\geq \varepsilon\right)
&=& \p\left(|\frac{n_0}{n}\widehat{\Sigma}^0_{AA} + \frac{n_1}{n}\widehat{\Sigma}^1_{AA}- \Sigma_{AA}|\geq \varepsilon \right)\\
&\leq &  \p\left( \frac{n_0}{n}|\widehat{\Sigma}^0_{AA} - \Sigma_{AA}|\geq \varepsilon /2  \right)   + \p\left( \frac{n_1}{n}|\widehat{\Sigma}^1_{AA} - \Sigma_{AA}|\geq \varepsilon /2 \right) \\
&\leq &  2s^2 \exp\left(- \frac{c_1 \varepsilon^2 n^2}{4 n_0 s^2}  \right) + 2s^2 \exp\left(- \frac{c_1 \varepsilon^2 n^2}{4 n_1 s^2}  \right)\,.
\end{eqnarray*}

\begin{eqnarray*}
\p\left(|C^{(n)}_{A^cA} - \Sigma_{A^c A}|\geq \varepsilon\right)
&=& \p\left(|\frac{n_0}{n}\widehat{\Sigma}^0_{A^cA} + \frac{n_1}{n}\widehat{\Sigma}^1_{A^cA}- \Sigma_{A^c A}|\geq \varepsilon \right)\\
&\leq &  \p\left( \frac{n_0}{n}|\widehat{\Sigma}^0_{A^cA} - \Sigma_{A^cA}|\geq \varepsilon /2  \right)   + \p\left( \frac{n_1}{n}|\widehat{\Sigma}^1_{A^cA} - \Sigma_{A^cA}|\geq \varepsilon /2 \right) \\
&\leq & (d-s) s \exp\left( -\frac{c_1 n^2 \varepsilon^2}{4s^2n_0} \right)  + (d-s) s \exp\left( -\frac{c_1 n^2 \varepsilon^2}{4s^2n_1} \right)\,.
\end{eqnarray*}

\end{proof}



\begin{proof}[Proof of Lemma \ref{lem:prod_deviation}]
 Let $\eta_1 = \|\Sigma_{AA} - C^{(n)}_{AA}\|_{\infty}$, $\eta_2 = \|\Sigma_{A^cA} - C^{(n)}_{A^cA}\|_{\infty}$, and $\eta_3 = \|(C^{(n)}_{AA})^{-1} - (\Sigma_{AA})^{-1}\|_{\infty}$. 

\begin{eqnarray*}
\|C^{(n)}_{A^cA}(C^{(n)}_{AA})^{-1} - \Sigma_{A^cA}(\Sigma_{AA})^{-1}\|_{\infty}
&\leq &  \|C^{(n)}_{A^cA} - \Sigma_{A^cA}\|_{\infty} \times \|(C^{(n)}_{AA})^{-1} - (\Sigma_{AA})^{-1}\|_{\infty}\\
&&+ \|C^{(n)}_{A^c A} - \Sigma_{A^cA}\|_{\infty} \times \|(\Sigma_{AA})^{-1}\|_{\infty}\\
&& + \| \Sigma_{A^cA} (\Sigma_{AA})^{-1}\|_{\infty} \times \|\Sigma_{AA} - C^{(n)}_{AA}\|_{\infty} \times \|(\Sigma_{AA})^{-1}\|_{\infty} \\
&& + \|\Sigma_{A^cA}(\Sigma_{AA})^{-1}\|_{\infty} \times \|\Sigma_{AA} - C^{(n)}_{AA}\|_{\infty}\\
&&\times \|(C^{(n)}_{AA})^{-1} - (\Sigma_{AA})^{-1}\|_{\infty}\\
&\leq & (\kappa\eta_1 + \eta_2)(\varphi + \eta_3)\,.
\end{eqnarray*}
Moreover, $\eta_3\leq \|(C^{(n)}_{AA})^{-1}\|_{\infty}\times \|C^{(n)}_{AA} - \Sigma_{AA}\|_{\infty}\times \|(\Sigma_{AA})^{-1}\|_{\infty}\leq (\varphi + \eta_3)\varphi\eta_1$. Hence, if $\varphi \eta_1 < 1$, we have $\eta_3 \leq \varphi^2 \eta_1 (1-\varphi \eta_1)^{-1}$. Hence we have,
$$
\|C^{(n)}_{A^cA}(C^{(n)}_{AA})^{-1} - \Sigma_{A^cA}(\Sigma_{AA})^{-1}\|_{\infty}\leq (\kappa \eta_1 + \eta_2)\varphi(1-\varphi \eta_1)^{-1}\,.
$$
Then we consider the event $\max(\eta_1, \eta_2)\leq \varepsilon$.  Note that $\varepsilon <  1/\varphi$ ensures that $\varphi \eta_1 < 1$ on this event.   The conclusion follows from inequalities \eqref{eqn:A8} and \eqref{eqn:A9}.
\end{proof}

\begin{proof}[Proof of Lemma \ref{lemma-sub1}]
Since  $(\Sigma^{-1}\mu_d)_A = \Sigma_{AA}^{-1}(\mu^1_A - \mu^0_A)$ (by Lemma \ref{lem:beta}) and $\widetilde{\mathcal{C}}^0\subset\mathcal{C}^0\subset \mathcal{C}$, we have 
\begin{align*}
&P_0\{C^{**}_\alpha\leq s^*(X)\leq C^{**}_\alpha+\delta | X \in \mathcal{C}\} \cr
&\geq P_0(\{C^{**}_\alpha\leq (\mu^1_A-\mu^0_A)^{\top}\Sigma^{-1}_{AA}X_A\leq C^{**}_\alpha+\delta\}\cap \mathcal{C})\cr
&\geq P_0(\{C^{**}_\alpha\leq (\mu^1_A-\mu^0_A)^{\top}\Sigma^{-1}_{AA}X_A\leq C^{**}_\alpha+\delta\}\cap \widetilde{\mathcal{C}^0})\cr
&= P_0(\{C^{**}_\alpha\leq (\mu^1_A-\mu^0_A)^{\top}\Sigma^{-1/2}_{AA}\widetilde{X}_A\leq C^{**}_\alpha+\delta\}|\widetilde{\mathcal{C}}^0)P_0(\widetilde{\mathcal{C}}^0)\cr
&\geq (1-\delta_3)P_0(C^{**}_\alpha\leq (\mu^1_A-\mu^0_A)^{\top}\Sigma^{-1/2}_{AA}\widetilde{X}_A\leq C^{**}_\alpha+\delta |\widetilde{\mathcal{C}}^0)\,,
\end{align*}
where the last inequality uses $P_0(\widetilde{C}^0)\geq 1-\delta_3$. To derive this inequality, let $V^0$ (defined in the proof of Proposition \ref{prop:conddetec}) play the role of $x$ and take $A = I_s$ in Lemma \ref{lem: concentration}, then we have
$$
\p\left(\|V^0\|^2 \geq s + 2 \sqrt{st} + 1 \cdot t  \right)\leq e^{-t}\,, \text{ for all } t> 0\,.
$$  
For $s, t \in\mathbb{N}$, the above inequality clearly implies $\p(\|V^0\|^2 \geq 4st)\leq \exp(-t)$.  Take $t = (n_0\wedge n_1)^{1/2}$, then as long as $c_1'\geq 2/\lambda_m$,
$$
\{x: \|V^0\|^2\leq 4st\}\subset \{x:  \|V^0\|^2\leq \lambda_m^2(c'_1)^2 st\} = \widetilde{C}^0\,.
$$
Since $\lambda_m$ is bounded from below, we can certainly  take $c'_1\geq 2/\lambda_m$ is the proof of Lemma \ref{lem:large_prob_set} in constructing $\widetilde{C}^0$. Therefore, $\p(\|V^0\|^2\leq s + 2\sqrt{st} + t)\geq 1- \exp(-t)$ implies that $\p(\widetilde{C}^0)\geq 1- \exp(-t)$ for $t = (n_0\wedge n_1)^{1/2}$.  
\end{proof}


%


\begin{proof}[Proof of Lemma \ref{lemma-sub2}]
Since $V^0 =\widetilde{X}_A-\Sigma^{-1/2}_{AA}\mu^0_A$, it follows that,  
\begin{align*}
&P_0(C^{**}_\alpha\leq (\mu^1_A-\mu^0_A)^{\top}\Sigma^{-1/2}_{AA}\widetilde{X}_A\leq C^{**}_\alpha+\delta|\widetilde{\mathcal{C}}^0)\cr
=&P_0(C^{**}_\alpha-(\mu^1_A-\mu^0_A)^{\top}\Sigma^{-1}_{AA}\mu^0_A\leq a^{\top}V^0\leq C^{**}_\alpha+\delta-(\mu^1_A-\mu^0_A)^{\top}\Sigma^{-1}_{AA}\mu^0_A |\widetilde{\mathcal{C}}^0).
\end{align*}
By \cite{mukerjee2015variance}, the probability density function of $V^0|\widetilde{\mathcal{C}}^0$ is given by
\begin{equation} \label{eqn:pdf_truncated_normal}
f_{V^0|\widetilde{\mathcal{C}}^0}(v)=\begin{cases} k_{\widetilde{L},s}\Pi_{i=1}^s \phi(v_i)\ \ \text{if}\ \  \|v\|\leq \widetilde{L}\\
0\,, \ \ \text{otherwise},
\end{cases}
\end{equation}
where $\phi$ is the pdf for the standard normal random variable, $\widetilde{L}$ is defined in equation \eqref{eqn:C0tilde},  and  $k_{\widetilde{L},s}$ is a normalizing constant. Note that $k_{\widetilde{L},s}$ is a monotone decreasing function of $\widetilde{L}$ for each $s$, and when $\widetilde{L}$ goes to infinity, $k_{\widetilde{L},s}=k^0_{s}$ is a positive constant. Therefore, $k_{\widetilde{L},s}$ is bounded below by $k^0_{s}$. Since we only consider $s\in\{1, 2\}$, we can take $k^0$ as a universal constant independent of $s$, and $k_{\widetilde{L},s}$ is bounded below by $k^0$ universally.  

Let $f_{a^{\top}V^0|\widetilde{\mathcal{C}}^0}(z)$ be the density of  $a^{\top}V^0|\widetilde{\mathcal{C}}^0$. Thus, we want to lower bound
\begin{align*}
&P_0(C^{**}_\alpha-(\mu^1_A-\mu^0_A)^{\top}\Sigma^{-1}_{AA}\mu^0_A\leq a^{\top}V^0\leq C^{**}_\alpha+\delta-(\mu^1_A-\mu^0_A)^{\top}\Sigma^{-1}_{AA}\mu^0_A |\widetilde{\mathcal{C}}^0)\cr
=&\int_{C^{**}_\alpha-(\mu^1_A-\mu^0_A)^{\top}\Sigma^{-1}_{AA}\mu^0_A}^{C^{**}_\alpha+\delta-(\mu^1_A-\mu^0_A)^{\top}\Sigma^{-1}_{AA}\mu^0_A} f_{a^{\top}V^0|\widetilde{\mathcal{C}}^0}(z) dz\,.
\end{align*}
Let us  analyze $f_{a^{\top}V^0|\widetilde{\mathcal{C}}^0}(z)$ when $s=1$ and $s=2$.

\noindent{\bf Case 1} ($s=1$): $a$ is a scalar. Hence
\begin{equation*}
f_{aV^0|\widetilde{\mathcal{C}}^0}(z)=\begin{cases}\frac{k_{\widetilde{L},1}}{|a|}\phi(\frac{z}{a}), \ \ \text{for} \ \ |z|\leq |a|\widetilde{L} \\
0, \ \ \text{otherwise},
\end{cases}
\end{equation*}
which is the density function of a truncated Normal random variable with parent distribution $\mathcal{N}(0, |a|)$ symmetrically truncated to $-|a|\widetilde{L}$ and $|a|\widetilde{L}$, i.e. $TN(0,|a|,-|a|\widetilde{L},|a|\widetilde{L})$. Here $|a|$ is the standard deviation of the parent Normal distribution.  Therefore, 
\[
f_{aV^0|\widetilde{\mathcal{C}}^0}(z)\geq f_{\mathcal{N}(0,|a|)}(z), \ \ \text{for}\ \  |z|\leq |a|\widetilde{L}.
\]
This implies  
\begin{align*}
&\int_{C^{**}_\alpha-(\mu^1_A-\mu^0_A)^{\top}\Sigma^{-1}_{AA}\mu^0_A}^{C^{**}_\alpha+\delta-(\mu^1_A-\mu^0_A)^{\top}\Sigma^{-1}_{AA}\mu^0_A} f_{aV^0|\widetilde{\mathcal{C}}^0}(z) dz\cr
&\geq \delta \min \{ f_{{\mathcal{N}(0,|a|)}}(C^{**}_\alpha-(\mu^1_A-\mu^0_A)^{\top}\Sigma^{-1}_{AA}\mu^0_A), f_{{\mathcal{N}(0,|a|)}}(C^{**}_\alpha-(\mu^1_A-\mu^0_A)^{\top}\Sigma^{-1}_{AA}\mu^0_A+\delta^*)\} \cr
&\geq \delta M.
\end{align*}
where the inequality follows from the mean-value theorem and our assumption (ii).

\noindent {\bf Case 2} ($s=2$): $a=(a_1, a_2)$ is a vector. Now let us do the following change of variable from $(V_1,V_2)=V^0|\widetilde{\mathcal{C}}$ to $(Z_1,Z_2)=(a^{\top}V^{0}|\mathcal{C}, V_2)$. 
\begin{equation}
\begin{cases}Z_1=a_1V_1+a_2V_2\\ Z_2=V_2 \end{cases} \ \ \ \text{, and thus}\ \ \ \begin{cases}V_1=\frac{Z_1-a_2 Z_2}{a_1}\\ V_2=Z_2 \end{cases}
\end{equation}
The original event  $S_{V_1,V_2}=\{V^2_1+V^2_2 \leq \widetilde{L}^2\}$ is equivalent to 
\begin{align*}
S_{Z_1,Z_2}=& \left(\frac{Z_1-a_2 Z_2}{a_1}\right)^2+Z^2_2\leq \widetilde{L}^2\cr
\Leftrightarrow &\left(\frac{a^2_2}{a^2_1}+1\right)\left(Z_2-\left(\frac{a_1a_2}{a^2_1+a^2_2}\right)Z_1\right)^2+\frac{a^2_1+a^2_2-a^2_1 a^2_2}{a^2_1(a^2_1+a^2_2)}Z^2_1\leq \widetilde{L}^2.
\end{align*}
Now for any $z_1$, the marginal density of $a^\top V^0$ can be carried out as
\begin{align*}
&\quad f_{Z_1}(z_1)\\
 &= \int_{S_{z_1,z_2}} \frac{k_{\widetilde{L},2}}{|a_1|} \phi\left(\frac{z_1-a_2 z_2}{a_1}\right)\phi(z_2) dz_2\cr
&=  \int_{S_{z_1,z_2}} \frac{k_{\widetilde{L},2}}{2\pi |a_1|} \exp\{-\frac{1}{2}\left(\frac{z_1-a_2 z_2}{a_1}\right)^2-\frac{z^2_2}{2}\} dz_2\cr
&=\left(\frac{k_{\widetilde{L},2}}{2\pi|a_1|}\exp\{-\frac{a^2_1+a^2_2-a^2_2 a^2_1}{2a^2_1 (a^2_1+a^2_2)} z^2_1\}\right)\int_{S_{z_1,z_2}} \exp\{-\frac{(z_2-\left(\frac{a_1a_2}{a^2_1+a^2_2}\right)z_1)^2}{\frac{2}{{a^2_2}/{a^2_1}+1}}\}dz_2\cr
&=\left(\frac{k_{\widetilde{L},2}}{|a_1|\sqrt{2\pi(\frac{a^2_2}{a^2_1}+1)}}\exp\{-\frac{a^2_1+a^2_2-a^2_2 a^2_1}{2a^2_1 (a^2_1+a^2_2)} z^2_1\}\right)\int _{S_{z_1,z_2}}\phi_{\mathcal{N}(\big(\frac{a_1a_2}{a^2_1+a^2_2}\big)z_1,\frac{1}{\sqrt{{a^2_2}/{a^2_1}+1}})}(z_2) dz_2\cr
\end{align*}
\begin{align*}
&=\left(\frac{k_{\widetilde{L},2}}{\sqrt{2\pi}|a_1|\big(\frac{a^2_2}{a^2_1}+1\big)}\exp\{-\frac{a^2_1+a^2_2-a^2_2 a^2_1}{a^2_1 (a^2_1+a^2_2)} z^2_1\}\right)\int _{-\sqrt{\widetilde{L}^2-\frac{a^2_1+a^2_2-a^2_2 a^2_1}{a^2_1 (a^2_1+a^2_2)} z^2_1}}^{\sqrt{\widetilde{L}^2-\frac{a^2_1+a^2_2-a^2_2 a^2_1}{a^2_1 (a^2_1+a^2_2)} z^2_1} }\phi_{\mathcal{N}(0,1)}(z) dz\cr
&= \left(\frac{k_{\widetilde{L},2}}{\sqrt{2\pi}|a_1|\big(\frac{a^2_2}{a^2_1}+1\big)}\exp\{-\frac{a^2_1+a^2_2-a^2_2 a^2_1}{a^2_1 (a^2_1+a^2_2)} z^2_1\}\right) \left(2\Phi(\sqrt{\widetilde{L}^2-\frac{a^2_1+a^2_2-a^2_2 a^2_1}{a^2_1 (a^2_1+a^2_2)} z^2_1})-1\right).
\end{align*}

This implies  
\begin{align*}
&\int_{C^{**}_\alpha-(\mu^1_A-\mu^0_A)^{\top}\Sigma^{-1}_{AA}\mu^0_A}^{C^{**}_\alpha+\delta-(\mu^1_A-\mu^0_A)^{\top}\Sigma^{-1}_{AA}\mu^0_A} f_{a^{\top}V^0|\widetilde{\mathcal{C}}^0}(z) dz\cr
&\geq \delta \min \{ f_{Z_1}(C^{**}_\alpha-(\mu^1_A-\mu^0_A)^{\top}\Sigma^{-1}_{AA}\mu^0_A), f_{Z_1}(C^{**}_\alpha-(\mu^1_A-\mu^0_A)^{\top}\Sigma^{-1}_{AA}\mu^0_A+\delta^*)\} \cr
&\geq \delta Mk_0.
\end{align*}
where the inequality follows from the mean-value theorem and our assumption (iii).

We can safely conclude our proof by combining cases $s=1$ and $s=2$, and taking $M_1=\min\{M, Mk_0\}$.
\end{proof}

\begin{proof}[Proof of Proposition \ref{thm:1}]
The proof is largely identical to that of Theorem 1 in \cite{mai2012direct}, except the differences due to a different sampling scheme.   

Similarly to \cite{mai2012direct}, by the definition of $\hat\beta_A$, we can write $\hat\beta_A = (n^{-1}\widetilde{X}^{\top}_A \widetilde{X}_A)^{-1}\{(\widehat{\mu}^1_A  - \widehat{\mu}^0_A) - \lambda t_A/2\}$, where $t_A$ represents the subgradient such that $t_j = \text{sign}(\hat\beta_j)$ if $\hat\beta_j \neq 0$ and $-1 < t_j < 1$ if $\hat\beta_j = 0$.   To show that $\hat{\beta}^{\text{lasso}} = (\hat{\beta}_A, 0)$, it suffices to verify that 
\begin{equation}\label{eqn:kkt}
\|n^{-1}\widetilde{X}^{\top}_{A^c}\widetilde{X}_A \hat\beta_A - (\widehat{\mu}^1_{A^c} - \widehat{\mu}^0_{A^c})\|_{\infty}\leq \lambda/2\,.
\end{equation}
The left-hand side of \eqref{eqn:kkt} is equal to 
\begin{equation}\label{eqn:kkt_equi}
\|C^{(n)}_{A^cA}(C^{(n)}_{AA})^{-1} (\widehat{\mu}^1_{A} - \widehat{\mu}^0_{A}) - C^{(n)}_{A^cA}(C^{(n)}_{AA})^{-1}  \lambda t_A/2 - (\widehat{\mu}^1_{A^c} - \widehat{\mu}^0_{A^c}) \|_{\infty}.
\end{equation}
Using $\Sigma_{A^cA}\Sigma_{AA}^{-1} (\mu^1_A - \mu^0_A) = (\mu^1_{A^c} - \mu^0_{A^c})$, \eqref{eqn:kkt_equi} is bounded from above by 
\begin{eqnarray*}
U_1 &=& \|C^{(n)}_{A^cA}(C^{(n)}_{AA})^{-1} - \Sigma_{A^cA}\Sigma_{AA}^{-1} \|_{\infty}\Delta + \|(\widehat{\mu}^1_{A^c} - \widehat{\mu}^0_{A^c}) - (\mu^1_{A^c} - \mu^0_{A^c})\|_{\infty}\\
&& + (\|C^{(n)}_{A^cA}(C^{(n)}_{AA})^{-1} - \Sigma_{A^cA}\Sigma_{AA}^{-1} \|_{\infty} + \kappa)\|(\widehat{\mu}^1_{A} - \widehat{\mu}^0_{A}) - (\mu^1_{A} - \mu^0_{A})\|_{\infty}\\
&& + (\|C^{(n)}_{A^cA}(C^{(n)}_{AA})^{-1} - \Sigma_{A^cA}\Sigma_{AA}^{-1} \|_{\infty} + \kappa)\lambda /2\,.
\end{eqnarray*}

If $\|C^{(n)}_{A^cA}(C^{(n)}_{AA})^{-1} - \Sigma_{A^cA}\Sigma_{AA}^{-1} \|_{\infty}\leq (\kappa+1)\varepsilon \varphi (1 - \varphi \varepsilon)^{-1}$ (invoke Lemma \ref{lem:prod_deviation}), and $\|(\widehat{\mu}^1-\widehat{\mu}^0) - (\mu^1 - \mu^0)\|_{\infty}\leq 4^{-1} \lambda (1-\kappa - 2\varepsilon \varphi)/ (1 + \kappa)$, and given $\varepsilon \leq \min [ \varepsilon_0, \lambda (1-\kappa) (4\varphi)^{-1}(\lambda/2 + (1+\kappa)\Delta)^{-1}]$, then $U_1\leq \lambda/2$.

Therefore, by Lemmas \ref{lem:A1} and \ref{lem:prod_deviation}, we have
\begin{align*}
&\quad\,\, \p \{\|n^{-1}\widetilde{X}^{\top}_{A^c}\widetilde{X}_A \hat\beta_A - (\widehat{\mu}^1_{A^c} - \widehat{\mu}^0_{A^c})\|_{\infty}\leq \lambda/2\}\\
&\geq   1 - 2d\exp(-n_0 \varepsilon^{*2} c_2)  - 2d\exp(-n_1 \varepsilon^{*2} c_2) - f(d, s, n_0, n_1, (\kappa+1)\varepsilon \varphi (1 - \varphi \varepsilon)^{-1})\,,
\end{align*}
where $\varepsilon^* = 4^{-1} \lambda (1-\kappa - 2\varepsilon \varphi)/ (1 + \kappa)$, and $f$ is the same as in Lemma \ref{lem:prod_deviation}. 
Tidy up the algebra a bit, we can write 
$$
\delta^*_1  = \sum_{l=0}^1 2d \exp\left(-c_2 n_l \frac{\lambda^2 (1 - \kappa - 2 \varepsilon \varphi)^2}{16(1 + \kappa)^2}\right) + f(d, s, n_0, n_1, (\kappa+1)\varepsilon \varphi (1 - \varphi \varepsilon)^{-1})\,.
$$

To prove the 2nd conclusion, note that 
\begin{align}\label{eqn:betahatA}
\hat\beta_{A} = &\quad (\Sigma_{AA})^{-1} (\mu^1_A - \mu^0_A) + (C^{(n)}_{AA})^{-1}\{(\widehat{\mu}^1_A - \widehat{\mu}^0_A)-(\mu^1_A - \mu^0_A)\}\\
& + \{(C^{(n)}_{AA})^{-1} - (\Sigma_{AA})^{-1}\}(\mu^1_A - \mu^0_A) -  \lambda (C^{(n)}_{AA})^{-1} t_A /2\,.
\end{align}
Let $\xi = |\beta^*|_{\text{min}}/ (\Delta \varphi)$.  Write $\eta_1 = \|\Sigma_{AA} - C^{(n)}_{AA}\|_{\infty}$ and $\eta_3 = \|(C^{(n)}_{AA})^{-1} - \Sigma_{AA}^{-1}\|_{\infty}$.  Then for any $j\in A$, 
$$
|\hat \beta_j|\geq \xi \Delta \varphi - (\eta_3 + \varphi)\{\lambda /2 + \|(\widehat{\mu}^1_A - \widehat{\mu}^0_A)-(\mu^1_A - \mu^0_A)\|_{\infty}\} - \eta_3 \Delta\,.
$$
When $\eta_1\varphi < 1$, we have shown that $\eta_3 < \varphi^2 \eta_1 (1 - \eta_1 \varphi)^{-1}$ in Lemma \ref{lem:prod_deviation}. Therefore, 
$$
|\hat \beta_j|\geq \xi \Delta \varphi - (1 - \eta_1 \varphi)^{-1}\{\lambda \varphi / 2 +  \|(\widehat{\mu}^1_A - \widehat{\mu}^0_A)-(\mu^1_A - \mu^0_A)\|_{\infty}\varphi + \varphi^2\eta_1\Delta  \} \equiv L_1\,.
$$

Because $\|\beta^*\|_{\infty}\leq \Delta \varphi$, $\xi \leq 1$. Hence $\lambda \leq |\beta^*|_{
\text{min}}/(2\varphi)\leq 2 |\beta^*|_{\text{min}}/\{(3 + \xi)\varphi\}$. Under the events $\eta_1 \leq \varepsilon$ and $\|(\widehat{\mu}^1_A - \widehat{\mu}^0_A)-(\mu^1_A - \mu^0_A)\|_{\infty} \leq \varepsilon$, together with restriction on $\varepsilon$, we have $L_1 > 0$. Therefore, 
$$
\p(L_1 > 0) \geq 1 - \sum_{l=0}^1 2s\exp(-n_l \varepsilon^2 c_2)  - \sum_{l=0}^1 2s^2 \exp\left(- \frac{c_1 \varepsilon^2 n^2}{4 n_l s^2}  \right)\,.
$$

To prove the 3rd conclusion, equation \eqref{eqn:betahatA} and $\eta_1 \varphi < 1$ imply that 
$$
\|\hat\beta_A - \beta^*\|_{\infty}\leq (1 - \eta_1 \varphi)^{-1}\{\lambda \varphi / 2 +  \|(\widehat{\mu}^1_A - \widehat{\mu}^0_A)-(\mu^1_A - \mu^0_A)\|_{\infty}\varphi + \varphi^2\eta_1\Delta  \}\,.
$$
On the events $\{\eta_1 < \varepsilon\}$ and $\{\|(\widehat{\mu}^1_A - \widehat{\mu}^0_A)-(\mu^1_A - \mu^0_A)\|_{\infty}\leq \varepsilon\}$, and under restrictions for $\varepsilon$ and $\lambda$ in the assumption, we have $\|\hat\beta_A - \beta^*\|_{\infty}\leq 4 \varphi \lambda$.  Hence,
\begin{eqnarray*}
\p(\|\hat\beta_A - \beta^*\|_{\infty}\leq 4 \varphi \lambda)\geq 1 - \sum_{l=0}^1 2s\exp(-n_l \varepsilon^2 c_2) - \sum_{l=0}^1 2s^2 \exp\left(- \frac{c_1 \varepsilon^2 n^2}{4  s^2n_l}  \right)\,.
\end{eqnarray*}
\end{proof}

\begin{proof}[Proof of Proposition \ref{prop:conddetec}]
	For simplicity, we will derive the lower bound for one of the two probabilities in the definition:
\begin{eqnarray}\label{eqn:one_sided}
P_0\{C^{**}_\alpha\leq s^*(X)\leq C^{**}_\alpha+\delta |  X\in \mathcal{C}\}\geq (1-\delta_3)M_1 \delta\,, \text{ for } \delta \in (0, \delta^*)\,.
\end{eqnarray}
The lower bound for the other probability can be derived similarly.  

Recall that 
$
\mathcal{C}^0=\{X\in\mathbb{R}^d: \|X_A-\mu^{0}_A\|\leq c'_1 s^{1/2}(n_0\wedge n_1)^{1/4}\doteq L\}
$ (in the proof of Lemma \ref{lem:large_prob_set}). Let $V^0  = \Sigma^{-1/2}_{AA}(X_A-\mu^0_A)=\widetilde{X}_A-\Sigma^{-1/2}_{AA}\mu^0_A$, where $\widetilde{X}_A = \Sigma_{AA}^{-1/2} X_A$, then $V^0\sim\mathcal{N}(0,I_s)$ under $P_0$.
Define an event
\begin{eqnarray}\label{eqn:C0tilde}
\widetilde{\mathcal{C}}^0=\left\{X\in\mathbb{R}^d: \|V^0\|\leq \lambda_m L \doteq \widetilde{L}\right\}\,,
\end{eqnarray}
where $\lambda_m=\lambda_{\min}(\Sigma^{-1/2}_{AA})$ and $\lambda_{\min}(\cdot)$ denotes the minimum eigenvalue of a matrix. Since
$
\|V^0\|\geq \lambda_{\min}(\Sigma^{-1/2}_{AA})\|X_A-\mu^0_A\|=\lambda_m \|X_A-\mu^0_A\|
$, we have $\widetilde{\mathcal{C}}^0\subset\mathcal{C}^0$. Then inequality \eqref{eqn:one_sided} holds by invoking Lemma \ref{lemma-sub1} and Lemma \ref{lemma-sub2}.
\end{proof}

\begin{remark}\label{remark2}
Proof of Proposition \ref{prop:conddetec} indicates that the same conclusion  would hold for a general $s\in \mathbb{N}$, if the density of $(\mu^1_A-\mu^0_A)^{\top}\Sigma^{-1/2}_{AA}\widetilde{X}_A|\widetilde{\mathcal{C}}^0$ is bounded below on $(C^1-\delta^*, C^2+\delta^*)$ by some constant.
\end{remark}

\begin{proof}[Proof of Theorem \ref{thm:main}]
	The first inequality follows from Proposition \ref{prop} and the choice of $k^*$ in \eqref{eqn:kstar} of the main paper.  In the following, we prove the second inequality.

Let $G^* = \{ s^* \leq C^{**}_{\alpha} \}$ and $\widehat G = \{\hat s \leq \widehat C_{\alpha}\}$.  
The excess type II error can be decomposed as
\begin{equation}\label{eqn:typeII decomposition}
P_1(\widehat G) - P_1(G^*) = \int_{\widehat G \backslash G^*}|r - C_{\alpha}|dP_0 + \int_{G ^* \backslash \widehat G }|r - C_{\alpha}|dP_0 + C_{\alpha}\{ R_0(\phi^*_{\alpha}) - R_0(\hat \phi_{k^*}) \}\,.
\end{equation}
In the above decomposition, the third part can be bounded via Lemma \ref{lem:2}.  For the first two parts, let $$T = \|\hat s - s^* \|_{\infty, \mathcal{C}} := \max_{x\in \mathcal{C}}|\hat s(x) - s^*(x)|\,, \text{ and }$$  $$\Delta R_{0, \mathcal{C}} := |R_0(\phi^*_{\alpha} |\mathcal{C}) - R_0(\hat\phi_{k^*} |\mathcal{C})|= |P_0(s^*(X) > C^{**}_{\alpha}| X\in\mathcal{C}) - P_0(\hat s(X) >  \widehat C_{\alpha}|X\in\mathcal{C})|\,,$$ where $\mathcal{C}$ is defined in Lemma \ref{lem:large_prob_set}. 
A high probability bound for $\Delta R_{0, \mathcal{C}}$ was derived in Lemma \ref{lem:4}.


It follows from Lemma \ref{prop::R0} that if $n_0'\geq \max\{4/(\alpha\alpha_0), \delta_0^{-2}, (\delta'_0)^{-2}, (\frac{1}{10} M_1\delta^{*\uderbar{\gamma}})^{-4}\}$,
$$
\xi_{\alpha, \delta_0, n'_0}(\delta_0') \leq \frac{5}{2}(n_0')^{-1/4}\leq \frac{1}{4}M_1(\delta^*)^{\uderbar{\gamma}}\,.
$$
Because the lower bound in the detection condition should be smaller than $1$ to make sense,  $M_1 \delta^{*\uderbar{\gamma}} < 1$.  
This together with $n_0 \wedge n_1 \geq [- \log (M_1 \delta^{*\uderbar{\gamma}}/4)]^2$ implies that $\exp\{-(n_0\wedge n_1)^{1/2}\}\leq M_1 \delta^{*\uderbar{\gamma}}/4$.  

Let $\mathcal{E}_2 = \{R_{0, \mathcal{C}}\leq 2[\xi_{\alpha, \delta_0,n_0'}(\delta_0') + \exp\{- (n_0 \wedge n_1)^{1/2}\}]\}$.  On the event $\mathcal{E}_2$ we have 

$$
\left\{ \frac{R_{0, \mathcal{C}}}{M_1} \right\}^{1/\uderbar{\gamma}}\leq \left\{ \frac{2[\xi_{\alpha, \delta_0,n_0'}(\delta_0') + \exp\{- (n_0 \wedge n_1)^{1/2}\}]}{M_1}\right\}^{1/\uderbar{\gamma}} \leq \delta^*\,.
$$

To find the relation between $C^{**}_{\alpha}$ and $\widehat C_{\alpha}$, we invoke the detection condition as follows: 
\begin{align*}
&\quad \,\, P_0\left(s^*(X) \geq C^{**}_{\alpha} + (\Delta R_{0,\mathcal{C}} / M_1)^{1/\uderbar{\gamma}} | X \in \mathcal{C} \right)\\
&=  R_0(\phi^*_{\alpha} |\mathcal{C}) - P_0(C^{**}_{\alpha} < s^*(X) < C^{**}_{\alpha} + (\Delta R_{0,\mathcal{C}} / M_1)^{1/\uderbar{\gamma}} | X\in \mathcal{C} ) \\
&\leq  R_0(\phi^*_{\alpha} |\mathcal{C}) - \Delta R_{0,\mathcal{C}} \text{ }\text{ }\text{ }\text{  (by detection condition) }\\
&\leq   R_0(\hat \phi_{k^*}|\mathcal{C}) = P_0(\hat s(X) > \widehat C_{\alpha} | X\in \mathcal{C})\\
&\leq  P_0(s^*(X) > \widehat C_{\alpha} - T | X\in \mathcal{C})\,.
\end{align*}
This implies that $C^{**}_{\alpha} + (\Delta R_{0,\mathcal{C}} / M_1)^{1/\uderbar{\gamma}} \geq \widehat C_{\alpha} - T$, which further implies that
$$
\widehat C_{\alpha}\leq C^{**}_{\alpha} + (\Delta R_{0,\mathcal{C}} / M_1)^{1/\uderbar{\gamma}} + T\,.
$$

Note that 
\begin{align*}
&\quad\,\, \mathcal{C}\cap (\widehat G \backslash G^*) \\
&= \mathcal{C}\cap \{ s^* > C^{**}_{\alpha}, \hat s \leq \widehat C_{\alpha} \} \\
&= \mathcal{C}\cap\{s^* >  C^{**}_{\alpha}, \hat s \leq  C^{**}_{\alpha}+(\Delta R_{0,\mathcal{C}} / M_1)^{1/\uderbar{\gamma}} + T \} \cap \{\hat s \leq \widehat C_{\alpha} \}\\
&\subset  \mathcal{C}\cap \{C^{**}_{\alpha} + (\Delta R_{0,\mathcal{C}} / M_1)^{1/\uderbar{\gamma}} + 2T \geq s^* \geq C^{**}_{\alpha}, \hat s \leq C^{**}_{\alpha}+(\Delta R_{0,\mathcal{C}} / M_1)^{1/\uderbar{\gamma}} + T \}\cap \{\hat s \leq \widehat C_{\alpha} \}\\
&\subset  \mathcal{C}\cap\{C^{**}_{\alpha} + (\Delta R_{0,\mathcal{C}} / M_1)^{1/\uderbar{\gamma}} + 2T \geq s^* \geq C^{**}_{\alpha}\}\,.
\end{align*}

%

\noindent We decompose as follows
$$
\int_{(\widehat G \backslash G^*)}|r - C_{\alpha}|d P_0 = \int_{(\widehat G \backslash G^*) \cap \mathcal{C}}|r - C_{\alpha}|d P_0 + \int_{(\widehat G \backslash G^*) \cap \mathcal{C}^c}|r - C_{\alpha}|d P_0 =: (\text{I}) + (\text{II})\,.
$$
To bound (I), recall that  
$$
r(x)  = \frac{f_1(x)}{f_0(x)} =  \exp\left(s^*(x) - \mu_a ^{\top} \Sigma^{-1}\mu_d\right)\,,
$$
and that, $r(x) > C_{\alpha}$ is equivalent to $s^*(x) > C^{**}_{\alpha} = \log C_{\alpha} + \mu_a^{\top} \Sigma^{-1}\mu_d$.  By the mean value theorem, we have
$$
|r(x) - C_{\alpha}| = e^{-\mu_a^{\top} \Sigma^{-1}\mu_d}|e^{s^*(x)} - e^{C^{**}_{\alpha}}| = e^{-\mu_a^{\top} \Sigma^{-1}\mu_d}\cdot e^{z'} |s^*(x) - C^{**}_{\alpha}|\,,
$$
where 
$z'$ is some quantity between $s^*(x)$ and $C^{**}_{\alpha}$.  
%
Denote by $\mathcal{C}_1 = \{x:  C^{**}_{\alpha} + (\Delta R_{0,\mathcal{C}} / M_1)^{1/\uderbar{\gamma}} + 2T \geq s^*(x) \geq C^{**}_{\alpha}\}$.
Restricting to  
$\mathcal{C}\cap \mathcal{C}_1$, we have 
$$
z'\leq C^{**}_{\alpha} + (\Delta R_{0,\mathcal{C}} / M_1)^{1/\uderbar{\gamma}} + 2T\,.
$$
This together with $\mathcal{C}\cap (\widehat G \backslash G^*) \subset \mathcal{C}\cap \mathcal{C}_1$  implies that 
\begin{align*}
(\text{I})&\leq  \int_{\mathcal{C}\cap \mathcal{C}_1}|r - C_{\alpha}|dP_0\\
& =  \int_{\mathcal{C}\cap \mathcal{C}_1}\exp\{z' - \mu_a^{\top} \Sigma^{-1} \mu_d\}|s^*(x) - C^{**}_{\alpha}|dP_0\\
&\leq  \int_{\mathcal{C}\cap \mathcal{C}_1} \exp\left\{C^{**}_{\alpha} + (\Delta R_{0,\mathcal{C}} / M_1)^{1/\uderbar{\gamma}} + 2T - \mu_a^{\top} \Sigma^{-1} \mu_d\right\}|s^*(x) - C^{**}_{\alpha}|dP_0\,.
\end{align*}

Since $C_{\alpha}$ and $\mu_a^{\top} \Sigma^{-1} \mu_d$ are assumed to be bounded, $C^{**}_{\alpha} = \log C_{\alpha} + \mu_a^{\top} \Sigma^{-1}\mu_d$ is also bounded.    Let $\mathcal{E}_2 = \{R_{0, \mathcal{C}}\leq 2[\xi_{\alpha, \delta_0,n_0'}(\delta_0') + \exp\{- (n_0 \wedge n_1)^{1/2}\}]\}$.  By Lemma \ref{lem:4}, $\p(\mathcal{E}_2)\geq 1 - \delta_0 - \delta_0'$.  Let $\mathcal{E}_3 = \{T\leq 4c_1 \varphi \lambda s (n_0\wedge n_1)^{1/4}\}$. By Lemma \ref{lem:large_prob_set}, $\p(\mathcal{E}_3)\geq 1- \delta_1 - \delta_2$.  Restricting to the event $\mathcal{E}_2 \cap \mathcal{E}_3$, $R_{0, \mathcal{C}}$ and $T$ are bounded.  Therefore on the event  $\mathcal{E}_2 \cap \mathcal{E}_3$, there exists a positive constant $c'$ such that 
\begin{align*}
(\text{I})&\leq c' \int_{\mathcal{C}\cap \mathcal{C}_1}|s^*(x) - C^{**}_{\alpha}|dP_0\\
&\leq  c'\left((\Delta R_{0,\mathcal{C}} / M_1)^{1/\uderbar{\gamma}} + 2T\right)P_0(\mathcal{C}\cap \mathcal{C}_1)\,.\\
\end{align*}
Note that by the margin assumption (we know $\bar\gamma = 1$, but we choose to reserve the explicit dependency of $\bar\gamma$ by not substituting the numerical value), 
\begin{align*}
\quad\,\, P_0(\mathcal{C}\cap \mathcal{C}_1)
&=  P_0(C^{**}_{\alpha} + (\Delta R_{0,\mathcal{C}} / M_1)^{1/\uderbar{\gamma}} + 2T \geq s^* \geq C^{**}_{\alpha}, \mathcal{C}  )\\
&\leq  P_0(C^{**}_{\alpha} + (\Delta R_{0,\mathcal{C}} / M_1)^{1/\uderbar{\gamma}} + 2T \geq s^* \geq C^{**}_{\alpha} |\mathcal{C}  )\\
&\leq  M_0 \left((\Delta R_{0,\mathcal{C}} / M_1)^{1/\uderbar{\gamma}} + 2T\right)^{\bar{\gamma}}\,.
\end{align*}
Therefore,
$$
(\text{I})\leq c' M_0 \left((\Delta R_{0,\mathcal{C}} / M_1)^{1/\uderbar{\gamma}} + 2T\right)^{1+\bar{\gamma}}\,.
$$
Regarding (II), by Lemma \ref{lem:large_prob_set} we have 
\begin{align*}
	(\text{II})&\leq \int_{\mathcal{C}^c}|r - C_{\alpha}|dP_0\leq \int_{\mathcal{C}^c} rdP_0 + C_{\alpha}\int_{\mathcal{C}^c}dP_0= P_1(\mathcal{C}^c) + C_{\alpha} P_0(\mathcal{C}^c)\\
	&\leq (1 + C_{\alpha}) \exp\{- (n_0 \wedge n_1)^{1/2}\}\,.
\end{align*}

Therefore, 
$$
\int_{(\widehat G \backslash G^*)}|r - C_{\alpha}|d P_0 \leq c' M_0 \left((\Delta R_{0,\mathcal{C}} / M_1\right)^{1/\uderbar{\gamma}} + 2T)^{1+\bar{\gamma}} + (1 + C_{\alpha}) \exp\{- (n_0 \wedge n_1)^{1/2}\}\,.
$$

To bound $\int_{(G^* \backslash \widehat G )}|r - C_{\alpha}|d P_0$, we decompose 
$$
\int_{(G^* \backslash \widehat G )}|r - C_{\alpha}|d P_0 = \int_{(G^* \backslash \widehat G ) \cap \mathcal{C}}|r - C_{\alpha}|d P_0 + \int_{(G^* \backslash \widehat G )\cap \mathcal{C}^c}|r - C_{\alpha}|d P_0 =: (\text{I}') + (\text{II}'')\,.
$$
To bound (I$'$), we invoke both the margin assumption and the detection condition, and we need to define a new a new quantity $\bar{\Delta} R_{0, \mathcal{C}} := P_0(s^*(X) > C^{**}_{\alpha}| X\in\mathcal{C}) - P_0(\hat s(X) >  \widehat C_{\alpha}|X\in\mathcal{C})$.   When $\bar{\Delta} R_{0, \mathcal{C}} \geq 0$, we have
\begin{align*}
&\quad\,\, P_0\left(s^*(X) \geq C^{**}_{\alpha} + (\bar{\Delta} R_{0,\mathcal{C}} / M_0)^{1/\bar{\gamma}} | X \in \mathcal{C} \right)\\
&=  P_0(s^*(X) > C^{**}_{\alpha} | X\in \mathcal{C}) - P_0(C^{**}_{\alpha} < s^*(X) <  C^{**}_{\alpha} + (\bar{\Delta} R_{0,\mathcal{C}} / M_0)^{1/\bar{\gamma}} | X\in \mathcal{C} ) \\
&\geq  P_0(s^*(X) > C^{**}_{\alpha} | X\in \mathcal{C})  - \bar{\Delta} R_{0,\mathcal{C}} \text{ }\text{ }\text{ }\text{  (by margin assumption) }\\
&=   P_0(\hat s(X) >  \widehat C_{\alpha}|X\in\mathcal{C})\\
&\geq  P_0(s^*(X) > \widehat C_{\alpha} + T | X\in \mathcal{C})\,.
\end{align*}
So when $\bar{\Delta} R_{0, \mathcal{C}} \geq 0$, $\widehat C_{\alpha} \geq C^{**}_{\alpha} - (\bar{\Delta} R_{0, \mathcal{C}}/M_0)^{1/\bar{\gamma}} - T$.  On the other hand, when $\bar{\Delta} R_{0, \mathcal{C}} < 0$, 
\begin{align*}
&\quad \,\, P_0\left(s^*(X) \geq C^{**}_{\alpha} - (- \bar{\Delta} R_{0,\mathcal{C}} / M_1)^{1/\uderbar{\gamma}} | X \in \mathcal{C} \right)\\
&=  P_0(s^*(X) > C^{**}_{\alpha} | X\in \mathcal{C}) + P_0(C^{**}_{\alpha} \geq s^*(X) \geq  C^{**}_{\alpha} - (- \bar{\Delta} R_{0,\mathcal{C}} / M_1)^{1/\uderbar{\gamma}} | X\in \mathcal{C} ) \\
&\geq  P_0(s^*(X) > C^{**}_{\alpha} | X\in \mathcal{C})  + |\bar{\Delta} R_{0,\mathcal{C}} |\text{ }\text{ }\text{ }\text{  (by detection condition) }\\
&=   P_0(\hat s(X) >  \widehat C_{\alpha}|X\in\mathcal{C})\\
&\geq  P_0(s^*(X) > \widehat C_{\alpha} + T | X\in \mathcal{C})\,.
\end{align*}
So when $\bar{\Delta} R_{0, \mathcal{C}} < 0$, $\widehat C_{\alpha} \geq C^{**}_{\alpha} - (- \bar{\Delta} R_{0,\mathcal{C}} / M_1)^{1/\uderbar{\gamma}} - T$.  Note that $\Delta R_{0, \mathcal{C}} =| \bar{\Delta} R_{0, \mathcal{C}}|$. Therefore we have in both cases, 
$$
\widehat C_{\alpha} \geq C^{**}_{\alpha} - (\Delta R_{0, \mathcal{C}}/M_0)^{1/\bar{\gamma}} \wedge (\Delta R_{0, \mathcal{C}}/M_1)^{1/\uderbar{\gamma}} - T\,.
$$

Using the above inequality, we have  
\begin{align*}
&\quad\,\, \mathcal{C}\cap (G^*  \backslash \widehat G) \\
&= \mathcal{C}\cap \{ s^* \leq  C^{**}_{\alpha}, \hat s > \widehat C_{\alpha} \} \\
&= \mathcal{C}\cap\{s^* \leq  C^{**}_{\alpha}, \hat s \geq  C^{**}_{\alpha} - (\Delta R_{0, \mathcal{C}}/M_0)^{1/\bar{\gamma}} \wedge (\Delta R_{0, \mathcal{C}}/M_1)^{1/\uderbar{\gamma}} - T \} \cap \{\hat s > \widehat C_{\alpha} \}\\
&\subset  \mathcal{C}\cap \{C^{**}_{\alpha} - (\Delta R_{0, \mathcal{C}}/M_0)^{1/\bar{\gamma}} \wedge (\Delta R_{0, \mathcal{C}}/M_1)^{1/\uderbar{\gamma}} - 2T \leq s^* \leq C^{**}_{\alpha}\}\cap \{\hat s \geq  \widehat C_{\alpha} \}\\
&\subset  \mathcal{C}\cap\{C^{**}_{\alpha} - (\Delta R_{0, \mathcal{C}}/M_0)^{1/\bar{\gamma}} \wedge (\Delta R_{0, \mathcal{C}}/M_1)^{1/\uderbar{\gamma}} - 2T \leq s^*  \leq  C^{**}_{\alpha}\}\,.
\end{align*}
Denote by $\mathcal{C}_2 = \{x: C^{**}_{\alpha} - (\Delta R_{0, \mathcal{C}}/M_0)^{1/\bar{\gamma}} \wedge (\Delta R_{0, \mathcal{C}}/M_1)^{1/\uderbar{\gamma}} - 2T \leq s^*(x)  \leq C^{**}_{\alpha}\}$. Then we just showed that $\mathcal{C}\cap (G^*  \backslash \widehat G)  \subset \mathcal{C}\cap \mathcal{C}_2$.    
Recall that 
$$
|r(x) - C_{\alpha}| = e^{-\mu_a^{\top} \Sigma^{-1}\mu_d}|e^{s^*(x)} - e^{C^{**}_{\alpha}}| = e^{-\mu_a^{\top} \Sigma^{-1}\mu_d}\cdot e^{z'} |s^*(x) - C^{**}_{\alpha}|\,,
$$
where 
$z'$ is some quantity between $s^*(x)$ and $C^{**}_{\alpha}$.  
%
Restricting to  
$\mathcal{C}\cap \mathcal{C}_2$, we have 
$$
z'\leq C^{**}_{\alpha}\,.
$$
This together with  $\mathcal{C}\cap (G^*  \backslash \widehat G)  \subset \mathcal{C}\cap \mathcal{C}_2$  implies that 
\begin{align*}
(\text{I}')&\leq  \int_{\mathcal{C}\cap \mathcal{C}_2}|r - C_{\alpha}|dP_0\\
& =  \int_{\mathcal{C}\cap \mathcal{C}_2}\exp\{z' - \mu_a \Sigma^{-1} \mu_d\}|s^*(x) - C^{**}_{\alpha}|dP_0\\
&\leq  \int_{\mathcal{C}\cap \mathcal{C}_2} c''|s^*(x) - C^{**}_{\alpha}|dP_0\\
&\leq  c'' \left((\Delta R_{0, \mathcal{C}}/M_0)^{1/\bar{\gamma}} \wedge (\Delta R_{0, \mathcal{C}}/M_1)^{1/\uderbar{\gamma}} + 2T \right) P_0(\mathcal{C}\cap \mathcal{C}_2)\,.
\end{align*}
Note that by the margin assumption, 
\begin{align*}
&\quad P_0(\mathcal{C}\cap \mathcal{C}_2)\\
\leq & \quad P_0(C^{**}_{\alpha} - (\Delta R_{0, \mathcal{C}}/M_0)^{1/\bar{\gamma}} \wedge (\Delta R_{0, \mathcal{C}}/M_1)^{1/\uderbar{\gamma}} - 2T 
\leq  s^*(X)  \leq C^{**}_{\alpha})\\
\leq & \quad  M_0\left((\Delta R_{0, \mathcal{C}}/M_0)^{1/\bar{\gamma}} \wedge (\Delta R_{0, \mathcal{C}}/M_1)^{1/\uderbar{\gamma}} + 2T\right)^{\bar\gamma}\,.
\end{align*}
Therefore,
$$
(\text{I}')\leq c'' M_0 \left((\Delta R_{0, \mathcal{C}}/M_0)^{1/\bar{\gamma}} \wedge (\Delta R_{0, \mathcal{C}}/M_1)^{1/\uderbar{\gamma}} + 2T\right)^{1+\bar\gamma}\,.
$$
Regarding (II$'$), by Lemma \ref{lem:large_prob_set} we have 
$$
(\text{II}')\leq \int_{\mathcal{C}^c} rdP_0 + C_{\alpha}\int_{\mathcal{C}^c}dP_0 = P_1(\mathcal{C}^c) + C_{\alpha} P_0(\mathcal{C}^c)\leq (1 + C_{\alpha}) \exp\{- (n_0 \wedge n_1)^{1/2}\}\,.
$$

Therefore, by the excess type II error decomposition equation \eqref{eqn:typeII decomposition},
\begin{align*}
P_1(\widehat G) - P_1(G^*) = (\text{I}) + (\text{II}) + (\text{I}') + (\text{II}') + C_{\alpha}\{ R_0(\phi^*_{\alpha}) - R_0(\hat \phi_{k^*}) \}\,.
\end{align*}
Using the upper bounds for (I), (II), (I$'$) and (II$'$) and Lemma \ref{lem:2}, With probability at least $1-\delta_0 - \delta_0' - \delta_1 -\delta_2$, we have

\begin{align*}
P_1(\widehat G) - P_1(G^*)\leq &\quad  c' M_0 \left((\Delta R_{0,\mathcal{C}} / M_1)^{1/\uderbar{\gamma}} + 2T\right)^{1+\bar{\gamma}} \\
&+ c'' M_0 \left((\Delta R_{0, \mathcal{C}}/M_0)^{1/\bar{\gamma}} \wedge (\Delta R_{0, \mathcal{C}}/M_1)^{1/\uderbar{\gamma}} + 2T\right)^{1+\bar\gamma} \\
&+ 2(1 + C_{\alpha}) \exp\{- (n_0 \wedge n_1)^{1/2}\} + C_{\alpha}\cdot \xi_{\alpha, \delta_0,n_0'}(\delta_0') \\
\leq &\quad   \bar{c}_1' \Delta R_{0, \mathcal{C}}^{(1+\bar{\gamma})/\uderbar{\gamma}} + \bar{c}_2' T^{1 + \bar\gamma} + \bar{c}_3' \exp\{-(n_0 \wedge n_1)^{1/2} \}+ C_{\alpha}\cdot \xi_{\alpha, \delta_0,n_0'}(\delta_0') \,,
\end{align*}
for some positive constants $\bar{c}_1'$, $\bar{c}_2'$ and $\bar{c}_3'$. In the last inequality of the above chain, we used $\uderbar{\gamma}\geq \bar{\gamma}$.  
Note that on the event $\mathcal{E}_2 \cap \mathcal{E}_3$,  Lemma \ref{lem:4} guarantees  $\Delta R_{0, \mathcal{C}}\leq 2[\xi_{\alpha, \delta_0,n_0'}(\delta_0') + \exp\{- (n_0 \wedge n_1)^{1/2}\}]$. Lemma \ref{lem:large_prob_set} guarantees that $T \leq 4c_1' \varphi \lambda s (n_0\wedge n_1)^{1/4}$.  Therefore, 
\begin{align*}
P_1(\widehat G) - P_0(G^*)
\leq &\quad \bar{c}_1'' \xi_{\alpha, \delta_0,n_0'}(\delta_0')^{(1+\bar{\gamma})/\uderbar{\gamma}\wedge 1} + \bar{c}_2'' [4c_1' \varphi \lambda s (n_0\wedge n_1)^{1/4}]^{1 + \bar\gamma} \\
&+ \bar{c}_3'' [\exp\{-(n_0 \wedge n_1)^{\frac{1}{2}}\}]^{(1+\bar{\gamma})/\uderbar{\gamma}} \,.	
\end{align*}

Lemma \ref{prop::R0} guarantees that $
\xi_{\alpha, \delta_0,n_0'}(\delta_0') \leq  ({5}/{2}){(n_0')^{-1/4}}.
$
Then the excess type II error is bounded by
\begin{eqnarray*}
P_1(\widehat G) - P_0(G^*)
&\leq& \bar{c}_1 (n_0')^{- (\frac{1}{4}\wedge \frac{1+\bar\gamma}{4\uderbar{\gamma}})} + \bar{c}_2  (\lambda s)^{1 + \bar\gamma} (n_0\wedge n_1)^{\frac{1 + \bar\gamma}{4}} \\
&+& \bar{c}_3 \exp\left\{-(n_0\wedge n_1)^{\frac{1}{2}}(\frac{1+\bar\gamma}{\uderbar{\gamma}}\wedge 1)\right\}.	
\end{eqnarray*}
\end{proof}
\bibliography{reference}

\end{document}